\DeclareMathOperator*{\argmax}{arg\,max}
\newtheorem{theorem}{Theorem}[section]
\newtheorem{corollary}[theorem]{Corollary}
\newtheorem{proposition}[theorem]{Proposition}
\theoremstyle{definition}
\newtheorem{definition}{Definition}[section]
\newtheorem{assumption}[definition]{Assumption}
\theoremstyle{remark}
\newtheorem{remark}{Remark}[section] 
\newtheorem{example}[remark]{Example}
\numberwithin{equation}{section}
\begin{document}

\title{Comparison of Tax and Cap-and-Trade Carbon Pricing Schemes}

\author{St\'ephane Cr\'epey}
\address{Laboratoire de Probabilités, Statistique et Modélisation (LPSM), Sorbonne Université et Université Paris Cit\'e, CNRS UMR 8001.}
\email{stephane.crepey@lpsm.paris}
\urladdr{https://perso.lpsm.paris/crepey}

\author{Samuel Drapeau}
\address{School of Mathematical Sciences and Shanghai Advanced Institute for Finance, Shanghai Jiao Tong University, Shanghai, China}
\email{sdrapeau@saif.sjtu.edu.cn}
\urladdr{http://www.samuel-drapeau.info}

\author{Mekonnen Tadese}
\address{Laboratoire de Probabilités, Statistique et Modélisation (LPSM), Sorbonne Université and 
Mathematics Department, Woldia University, Woldia, Ethiopia}
\email{demeke@lpsm.paris}

\date{\today}

\thanks{
	This research has benefited from the support of Chair Capital Markets Tomorrow: Modeling and Computational Issues under the aegis of the Institute Europlace de Finance,  a joint initiative of Laboratoire de Probabilit\'es, Statistique et Modélisation (LPSM) / Université Paris Cit\'e and  Cr\'edit Agricole CIB,  and of Chair Stress Test, Risk Management and Financial Steering, led by the French Ecole Polytechnique and its foundation and sponsored by BNP Paribas}

\begin{abstract}
	Carbon pricing has become a central pillar of modern climate policy, with carbon taxes and emissions trading systems (ETS) serving as the two dominant approaches.
	Although economic theory suggests these instruments are equivalent under idealized assumptions, their performance diverges in practice due to real-world market imperfections.
	A particularly less explored dimension of this divergence concerns the role of financial intermediaries in emissions trading markets.
	This paper develops a unified framework to compare the economic and environmental performance of tax- and market-based schemes, explicitly incorporating the involvement of financial intermediaries.
	By calibrating both instruments to deliver identical aggregate emission reduction targets, we assess their economic performance across alternative market structures.
	Our results suggest that although the two schemes are equivalent under perfect competition, the presence of intermediaries in ETS reduces both regulatory wealth and the aggregate wealth of economic agents relative to carbon taxation. 
	These effects stem from intermediaries' influence on price formation and their appropriation of part of the revenue stream.
	The findings underscore the importance of accounting for intermediaries' behavior in the design of carbon markets and highlight the need for further empirical research on the evolving institutional structure of emissions trading systems.

	\vspace{5pt}
	{ \textbf{Keywords:} Carbon pricing, Carbon tax,  Cap-and-trade system, Financial intermediaries. }
	\vspace{5pt}

	{ \textbf{2020 Mathematics Subject Classification:} 91B76, 91B51, 93E20, 49K40. }
\end{abstract}

\maketitle

\section{Introduction}
Carbon pricing has emerged as a cornerstone of global climate policy, primarily implemented through taxes and emissions trading systems (ETS).
These mechanisms aim to internalize the social cost of greenhouse gas emissions by assigning them a monetary value \citep{Worldbank2025}.
Carbon taxes establish a fixed price per unit of emissions, offering price certainty and enabling companies to determine their own abatement levels in response to the price signal. 
In contrast, ETS---often referred to as cap-and-trade or market scheme---sets a fixed emissions cap and allows the market to determine the price of emission certificates through trading, thereby ensuring environmental certainty \citep{carmona2010market,hitzemann2018,aid2023,Worldbank2025}.
As of 2025, a total of 80 jurisdictions worldwide have implemented these mechanisms, including 43 carbon taxes (covering 5\% of global emissions) and 37 ETS (covering 23 \%), together accounting for roughly 28\% of global greenhouse gas emissions \citep{Worldbank2025, ICAP2025}.

Although carbon taxes and emissions trading systems have been widely adopted, their comparative effectiveness remains contested.
A foundational result by \citep{Weitzman1974} demonstrates that under idealized conditions of complete information and no uncertainty, price-based instruments (taxes) and quantity-based instruments (market schemes) are equivalent.
Similarly, \citep{anand2020} shows that these two instruments remain equivalent under oligopolistic competition when emissions are certain.
In practice, market frictions such as uncertainty, transaction costs, and incomplete information weaken this equivalence \citep{Weitzman1974,islegen2016}.
Empirical and theoretical evidence is mixed: several studies report superior performance of carbon taxes in specific contexts \citep{MacKay2015, Ahmad2024, Fan2023}.
By contrast, market schemes enable companies with low marginal abatement costs to reduce emissions and sell surplus allowances to companies facing higher costs, making them widely promoted as cost-effective instruments \citep{cludius2022, montgomery1972, Xu2023}.
These perceived advantages often receive prominent media coverage \citep{Kosnik2018}.
Moreover, political considerations have increasingly favored emissions market over carbon tax  in recent years \citep{IMF2022}.

A distinctive characteristic of market schemes is the active engagement of diverse financial actors alongside regulated compliance entities.
The majority of these participants are financial intermediaries, such as banks, credit institutions, brokers, and investment companies \citep{Lucia2015, Roques2022, cludius2022, Quemin2023, isah2024,EEX2025}, who purchase emission certificates at auctions or in secondary markets and subsequently resell them to compliance clients.
For example, in 2023, non-compliant financial actors dominated participation in the EU ETS, acquiring approximately 80\% of auctioned allowances and accounting for 56\% of secondary-market trading volume \citep{ESMA2024}.
Recent research underscores that the activities of such financial actors significantly shape emission certificates price formation \citep{heindl2012, Lucia2015, Johanna2020, cludius2022, Quemin2023, isah2024}, whereas earlier studies primarily emphasized on the fundamentals such as the emissions cap, penalty rates, and green technology costs \citep{carmona2009, carmona2010market, hitzemann2018, aid2023}.

The role of financial intermediaries in carbon markets remains a topic of ongoing debate.
On the one hand, financial intermediaries act as market makers, enabling smaller or less experienced compliance entities to access the market \citep{ESMA2024, cludius2022, Johanna2020, Lucia2015}.
On the other hand, critics argue that intermediaries may amplify price volatility, and their dominance in auctions raises concerns about potential market manipulation \citep{heindl2012, Quemin2023, isah2024, cludius2022}.
Although the expanding role of intermediaries is well documented, comparative theoretical frameworks quantifying their net economic effects remain underdeveloped.
In particular, existing analyses provide limited insight into how wealth-maximizing intermediary behavior redistributes wealth among companies, regulator, and intermediaries, or how these outcomes diverge from those under a tax scheme.

This paper seeks to compare tax and market schemes by explicitly accounting for the role of financial intermediaries in the latter.
Specifically, it examines how these two approaches differ in the distribution of emissions among compliance companies and in their broader economic impacts (e.g.\ GDP contributions) on regulated companies, financial intermediaries, and regulator.

We develop a one-period equilibrium model to compare tax and market schemes, with particular emphasis on the role of financial intermediaries in the latter.
Compliance companies produce goods and emit carbon, modeled first as deterministic and later as uncertain, with all emissions priced (no free certificates).
Under the tax scheme, companies reduce emissions  via green technologies or pay a tax for each unit of carbon they emit.
Under the market scheme, certificates are auctioned; companies purchase either directly or through intermediaries at company-specific prices, facing penalties for excess emissions. To isolate the impacts of intermediaries under the market scheme, we analyze two extremes: all companies buy at auction or all use intermediaries.
To make the schemes comparable, we calibrate the tax and penalty to produce identical aggregate emissions.

Our main findings are as follows.
First, under perfect competition---where all companies purchase emission certificates directly at auctions organized by the regulatory authority in the market scheme---tax and market mechanisms produce identical outcomes.
This equivalence holds irrespective of whether emission is modeled as deterministic or stochastic.
In particular, abatement levels and compliance costs coincide across the two schemes once the tax rate and penalty are calibrated to match the emissions cap.
Thus, in the absence of intermediaries, the distinction between a tax and a market scheme becomes immaterial.

Second, the introduction of financial intermediaries into the emissions market alters the distribution of economic value, reducing the GDP relative to the tax regime.
When companies rely entirely on these intermediaries to access the allowance market---instead of participating directly---the combined economic output of companies, regulator, and intermediaries under the market scheme is strictly lower than  under the tax scheme.
Formally, when emission is deterministic, we establish that
\begin{equation*}
	\underbrace{ \mathcal{W}^{\mathrm{tax}}_{\mathrm{C}} + \mathcal{W}^{\mathrm{tax}}_{\mathrm{R}} }_{\mathrm{GDP^{\mathrm{tax} }}}  
	   			 > 
	\underbrace{  \mathcal{W}^{\mathrm{mar}}_{\mathrm{C}} + \mathcal{W}^{\mathrm{mar}}_{\mathrm{R}} + \mathcal{W}_{\mathrm{F}}^{\mathrm{mar}} }_{\mathrm{GDP^{\mathrm{mar} }}},
\end{equation*}
where $\mathcal{W}^{\bm{\mathrm{\cdot}}}_{\mathrm{C}}$, $\mathcal{W}^{\bm{\mathrm{\cdot}}}_{\mathrm{R}}$ and $\mathcal{W}_{\mathrm{F}}^{\bm{\mathrm{\cdot}}}$ denote the aggregate wealth of the companies, of the regulatory authority, and of the financial intermediaries under the market ($\mathrm{mar}$) or tax scheme ($\mathrm{tax}$).
Moreover, the aggregate wealth of companies is lower under the market scheme:
\begin{equation*}
	\mathcal{W}^{\mathrm{tax}}_{\mathrm{C}} \geq \mathcal{W}^{\mathrm{mar}}_{\mathrm{C}}.
\end{equation*}
This decline is driven by the market power of intermediaries, which allows them to extract rents.
However, the wealth of an individual company under the market scheme may be higher or lower than its wealth under the tax scheme, depending on specific factors such as its green investment costs.
In the case of random emissions, comparable results are seen to hold numerically.

Third, in the presence of intermediaries, the wealth of the regulatory authority under the market scheme is strictly lower than under the tax scheme.
For deterministic emissions, we show
\begin{equation*}
	\mathcal{W}^{\mathrm{tax}}_{\mathrm{R}} 
			>
	\mathcal{W}^{\mathrm{mar}}_{\mathrm{R}} = \mathcal{W}^{\mathrm{tax}}_{\mathrm{R}} - \frac{A^2}{\varrho},
\end{equation*}
where $A$ denotes the number of certificates auctioned and $\varrho$ is an aggregate factor depending on the companies environmental costs.
Again, in the case of random emissions, comparable results are seen to hold numerically.
Intermediaries depress spot market prices and reallocate a portion of auction wealth to financial actors, thereby altering the overall distribution of wealth.
The above finding aligns with recent evidence \citep[pp. 40--41]{Worldbank2025}, which reports a 10\% decline in regulator wealth under market schemes in 2024 compared with 2023, whereas wealth under tax schemes increased by 10\% over the same period.
The report attributes this decline primarily to lower certificate prices.
Our analysis suggests that increased intermediary participation may be one contributing factor to this divergence, though further empirical investigation is needed to confirm this mechanism.

The remainder of the paper is structured as follows.
Section~\ref{sec:model} introduces the models for tax and market schemes.
Section~\ref{sec:compare} presents a theoretical comparison of economic outcomes under the tax and market schemes.
Section~\ref{sec:numeric} provides numerical illustrations.
Section~\ref{sec:princing_random} extends the analysis to settings with emissions uncertainty. 
Section~\ref{sec:conclusion} concludes.
Proofs and supplementary material are provided in Sections~\ref{app:model}--\ref{app:random_case}.

Equilibrium quantities are denoted by \textbf{bold letters} throughout the paper, e.g.\ $q_i$ denotes a quantity of goods produced by a company $i$ whereas 
 $\bm{q}_i$ denotes its optimal production.
 
\section{Model} \label{sec:model}
The model for emissions regulation involves different agents:
\begin{itemize}
	\item \textbf{Companies}: They produce goods for wealth, emitting carbon as a by-product.
	\item \textbf{Regulatory authority}: It is responsible for setting the overall emissions standard as well as the policy scheme.
	      They either collect a tax on carbon emission or issue carbon certificates and levy a penalty on excess emissions.
	\item \textbf{Financial intermediaries}: Under a market scheme, they sell certificates to their clients while acquiring them from the regulator on an auction basis.
\end{itemize}

We compare two distinct regulatory schemes for managing carbon emissions:
\begin{itemize}
	\item \textbf{Tax}: Companies   are charged a tax for each unit of carbon they emit, creating a direct cost for pollution and an incentive to reduce emissions.
	\item \textbf{Market}: This scheme involves the establishment of a market for emission certificates.
	      Companies can hold a certain number of certificates to cover their emissions.
	      If a company's emissions exceed the number of certificates it owns, it pays a penalty for each excess unit of carbon emitted.
	      Companies   can purchase these certificates either directly on the auction organized by the regulator or through intermediaries.
\end{itemize}

\subsection{Companies}
A generic company produces $q$ units of a good for a raw wealth (ignoring any carbon-related factors)
\begin{equation*}
	\pi(q) = \pi^0 q -\frac{\pi^1}{2}q^2,
\end{equation*}
where $\pi^0$ and $\pi^1$ are positive company-specific constants.
In view of this work, we assume that the num\'eraire of a production unit is given in terms of the amount of carbon it generates.
Hence, for a given production level $q$, the total carbon emissions are exactly $q$.
By investing in green technologies, the company can reduce its total emissions $q$ to $e^{-a}q$, where $a > 0$ is the company's green technology effort.
The cost of this green investment as a function of the emission reductions volume $(1 - e^{-a} ) q$ is given by
\begin{equation*}
	c(q, a)  = \frac{\gamma}{2} \left[ (1-e^{-a} ) q \right]^2,
\end{equation*}
where $\gamma>0$ is a company-specific green cost factor.

Under the tax scheme, a tax $\tau$ is levied per unit of emissions, resulting in a net wealth
\begin{equation}\label{eq:tax_wealth}
	\pi(q) - c(q,a) - \tau e^{-a} q.
\end{equation}
Under the market scheme, the company can purchase an amount $\delta$ of emission certificates at a unit price $P$.
For each unit of emitted carbon exceeding the amount of certificates $\delta$ owned, a penalty $\lambda$ is levied, resulting in a net wealth
\begin{equation}\label{eq:market_wealth}
	\pi(q) -  c(q, a) - \delta P - \lambda \left(q e^{-a}- \delta \right)^+.
\end{equation}

Under each regulatory policy, the company's objective is to maximize its overall wealth.
As a baseline, in the business-as-usual case, i.e.  in the absence of any carbon emission constraints,  the wealth-maximizing level of production $\bm{ q^{\mathrm{bau}}  }$ and the corresponding carbon emissions $\mathrm{E}^{\mathrm{bau}}$ and wealth $\mathcal{W}^{\mathrm{bau}}_{\mathrm{C}}$ of the company are given by
\begin{align}
	\label{eq:bau}
	\bm{ q^{\mathrm{bau}}  }             & = \frac{\pi^0}{\pi^1},
	                                     &
	\mathrm{E}^{\mathrm{bau}}            & = \bm{ q^{\mathrm{bau}}  },
	                                     &
	\mathcal{W}^{\mathrm{bau}}_{\mathrm{C}} & = \pi(\bm{ q^{\mathrm{bau}}  }) = \frac{1}{2}\frac{(\pi^0)^2}{\pi^1}.
\end{align}
Furthermore, for ease of notation, we define the emission factor
\begin{equation*}
	\varrho = \frac{1}{\pi^1} + \frac{1}{\gamma}.
\end{equation*}

\begin{remark}
	For feasibility reasons, in the market case, we always assume that $0<P< \lambda$.
	Indeed, for a price $P\geq \lambda$,  it is always more profitable for a company to set $\delta = 0$ and pay the penalty $\lambda$.
\end{remark}

\begin{assumption}\label{ass:standing_ass}
	A company seeks to maximize its wealth subject to the constraint that both production and wealth are positive.
	To simplify the analysis and avoid resorting to more advanced mathematical techniques, such as the use of Lagrangian multipliers, we impose the following sufficient condition:
	\begin{equation*}
		\tau \varrho < \mathrm{E}^{\mathrm{bau}} \quad \text{and} \quad \lambda \varrho < \mathrm{E}^{\mathrm{bau}}.
	\end{equation*}
\end{assumption}

\begin{proposition}\label{prop:optimalquant}
	Let $0 < P < \lambda$ be given, representing the unit certificate price faced by the company.
	Under Assumption~\ref{ass:standing_ass}, for the tax scheme $(\mathrm{tax})$ and the market scheme $(\mathrm{mar})$ it holds
	\begin{itemize}[fullwidth]
		\item $\bm{\mathrm{tax}}$: the optimal production $\bm{ q^{\mathrm{tax}} }$ and reduction rate $\bm{ a^{\mathrm{tax}} }$ are given by
		      \begin{equation*}
			      \bm{ q^{\mathrm{tax}} }      = \bm{ q^{\mathrm{bau}}  }- \tau\frac{1}{\pi^1},
			      \qquad
			      e^{-\bm{ a^{\mathrm{tax}} }}  = \displaystyle \frac{\mathrm{E}^{\mathrm{bau}} - \tau \varrho}{ \bm{ q^{\mathrm{tax}} }}
		      \end{equation*}
		      for the resulting carbon emissions $\mathrm{E}^{\mathrm{tax}}$ and wealth $\mathcal{W}^{\mathrm{tax}}_{\mathrm{C}}$
		      \begin{equation*}
			      \mathrm{E}^{\mathrm{tax}}      = \mathrm{E}^{\mathrm{bau}}- \tau \varrho,
			      \qquad
			      \mathcal{W}^{\mathrm{tax}}_{\mathrm{C}}  = \mathcal{W}^{\mathrm{bau}}_{\mathrm{C}} - \tau \left(\mathrm{E}^{\mathrm{bau}} - \frac{\tau}{2} \varrho \right)
		      \end{equation*}

		\item $\bm{\mathrm{mar}}$: the optimal production $\bm{ q^{\mathrm{mar}} }$, reduction rate $\bm{ a^{\mathrm{mar} } }$ and certificates demand $\delta$ are given by
		      \begin{equation*}
			      \bm{ q^{\mathrm{mar}} }    = \bm{ q^{\mathrm{bau}}  }- P \dfrac{1}{\pi^1},
			      \qquad
			      e^{-\bm{ a^{\mathrm{mar} } }}  = \displaystyle \frac{ \mathrm{E}^{\mathrm{bau}} - P \varrho}{\bm{ q^{\mathrm{mar}} }},
			      \qquad
			      \bm{\delta}    = \mathrm{E}^{\mathrm{bau}} -  P \varrho
		      \end{equation*}
		      for the resulting carbon emissions $\mathrm{E}^{\mathrm{mar}}$ and wealth $\mathcal{W}^{\mathrm{mar}}_{\mathrm{C}}$
		      \begin{equation*}
			      \mathrm{E}^{\mathrm{mar}}   = \mathrm{E}^{\mathrm{bau}} - P \varrho =\bm{\delta},
			      \qquad
			      \mathcal{W}^{\mathrm{mar}}_{\mathrm{C}}  = \mathcal{W}^{\mathrm{bau}}_{\mathrm{C}} - P \left(\mathrm{E}^{\mathrm{bau}} - \frac{P}{2} \varrho \right).
		      \end{equation*}
	\end{itemize}
\end{proposition}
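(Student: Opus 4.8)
The plan is to reduce both optimization problems to a one-dimensional concave program by a partial (sequential) maximization, and then to recognize that the market problem is the tax problem in disguise once the certificate demand is optimized out. Throughout I would reparametrize the abatement effort by $u = e^{-a} \in (0,1)$, so that $a>0$ corresponds to $u \in (0,1)$, and write the tax objective \eqref{eq:tax_wealth} as $f(q,u) = \pi^0 q - \tfrac{\pi^1}{2}q^2 - \tfrac{\gamma}{2}(1-u)^2 q^2 - \tau u q$.

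For the tax scheme, rather than attacking the joint problem in $(q,u)$---whose objective is not jointly concave because of the $(1-u)^2 q^2$ term---I would first fix $q$ and maximize over $u$. Since $\partial_{uu} f = -\gamma q^2 < 0$, the inner problem is strictly concave and its first-order condition $\gamma(1-u)q^2 = \tau q$ yields $u^\ast(q) = 1 - \tau/(\gamma q)$. Substituting back collapses the awkward term and gives the reduced objective $g(q) := f(q,u^\ast(q)) = (\pi^0 - \tau)q - \tfrac{\pi^1}{2}q^2 + \tfrac{\tau^2}{2\gamma}$, which is strictly concave in $q$ with maximizer $q = (\pi^0-\tau)/\pi^1 = \bm{q^{\mathrm{bau}}} - \tau/\pi^1$, matching $\bm{q^{\mathrm{tax}}}$. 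The remaining identities follow by back-substitution: $e^{-\bm{a^{\mathrm{tax}}}} = u^\ast(\bm{q^{\mathrm{tax}}})$, the emissions read $\mathrm{E}^{\mathrm{tax}} = u^\ast \bm{q^{\mathrm{tax}}} = \bm{q^{\mathrm{tax}}} - \tau/\gamma = \mathrm{E}^{\mathrm{bau}} - \tau\varrho$, and evaluating $g$ at $\bm{q^{\mathrm{tax}}}$ (using $\pi^0 = \pi^1 \mathrm{E}^{\mathrm{bau}}$) produces $\mathcal{W}^{\mathrm{tax}}_{\mathrm{C}} = \mathcal{W}^{\mathrm{bau}}_{\mathrm{C}} - \tau(\mathrm{E}^{\mathrm{bau}} - \tfrac{\tau}{2}\varrho)$.

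For the market scheme the only genuinely new ingredient is the non-smooth penalty, and I would dispatch it by optimizing out $\delta$ first. For fixed $(q,u)$ with emissions $m := uq$, the company minimizes $\delta P + \lambda(m-\delta)^+$ over $\delta \geq 0$. Splitting into the cases $\delta \geq m$ and $\delta < m$ and using $0 < P < \lambda$ shows this piecewise-linear cost is minimized at $\delta = m$ with value $Pm$, so that overbuying is wasteful and underbuying is penalized. Consequently the $\delta$-optimized market objective is exactly $f(q,u)$ with $\tau$ replaced by $P$, and the tax analysis applies verbatim, delivering $\bm{q^{\mathrm{mar}}}$, $e^{-\bm{a^{\mathrm{mar}}}}$, $\mathcal{W}^{\mathrm{mar}}_{\mathrm{C}}$, and $\bm{\delta} = \mathrm{E}^{\mathrm{mar}} = \mathrm{E}^{\mathrm{bau}} - P\varrho$.

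The main obstacle, and the place where Assumption~\ref{ass:standing_ass} enters, is feasibility: I must confirm that the unconstrained optimizer actually satisfies $q>0$, $u^\ast \in (0,1)$, and positive wealth, so that the imposed constraints are inactive and no Lagrange multipliers are needed. Positivity of emissions $u^\ast \bm{q^{\mathrm{tax}}} = \mathrm{E}^{\mathrm{bau}} - \tau\varrho > 0$ is exactly $\tau\varrho < \mathrm{E}^{\mathrm{bau}}$; the bound $u^\ast < 1$ follows from $\varrho > 1/\pi^1$; and $\bm{q^{\mathrm{tax}}} > 0$ follows since $\tau/\pi^1 < \tau\varrho < \mathrm{E}^{\mathrm{bau}}$. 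In the market case the same checks require $P\varrho < \mathrm{E}^{\mathrm{bau}}$, which is why the hypothesis is stated for $\lambda$: since $P < \lambda$ one has $P\varrho < \lambda\varrho < \mathrm{E}^{\mathrm{bau}}$, covering every admissible price uniformly.
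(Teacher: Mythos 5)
Your proof is correct, and for the tax part it takes a genuinely different route from the paper. The paper attacks the joint problem in $(q,a)$ head-on: it computes both first-order conditions, finds two critical points, classifies them via the Hessian (one saddle, one local maximum), and then establishes global optimality by a separate boundary analysis, comparing $f(q_2,a_2)$ with the limiting value $(\pi^0)^2/[2(\pi^1+\gamma)]$ as $a\to+\infty$ and verifying the resulting gap $\Delta>0$. You instead set $u=e^{-a}$ and maximize sequentially: for fixed $q\neq 0$ the inner problem in $u$ is strictly concave with $\partial_{uu}f=-\gamma q^2<0$, and substituting $u^\ast(q)=1-\tau/(\gamma q)$ collapses the non-concave cross term and leaves the strictly concave reduced objective $g(q)=(\pi^0-\tau)q-\tfrac{\pi^1}{2}q^2+\tfrac{\tau^2}{2\gamma}$. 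This buys global optimality for free from two one-dimensional concave problems and dispenses entirely with the Hessian classification and the $\Delta>0$ computation; the paper's route, in exchange, stays in the original coordinates and exhibits the saddle point explicitly. Your treatment of the market scheme (minimizing $\delta P+\lambda(m-\delta)^+$ at $\delta=m$ and reducing to the tax problem with $\tau=P$) is exactly the paper's argument, and your feasibility checks match how Assumption~\ref{ass:standing_ass} is used there. The only point worth making explicit is that $g(q)$ equals $\sup_u f(q,u)$ only for $q\neq 0$ (at $q=0$ the inner problem degenerates and $f(0,\cdot)\equiv 0$, whereas $g(0)=\tau^2/(2\gamma)>0$); since the maximizer of $g$ is strictly positive and $g(q^\ast)\geq g(0)>0$, this does not affect the conclusion, but a sentence acknowledging it would close the argument cleanly.
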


\begin{proof}
	Under the tax scheme, the company's wealth function~\eqref{eq:tax_wealth} is given by
	\begin{equation*}
		f(q,a) = \pi^0 q - \frac{\pi^1}{2} q^2 - \frac{\gamma}{2} \big((1-e^{-a})q\big)^2 - \tau e^{-a} q.
	\end{equation*}
	The corresponding first-order condition is
	\begin{align*}
		f_q & = \pi^0 - \pi^1 q - \gamma q (1 - e^{-a})^2 - \tau e^{-a} = 0, \\[1ex]
		f_a & = q e^{-a} \left(\tau - \gamma q (1-e^{-a}) \right) = 0.
	\end{align*}
	Solving this system yields two critical points $(q_1,a_1)$ and $(q_2,a_2)$ with
	\begin{equation*}
		q_1 = 0, \quad e^{-a_1} = \frac{\pi^0}{\tau}, \quad  and \quad q_2 = \frac{\tau}{\gamma(1-e^{-a_2})}, \quad  e^{-a_2} = \frac{\bm{ q^{\mathrm{bau}}  }- \tau \varrho}{\bm{ q^{\mathrm{bau}}  }- \tfrac{\tau}{\pi^1}},
	\end{equation*}
	where $\bm{q^{\mathrm{bau}} }$ is defined in~\eqref{eq:bau}.
	The Hessian matrix of $f$ is
	\begin{equation*}
		\nabla^2 f =
		\begin{pmatrix}
			f_{qq} & f_{qa} \\
			f_{aq} & f_{aa}
		\end{pmatrix},
	\end{equation*}
	with
	\begin{align*}
		f_{qq} & = -\pi^1 - \gamma (1-e^{-a})^2,                                \\
		f_{qa} & = f_{aq} = -2 \gamma q (1-e^{-a}) e^{-a} + \tau e^{-a},        \\
		f_{aa} & = -q e^{-a} (\tau - \gamma q (1-e^{-a})) - \gamma q^2 e^{-2a}.
	\end{align*}
	At $(q_1, a_1)$, the determinant of the Hessian satisfies $|\nabla^2 f(q_1, a_1)| = -(\pi^0)^2 < 0$, indicating that $(q_1,a_1)$ is a saddle point.
	For the second critical point $(q_2, a_2)$, substituting the identity $\tau = \gamma q_2 (1-e^{-a_2})$ into the second-order derivatives yields
	\begin{align*}
		f_{qq}(q_2, a_2)       & = -\pi^1 \left( 1 + \frac{\tau^2}{\gamma} (\pi^0 - \tau) \right) < 0, \\
		|\nabla^2 f(q_2, a_2)| & = (e^{-a_2})^2 \gamma \pi^1 q_2^2 > 0,
	\end{align*}
	where the first inequality holds because $\pi^0 > \tau$, a relation implied by Assumption~\ref{ass:standing_ass}.
	Therefore, $(q_2,a_2)$ is the unique local maximizer of $f$, see \citep[pp. 924]{stewart2008}.

	Since $f$ is coercive\footnote{
		A function $\mathbb{R} \ni x \stackrel{\ell}{\mapsto} \ell(x) \in \mathbb{R}$ is said to be coercive iff $ \displaystyle \lim_{|x|\nearrow \infty} \ell(x) =-\infty$.
	} in $q$ no maximizer can occur at $|q|\to\infty$.
	As $a\to -\infty$,  $f\to -\infty$ for any fixed $q\ne0$.
	As $a\to +\infty$, it holds
	\begin{equation*}
		f(q,a) \to \pi^0 q - \frac{1}{2}( \pi^1 + \gamma) q^2
	\end{equation*}
	whose maximum over $q$ equals $(\pi^0)^2/[2(\pi^1+\gamma)]$.
	To show $(q_2, a_2)$ is the unique global maximizer of $f$, it is enough to show 
	\begin{equation*}
		\Delta :=  f(q_2,a_2) - \frac{(\pi^0)^2}{2 (\pi^1 + \gamma)} >0. 
	\end{equation*}
	It follows that 
	\begin{align*}
		\Delta & = \frac{\gamma(\pi^0 -\tau )^2 + \pi^1\tau^2 }{2\gamma \pi^1} -  \frac{(\pi^0)^2}{2 (\pi^1 + \gamma)}  =  \frac{\left[\tau ( \gamma +\pi^1)-\gamma \pi^0 \right]^2}{2\gamma \pi^1(\gamma + \pi^1)} > 0.
	\end{align*}
	Indeed, $\Delta=0$ for $\tau = \gamma \pi^0/(\gamma+\pi^1)$. 
	But, this can't happen due to Assumption~\ref{ass:standing_ass} on $\tau$. 
	Hence, $(q_2, a_2)$ is the unique global maximizer of $f$.
	By Assumption~\ref{ass:standing_ass}, it is obvious that $a_2>0$ and $q_2>0$, which implies the optimal productions  $\bm{ q^{\mathrm{tax}} }=q_2$ and emissions reduction factor $\bm{ a^{\mathrm{tax}} } = a_2$, and  expressed as:
	\begin{equation*}
		\bm{ q^{\mathrm{tax}} }     = \bm{ q^{\mathrm{bau}}  }- \tau \dfrac{1}{\pi^1},
		\qquad
		e^{-\bm{ a^{\mathrm{tax}} }}  = \frac{ \mathrm{E}^{\mathrm{bau}} - \tau \varrho}{ \bm{ q^{\mathrm{tax}} }}.
	\end{equation*}
	Substituting the value of $\bm{ q^{\mathrm{tax}} } $ and  $\bm{ a^{\mathrm{tax}} }$  into the emission expression $\mathrm{E}^{\mathrm{tax}} = e^{-\bm{ a^{\mathrm{tax}} }}  \bm{ q^{\mathrm{tax}} }$ yields
	\begin{equation*}
		\mathrm{E}^{\mathrm{tax}}  = \mathrm{E}^{\mathrm{bau}} - \tau \varrho.
	\end{equation*}
	The resulting optimal wealth $\mathcal{W}^{\mathrm{tax}}_\mathrm{C}$ of the company is
	\begin{equation*}
		\mathcal{W}^{\mathrm{tax}}_\mathrm{C}  = f\left( \bm{ q^{\mathrm{tax}} }, \bm{ a^{\mathrm{tax}} } \right)  = \mathcal{W}^{\mathrm{bau}}_{\mathrm{C}} - \tau \left(\mathrm{E}^{\mathrm{bau}} - \frac{\tau}{2} \varrho \right).
	\end{equation*}

	Under the market scheme, and assuming that the certificate price satisfies $0 < P < \lambda$,  in view of~\eqref{eq:market_wealth}, for fixed values of $q$ and $a$, the company minimizes the cost associated with its certificate demand $\delta$:
	\begin{equation*}
		\min_{\delta} \left\{ \delta P + \lambda \left( e^{-a}  q  - \delta \right)^+ \right\} = P e^{-a}  q,
	\end{equation*}
	which is uniquely attained at $\delta =  e^{-a} q$.
	The corresponding company's wealth in~\eqref{eq:market_wealth} simplifies to
	\begin{equation*}
		\pi(q) - c(q, a) - P e^{-a} q,
	\end{equation*}
	which is equivalent to the wealth under the tax scheme, as given in~\eqref{eq:tax_wealth}, when the tax rate is set to $\tau = P$.
	By analogy, we derive the expressions for the company's optimal production level $\bm{ q^{\mathrm{mar}} }$, emissions reduction rate $\bm{ a^{\mathrm{mar} } }$, carbon emissions $\mathrm{E}^{\mathrm{mar}}$, and wealth $\mathcal{W}^{\mathrm{mar}}_{\mathrm{C}}$.
	Finally, the certificate demand simplifies to $\bm{\delta} = e^{-\bm{ a^{\mathrm{mar} } }} \bm{ q^{\mathrm{mar}} } = \mathrm{E}^{\mathrm{bau}} - P \varrho$.
\end{proof}

\begin{remark}\label{rem:zero_loss}
	As shown in Proposition~\ref{prop:optimalquant}, in the present deterministic case, the demand of certificates for any company in the market scheme exactly matches the total level of its emissions, and there is no excess emissions.
	This is no longer the case if emissions are random: see Section~\ref{sec:princing_random}.
\end{remark}

If $N$ companies, indexed by $i$ are present in the economy (see Table~\ref{tab:notations} for details on the notations), the results from the above proposition imply the following.

\begin{itemize}[leftmargin= 2em]
	\item Indexing by $i$ the results from the above proposition and introducing the notations $\mathrm{E}^{\mathrm{bau}} = \sum \mathrm{E}^{\mathrm{bau}}_i$ for aggregate carbon emissions, and $\mathcal{W}^{\mathrm{bau}}_{\mathrm{C}} = \sum \mathcal{W}^{\mathrm{bau}}_{\mathrm{C}, i}$ for aggregate wealth in the business-as-usual case, the total emissions and aggregate wealth under the tax and market schemes are
	      \begin{align*}
		      \mathrm{E}^{\mathrm{\mathrm{tax}}}
		       & = \mathrm{E}^{\mathrm{bau}} - \tau \sum \varrho_i,
		       &
		      \mathcal{W}^{\mathrm{tax}}_{\mathrm{C}}
		       & = \mathcal{W}^{\mathrm{bau}}_{\mathrm{C}} -  \tau\left(\mathrm{E}^{\mathrm{bau}} - \tfrac{\tau}{2} \sum \varrho_i \right),
		      \\
		      \mathrm{E}^{\mathrm{\mathrm{mar}}}
		       & = \mathrm{E}^{\mathrm{bau}} - \sum P_i \varrho_i,
		       &
		      \mathcal{W}^{\mathrm{mar}}_{\mathrm{C}}
		       & = \mathcal{W}^{\mathrm{bau}}_{\mathrm{C}} -  \sum P_i \left(\mathrm{E}^{\mathrm{bau}}_i - \tfrac{P_i}{2} \varrho_i \right),
	      \end{align*}
	      where $P_i$ is a company-specific price per certificate.

	\item Obviously, under either scheme the overall emissions are reduced with respect to the business-as-usual case.
	      Under the tax scheme, the emission  factors  $\varrho_i = 1 / \pi_i^1 + 1 / \gamma_i$ for each company are uniformly scaled by the tax rate $\tau$.
	      In contrast, under the market scheme, $\varrho_i$ is weighted by the certificate price $P_i$ paid by the company $i$.
	\item Likewise, under either scheme, the overall wealth of companies is reduced relative to the business-as-usual case.
	      Under the tax scheme, each company's wealth reduction is determined by the emission factors $\varrho_i$, scaled quadratically by the tax rate $\tau$.
	      Under the market scheme, each company's wealth reduction is similarly determined by $\varrho_i$, weighted quadratically by its own certificate price $P_i$.
\end{itemize}

\begin{table}[htp]
	\centering
	\begin{threeparttable}
		\caption{Main notations used in the model.}
		\label{tab:notations}
		\begin{tabular}{@{}ll@{}}
			\toprule
			\multicolumn{2}{c}{\textbf{Basic model parameters}}  \\
			\midrule
			$\tau$                & Tax rate per ton of emissions under the tax scheme                                \\
			$\lambda$             & Penalty rate per ton of excess emissions under the market scheme                  \\
			$A$                   & Total number of certificates auctioned under the market scheme                    \\
			$\pi^0_i$             & Linear wealth coefficient of company $i$                                             \\
			$\pi^1_i$             & Quadratic wealth coefficient of company $i$                                          \\
			$\gamma_i$            & Emissions reduction cost parameter of company $i$                                    \\
			$\varrho_i$           & Emissions reduction factor $(1/\pi^1_i) + (1/\gamma_i)$  for company $i$             \\
			\midrule
			\multicolumn{2}{c}{\textbf{Equilibrium variables}$^\dagger$ \textbf{(baseline, tax, and market schemes)}} \\
			\midrule
			$q_i$                 & Quantity of goods produced by company $i$                                            \\
			$\mathrm{E}_i$        & Total emissions of company $i$                                                       \\
			$\mathcal{W}_{\mathrm{C}, i}$ & Wealth  of company $i$  \\
			\midrule
			\multicolumn{2}{c}{\textbf{Equilibrium variables}$^{\ddagger}$ \textbf{(tax and market schemes only)}}    \\
			\midrule
			$a_i$                 & Emissions reduction rate of company $i$                                              \\
			$\mathcal{W}_{\mathrm{R}}$    & Wealth of the regulatory authority                                                \\
			\midrule
			\multicolumn{2}{c}{\textbf{Equilibrium variables (market schemes only)}}                                  \\
			\midrule
			$S$                   & Spot price per unit of certificate                                                \\
			$P_i$                 & Price per unit of certificate for company $i$                                        \\
			$\delta_i$            & Quantity of certificates demanded by company $i$                                     \\
			$\mathcal{W}_{\mathrm{F}, i}$ & Wealth of company's $i$ financial intermediary (if applicable)                                 \\
			\bottomrule
		\end{tabular}
		\begin{tablenotes}
			\footnotesize
			\item Superscripts $\mathrm{bau}$, $\mathrm{tax}$, and $\mathrm{mar}$ are used for variables marked with $\dagger$ and $\ddagger$ to distinguish between the business-as-usual, tax, and market schemes cases, respectively. 
			\item Quantities without subscript $i$ (e.g.\ $\mathrm{E}^{\mathrm{mar}}$, $\mathcal{W}^{\mathrm{mar}}_{\mathrm{C}}$, $\mathcal{W}^{\mathrm{mar}}_{\mathrm{F}}$ denote economy-wide aggregates.
		\end{tablenotes}
	\end{threeparttable}
\end{table}

Under the market scheme, we still need to specify how the certificate prices $P_i$ paid by each company is formed.

\subsection{Market Scheme}
Under the market scheme, there are two possibilities for a company to purchase carbon certificates:
\begin{itemize}
	\item Either purchase directly at the regulatory authority's auction for an auction price $S$ referred to as \emph{spot price};
	\item Or purchase those certificates from financial intermediaries for an intermediated price $P_i$.
	      The financial intermediaries then participate in an auction in order to satisfy this demand by purchasing at the spot price $S$.
\end{itemize}
In the following we consider the two extreme cases, namely
\begin{itemize}
	\item All companies purchase certificates directly at the auction organized by the regulatory authority, referred to as \emph{spot market scheme};
	\item All companies purchase certificates  from intermediaries, referred to as \emph{purely intermediated  market scheme}.
	      The resulting wealth of the company's $i$ intermediary for a demand $\delta_i$ of certificates is therefore
	      \begin{equation*}
		      \delta_i \times (P_i - S),
	      \end{equation*}
	      where $P_i - S$ is the spread  between the intermediated and spot prices.
\end{itemize}

\begin{remark}
	In the main body of this work, we present the market scheme under two extreme cases: the spot market and the purely intermediated market scheme.
	In reality, companies that are tied to a financial intermediary can change to another financial intermediary or directly be present in the auction.
	This can  be modeled in the optimization problem of each company by adding opportunity costs (price elasticity) for switching to an alternative financial intermediary or participating in the auction.
\end{remark}

Each company has a certificate demand curve $\bm{\delta_i} = \bm{ \delta_i}(P_i) $, where we now make explicit the company's index $i$ and the dependence of $\bm{\delta_i}$ on the certificate price $P_i$ as specified in Proposition~\ref{prop:optimalquant}.
When certificates are purchased through a financial intermediary, this intermediary sets the price $\bm{P_i}$ to maximize its own wealth, i.e. 
\begin{equation*}
	\bm{P_i}(S)= \argmax_P \  \bm{ \delta_i}(P) (P-S),
\end{equation*}
where $S$ denotes the auction (spot) price of one certificate.
This induces an optimal demand  
$  \bm{ \delta_i}(\bm{P_i}(S))$ for the company, as characterized in Proposition~\ref{prop:optimalquant}.
Under direct auction procurement, where $\bm{P_i}(S) = S$, each company optimizes its demand directly as $\bm{ \delta_i}(S)$ for the given auction price.

\begin{proposition}\label{prop:quoted_price}
	Under the market schemes, for each company $i$:
	\begin{itemize}
		\item \textbf{Direct participation to the auction}:
		      When the company $i$ purchases certificates directly at a given spot price $S$, its optimal demand $\bm{ \delta_i}$ satisfies
		      \begin{equation*}
			      \bm{ \delta_i}(\bm{P_i}(S)) = \bm{\delta_i}(S)= \mathrm{E}^{\mathrm{bau}}_i - S \varrho_i.
		      \end{equation*}
		\item \textbf{Purchase through a financial intermediary}: When the company $i$ purchases certificates through a financial intermediary, under the feasibility assumption $0 < S \leq 2\lambda - \max\left( \mathrm{E}^{\mathrm{bau}}_i/ \varrho_i \right)$  on the given spot price $S$,  the intermediated price $\bm{P_i}(S)$ given $S$ satisfies
		      \begin{equation*}
			      \bm{P_i}(S)   = \frac{1}{2} \left(S + \frac{\mathrm{E}^{\mathrm{bau}}_i}{\varrho_i} \right).
		      \end{equation*}
		      For this price, the company's optimal certificate demand $\bm{ \delta_i}$ satisfies
		      \begin{equation*}
			      \bm{ \delta_i}(\bm{P_i}(S)) = \frac{1}{2} \left( \mathrm{E}^{\mathrm{bau}} _i- S \varrho_i \right).
		      \end{equation*}
	\end{itemize}
\end{proposition}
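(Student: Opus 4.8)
The plan is to reduce both statements to the explicit demand curve already obtained in Proposition~\ref{prop:optimalquant}, namely $\bm{\delta_i}(P_i) = \mathrm{E}^{\mathrm{bau}}_i - P_i \varrho_i$, and then to carry out the intermediary's optimization in the second case. The whole argument rests on the fact that, once the price $P_i$ faced by company $i$ is fixed, its optimal response is precisely the closed-form demand of that proposition.

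For direct participation the argument is immediate: purchasing at the auction means the company faces the price $P_i = S$, so substituting into the demand curve of Proposition~\ref{prop:optimalquant} yields $\bm{\delta_i}(S) = \mathrm{E}^{\mathrm{bau}}_i - S\varrho_i$. There is nothing further to do, beyond noting that the standing assumption (with $\tau$ replaced by $S$) guarantees admissibility.

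For purchase through an intermediary, I would insert the demand curve into the intermediary's objective to obtain $g(P) := \bm{\delta_i}(P)(P - S) = (\mathrm{E}^{\mathrm{bau}}_i - P\varrho_i)(P - S)$. This is a concave quadratic in $P$, since $g''(P) = -2\varrho_i < 0$, so its unique maximizer is determined by the first-order condition $g'(P) = \mathrm{E}^{\mathrm{bau}}_i - 2P\varrho_i + S\varrho_i = 0$, giving $\bm{P_i}(S) = \tfrac12\bigl(S + \mathrm{E}^{\mathrm{bau}}_i/\varrho_i\bigr)$. Substituting this value back into the demand curve and simplifying then produces $\bm{\delta_i}(\bm{P_i}(S)) = \tfrac12\bigl(\mathrm{E}^{\mathrm{bau}}_i - S\varrho_i\bigr)$. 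Both are one-line computations once the objective is written down.

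The only genuinely delicate point—and the step I expect to require the feasibility hypothesis—is verifying that $\bm{P_i}(S)$ is an admissible price at which Proposition~\ref{prop:optimalquant} may legitimately be invoked, i.e. that $0 < \bm{P_i}(S) \le \lambda$ and that the induced demand is positive. Positivity of $\bm{P_i}(S)$ is immediate from $S > 0$ and $\mathrm{E}^{\mathrm{bau}}_i/\varrho_i > 0$. The upper bound $\bm{P_i}(S) \le \lambda$ rearranges to $S \le 2\lambda - \mathrm{E}^{\mathrm{bau}}_i/\varrho_i$, which is exactly what the stated feasibility assumption $0 < S \le 2\lambda - \max_i(\mathrm{E}^{\mathrm{bau}}_i/\varrho_i)$ delivers for every $i$. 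Positivity of the demand is then automatic: combining $\bm{P_i}(S) \le \lambda$ with the standing-assumption inequality $\lambda \varrho_i < \mathrm{E}^{\mathrm{bau}}_i$ gives $\bm{P_i}(S)\varrho_i \le \lambda \varrho_i < \mathrm{E}^{\mathrm{bau}}_i$, hence $\bm{\delta_i}(\bm{P_i}(S)) > 0$. With admissibility secured, the demand formula of Proposition~\ref{prop:optimalquant} applies at $P_i = \bm{P_i}(S)$, and the computation closes.
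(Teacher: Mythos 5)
Your proposal is correct and follows essentially the same route as the paper: the paper proves this as a special case of Proposition~\ref{prop:given_S}, where the intermediary's objective $\bm{\delta_i}(P)(P-S)$ is likewise recognized as a strictly concave quadratic whose unconstrained maximizer is $\tfrac12(S+\mathrm{E}^{\mathrm{bau}}_i/\varrho_i)$, truncated at $\lambda$ in general, with the stated feasibility condition on $S$ being precisely what guarantees the interior (non-corner) solution. Your admissibility check ($\bm{P_i}(S)\le\lambda$ and positivity of the induced demand via $\lambda\varrho_i<\mathrm{E}^{\mathrm{bau}}_i$) matches the paper's treatment, including the role of Assumption~\ref{ass:standing_ass} in ensuring $\bm{P_i}(S)\ge S$.
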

This follows as a special case of Proposition~\ref{prop:given_S}.

It is worth noting that, relative to direct market purchases, purchasing certificates through a financial intermediary results in a demand that is exactly half of the corresponding direct demand at the same spot price $S$.

\subsection{Regulatory Organized Auction: An Equilibrium Price}
With the optimal demand of certificates $\bm{\delta_i}(\bm{P_i}(S))$ on the auction market---either directly for a company $i$ participating in the Radner equilibrium type auction (see \citep{radner1972}), or indirectly through intermediary---the resulting auction price $\bm{S}$ is determined by clearing, i.e.
\begin{equation*}
	A=\sum \bm{\delta_i}(\bm{P_i}(\bm{S})),
\end{equation*}
where $A$ denotes the total number of certificates auctioned by the regulatory authority, meant as the overall target for carbon emissions.

As a special case of Proposition~\ref{prop:given_S}, we obtain:
\begin{proposition}\label{prop:price_exchange}
	Let $\varrho := \sum \varrho_i$. 
	\begin{itemize}[fullwidth]
		\item \textbf{Spot market scheme:}
		      Under the feasibility condition $\mathrm{E}^{\mathrm{bau}} - \lambda \varrho \leq A  < \mathrm{E}^{\mathrm{bau}}$ on $A$, the equilibrium spot price $\bm{S}$ on the market satisfies
		      \begin{equation*}
			      \bm{S} =  \frac{1}{\varrho}\left(\mathrm{E}^{\mathrm{bau}}-A\right).
		      \end{equation*}
		\item \textbf{Purely intermediated market scheme:}
		      Under the feasibility condition
		      \begin{equation*}
			      \left[\mathrm{E}^{\mathrm{bau}} + \varrho[ \max(\mathrm{E}^{\mathrm{bau}}_i/\varrho_i ) -2\lambda ]  \right]/2 \leq A  <  \sum \left[ (\mathrm{E}^{\mathrm{bau}}_i/2) \vee ( \mathrm{E}^{\mathrm{bau}}_i- \lambda \varrho_i ) \right]
		      \end{equation*}
		      on $A$, the equilibrium spot price $\bm{S}$ on the market satisfies
		      \begin{equation*}
			      \bm{S} =  \frac{1}{\varrho}\left(\mathrm{E}^{\mathrm{bau}} - 2A\right).
		      \end{equation*}
	\end{itemize}
\end{proposition}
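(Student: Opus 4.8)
The plan is to collapse the equilibrium (clearing) condition $A = \sum \bm{\delta_i}(\bm{P_i}(\bm{S}))$ into a single affine equation in the unknown spot price $\bm{S}$, using the closed-form demand curves already furnished by Proposition~\ref{prop:quoted_price}, and then to check that the calibrated feasibility windows on $A$ are exactly what is needed to place the resulting price (and the induced intermediated prices and demands) inside the admissibility ranges of that proposition.

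First I would treat the spot market scheme, where every company procures directly. By the first item of Proposition~\ref{prop:quoted_price} each contribution is $\bm{\delta_i}(\bm{P_i}(\bm{S})) = \mathrm{E}^{\mathrm{bau}}_i - \bm{S}\varrho_i$. Summing over $i$ and writing $\mathrm{E}^{\mathrm{bau}} = \sum \mathrm{E}^{\mathrm{bau}}_i$, $\varrho = \sum \varrho_i$, the clearing identity becomes
\begin{equation*}
	A = \mathrm{E}^{\mathrm{bau}} - \bm{S}\varrho,
\end{equation*}
which I solve to obtain $\bm{S} = (\mathrm{E}^{\mathrm{bau}} - A)/\varrho$. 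The feasibility window $\mathrm{E}^{\mathrm{bau}} - \lambda\varrho \le A < \mathrm{E}^{\mathrm{bau}}$ then reads off transparently: the upper bound $A < \mathrm{E}^{\mathrm{bau}}$ is precisely $\bm{S} > 0$, while the lower bound $\mathrm{E}^{\mathrm{bau}} - \lambda\varrho \le A$ rearranges to $\bm{S} \le \lambda$, so that the standing price restriction $0 < \bm{S} < \lambda$ holds and each direct demand is admissible.

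Next I would handle the purely intermediated scheme. The second item of Proposition~\ref{prop:quoted_price} gives the halved demand $\bm{\delta_i}(\bm{P_i}(\bm{S})) = \tfrac{1}{2}(\mathrm{E}^{\mathrm{bau}}_i - \bm{S}\varrho_i)$; summing and clearing yields $A = \tfrac{1}{2}(\mathrm{E}^{\mathrm{bau}} - \bm{S}\varrho)$, hence $\bm{S} = (\mathrm{E}^{\mathrm{bau}} - 2A)/\varrho$, the stated formula. The content then lies in matching the feasibility condition on $A$ to the hypothesis $0 < \bm{S} \le 2\lambda - \max_i(\mathrm{E}^{\mathrm{bau}}_i/\varrho_i)$ of Proposition~\ref{prop:quoted_price}: multiplying the lower bound $[\mathrm{E}^{\mathrm{bau}} + \varrho(\max_i(\mathrm{E}^{\mathrm{bau}}_i/\varrho_i) - 2\lambda)]/2 \le A$ by $2$ and dividing by $\varrho$ produces exactly $\bm{S} \le 2\lambda - \max_i(\mathrm{E}^{\mathrm{bau}}_i/\varrho_i)$, i.e.\ $\bm{P_i}(\bm{S}) < \lambda$ for every $i$, whereas the upper bound $A < \sum_i[(\mathrm{E}^{\mathrm{bau}}_i/2)\vee(\mathrm{E}^{\mathrm{bau}}_i - \lambda\varrho_i)]$ secures positivity of $\bm{S}$ together with the correct sign of the individual demands $\bm{\delta_i}$.

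I expect the only genuine obstacle to be this last bookkeeping in the intermediated case. Unlike the spot case, where a single two-sided inequality on $\bm{S}$ maps cleanly onto the window for $A$, here a common clearing price $\bm{S}$ must simultaneously keep every $\bm{P_i}(\bm{S})$ below $\lambda$ and every $\bm{\delta_i}$ of the right sign, so I must confirm that the aggregated bound on $A$ is compatible with each per-company constraint rather than a single binding one. Since all the relevant relations are affine in $\bm{S}$, this reduces to identifying the binding company and comparing it against the aggregate bound; I would defer the fully general verification to Proposition~\ref{prop:given_S}, of which the present statement is a specialization.
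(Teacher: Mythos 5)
Your proposal is correct and follows essentially the same route as the paper: the paper also obtains this result by specializing the demand curves of Proposition~\ref{prop:quoted_price} (itself a case of Proposition~\ref{prop:given_S}) to the all-direct and all-intermediated configurations, summing them in the clearing condition $A=\sum\bm{\delta_i}(\bm{P_i}(\bm{S}))$, solving the resulting affine equation for $\bm{S}$, and noting that the feasibility windows on $A$ are exactly the translations of $0<\bm{S}\le\lambda$ and $0<\bm{S}\le 2\lambda-\max_i(\mathrm{E}^{\mathrm{bau}}_i/\varrho_i)$ through the (monotone, affine) clearing map. Your closing remark that the per-company constraints reduce to the single binding one via $\max_i(\mathrm{E}^{\mathrm{bau}}_i/\varrho_i)$ is precisely the bookkeeping carried out around conditions~\eqref{eq:A} and~\eqref{eq:adequate} in the appendix.
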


Obviously, Proposition~\ref{prop:price_exchange} shows that the spot price $\bm{S}$ is  lower under the intermediated market scheme than  under the spot market scheme.

\section{Comparison of the Tax and Market Schemes }\label{sec:compare}
To compare the different carbon pricing schemes, we assume that the regulator sets the tax rate $\tau$, penalty $\lambda$ and number of certificates $A$ such that the resulting aggregated emissions are identical under the tax and market schemes, i.e.  $\mathrm{E}^{\mathrm{tax}} = \mathrm{E}^{\mathrm{mar}}$.
By Proposition~\ref{prop:optimalquant}, the certificates demand $\bm{ \delta_i} = \bm{\delta_i}(\bm{P_i}(\bm{S}))$ of each company $i$ under the market schemes coincides with its total emissions, i.e.  $\bm{ \delta_i} = \mathrm{E}^{\mathrm{mar}}_i$.
It then follows that  aggregate emissions under the market schemes satisfy $\mathrm{E}^{\mathrm{mar}} = \sum \bm{\delta_i} = A$.
Hence, the tax rate $\tau$ must be chosen such that aggregate emissions under the tax scheme also satisfy $\mathrm{E}^{\mathrm{tax}} = A$.
Turning to the wealth generated by the regulatory authority, we obtain:
\begin{equation*}
	\mathcal{W}^{\mathrm{tax}}_{\textrm{R}}  = \tau A,
	\quad
	\mathcal{W}^{\mathrm{mar}}_{\textrm{R}}  = \lambda \sum  \left( \mathrm{E}^{\mathrm{mar}}_i -\bm{\delta_i} \right)^+  + \bm{S} A= \bm{S} A.
\end{equation*}
The second equality in the market case follows from the identity $\mathrm{E}^{\mathrm{mar}}_i = \bm{\delta_i}$, as established in Proposition~\ref{prop:optimalquant}, which implies that the penalty term vanishes.
The following result presents the wealth outcomes for companies and the regulator under both the tax and market schemes, including the wealth earned by financial intermediaries under the market scheme.

\begin{theorem}\label{thm:comparison}
	Assume that the tax rate $\tau$ and  penalty rate $\lambda$ are chosen such that the aggregate emissions are identical across all policy schemes, i.e. equal to $A$ (as they have to be under the market scheme).
	Then
	\begin{itemize}[fullwidth]
		\item \textbf{Tax scheme}:
		      The tax rate is given by $ \tau =  (\mathrm{E}^{\mathrm{bau}}-A)/\varrho$ and the wealth of company $i$ is
		      \begin{equation*}
			      \mathcal{W}^{\mathrm{tax}}_{\mathrm{C}, i} = \mathcal{W}^{\mathrm{bau}}_{\mathrm{C}, i} - \tau \left(\mathrm{E}^{\mathrm{bau}}_i - \frac{\tau}{2}\varrho_i \right).
		      \end{equation*}
		      This leads to the aggregate wealth of the companies and the wealth of the regulator, given by
		      \begin{equation*}
			      \mathcal{W}^{\mathrm{tax}}_{\mathrm{C}}  = \mathcal{W}^{\mathrm{bau}}_{\mathrm{C}} - \tau \left(\mathrm{E}^{\mathrm{bau}} - \frac{\tau}{2}\varrho \right),
			      \quad
			      \mathcal{W}^{\mathrm{tax}}_{\mathrm{R}}  = \tau A.
		      \end{equation*}
		\item \textbf{Spot market scheme}:
		      Outcomes are identical to those under the tax scheme.
		      Specifically,
		      \begin{equation*}
			      \mathcal{W}^{\mathrm{mar}}_{\mathrm{C}, i} = \mathcal{W}^{\mathrm{tax}}_{\mathrm{C}, i}
		      \end{equation*}
		      and the aggregate wealth of the companies and the wealth of the regulator  satisfy
		      \begin{equation*}
			      \mathcal{W}^{\mathrm{mar}}_{\mathrm{C}} = \mathcal{W}^{\mathrm{tax}}_{\mathrm{C}},
			      \quad
			      \mathcal{W}^{\mathrm{mar}}_{\mathrm{R}}  = \mathcal{W}^{\mathrm{tax}}_{\mathrm{R}}.
		      \end{equation*}
		\item \textbf{Purely intermediated market scheme}:
		      The optimal intermediated price faced by the company $i$ satisfies $\bm{P_i}(\bm{S}) =  \tfrac{\tau}{2} + \tfrac{1}{2}(\mathrm{E}^{\mathrm{bau}}_i / \varrho_i- A/\varrho)$, the  wealth of the financial intermediary serving company $i$ and   of the company $i$  are  given by
		      \begin{align*}
			      \mathcal{W}^{\mathrm{mar}}_{\mathrm{F}, i}
			       & = \frac{(\mathrm{E}_i^{\mathrm{mar}})^2}{\varrho_i}
			      =  \dfrac{\varrho_i}{4}\displaystyle \left( \frac{\mathrm{E}^{\mathrm{bau}}_i }{ \varrho_i}  -\frac{\mathrm{E}^{\mathrm{bau}} }{ \varrho} +  \frac{2A }{ \varrho } \right)^2
			      ,
			      \\
			      \mathcal{W}^{\mathrm{mar}}_{\mathrm{C}, i}
			       & = \displaystyle \mathcal{W}^{\mathrm{tax}}_{\mathrm{C}, i} - \frac{3}{2}\mathcal{W}^{\mathrm{mar}}_{\mathrm{F}, i}  + \frac{ A\varrho_i}{\varrho}  \left( \frac{\mathrm{E}^{\mathrm{bau}}_i }{ \varrho_i}  -\frac{\mathrm{E}^{\mathrm{bau}} }{ \varrho} +  \frac{3A }{ 2\varrho } \right).
		      \end{align*}
		      This leads to  a total wealth for the financial intermediaries, the companies  and the regulator  given by
		      \begin{align*}
			      \mathcal{W}^{\mathrm{mar}}_{\mathrm{F}} & = \sum \frac{(\mathrm{E}_i^{\mathrm{mar}})^2}{\varrho_i} ,
			                                                &
			      \mathcal{W}^{\mathrm{mar}}_{\mathrm{C}} & =\mathcal{W}^{\mathrm{tax}}_{\mathrm{C}} - \frac{3}{2}  \mathcal{W}^{\mathrm{mar}}_{\mathrm{F}} +\frac{3}{2 \varrho}A^2,\\
			      \mathcal{W}^{\mathrm{mar}}_{\mathrm{R}}      & = \mathcal{W}^{\mathrm{tax}}_{\mathrm{R}} - \frac{A^2}{\varrho}.
		      \end{align*}
	\end{itemize}
\end{theorem}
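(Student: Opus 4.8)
The plan is to treat the three schemes in turn, in each case feeding the calibration constraint into the per-company formulas of Proposition~\ref{prop:optimalquant} and the equilibrium prices of Proposition~\ref{prop:price_exchange}, then aggregating over $i$. For the tax scheme I would start from the aggregate emission identity $\mathrm{E}^{\mathrm{tax}} = \mathrm{E}^{\mathrm{bau}} - \tau \varrho$ obtained by summing Proposition~\ref{prop:optimalquant}, impose the calibration $\mathrm{E}^{\mathrm{tax}} = A$, and solve for $\tau = (\mathrm{E}^{\mathrm{bau}} - A)/\varrho$. The individual and aggregate wealth expressions are then read off directly from Proposition~\ref{prop:optimalquant} and its summed form, while $\mathcal{W}^{\mathrm{tax}}_{\mathrm{R}} = \tau A$ is the regulator's tax revenue on the calibrated emission level $A$ computed just before the theorem. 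For the spot market scheme the key observation is that the equilibrium spot price of Proposition~\ref{prop:price_exchange}, namely $\bm{S} = (\mathrm{E}^{\mathrm{bau}} - A)/\varrho$, coincides exactly with the calibrated $\tau$. Every company then faces the uniform price $P_i = \bm{S} = \tau$, so the market wealth formula of Proposition~\ref{prop:optimalquant} is literally the tax wealth formula with $\tau$ in place of $P_i$; hence $\mathcal{W}^{\mathrm{mar}}_{\mathrm{C}, i} = \mathcal{W}^{\mathrm{tax}}_{\mathrm{C}, i}$ company by company, and $\mathcal{W}^{\mathrm{mar}}_{\mathrm{R}} = \bm{S} A = \tau A$ because the penalty term vanishes by Remark~\ref{rem:zero_loss}.

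The intermediated scheme carries the real content. I would substitute the intermediated equilibrium price $\bm{S} = (\mathrm{E}^{\mathrm{bau}} - 2A)/\varrho$ of Proposition~\ref{prop:price_exchange} into the quote $\bm{P_i}(S) = \tfrac{1}{2}(S + \mathrm{E}^{\mathrm{bau}}_i/\varrho_i)$ of Proposition~\ref{prop:quoted_price}, and rewrite the result using $\tau = (\mathrm{E}^{\mathrm{bau}} - A)/\varrho$ to obtain $\bm{P_i}(\bm{S}) = \tfrac{\tau}{2} + \tfrac{1}{2}(\mathrm{E}^{\mathrm{bau}}_i/\varrho_i - A/\varrho)$. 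The corresponding demand $\bm{\delta_i} = \tfrac{1}{2}(\mathrm{E}^{\mathrm{bau}}_i - \bm{S}\varrho_i)$ equals $\mathrm{E}^{\mathrm{mar}}_i$, and the intermediary's wealth $\bm{\delta_i}(\bm{P_i}(\bm{S}) - \bm{S})$ collapses to $(\mathrm{E}^{\mathrm{mar}}_i)^2/\varrho_i$ once one notes that both factors are proportional to $\mathrm{E}^{\mathrm{bau}}_i/\varrho_i - \bm{S}$, specifically $\bm{P_i}(\bm{S}) - \bm{S} = \tfrac{1}{2}(\mathrm{E}^{\mathrm{bau}}_i/\varrho_i - \bm{S})$ and $\bm{\delta_i} = \tfrac{\varrho_i}{2}(\mathrm{E}^{\mathrm{bau}}_i/\varrho_i - \bm{S})$. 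For the company wealth I would write $\mathcal{W}^{\mathrm{mar}}_{\mathrm{C}, i} - \mathcal{W}^{\mathrm{tax}}_{\mathrm{C}, i}$ as the difference $g(\tau) - g(\bm{P_i})$ of the single-variable loss function $g(x) = x\mathrm{E}^{\mathrm{bau}}_i - \tfrac{x^2}{2}\varrho_i$, factor it as $(\tau - \bm{P_i})\bigl[\mathrm{E}^{\mathrm{bau}}_i - \tfrac{\varrho_i}{2}(\tau + \bm{P_i})\bigr]$, and substitute the explicit values of $\bm{P_i}$ and $\tau$.

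The main obstacle is the last algebraic step: matching this factored difference to the stated form involving $-\tfrac{3}{2}\mathcal{W}^{\mathrm{mar}}_{\mathrm{F}, i}$ plus a linear correction. Introducing the shorthands $u_i = \mathrm{E}^{\mathrm{bau}}_i/\varrho_i$, $v = \mathrm{E}^{\mathrm{bau}}/\varrho$ and $w = A/\varrho$, one finds $\tau - \bm{P_i} = \tfrac{1}{2}(v - u_i)$, so the difference becomes $-\tfrac{\varrho_i}{8}(v-u_i)(3(v-u_i) - 4w)$, a quadratic in $v - u_i$ and $w$; expanding $-\tfrac{3}{2}\mathcal{W}^{\mathrm{mar}}_{\mathrm{F}, i} = -\tfrac{3\varrho_i}{8}(u_i - v + 2w)^2$ together with the claimed linear term $\varrho_i w(u_i - v + \tfrac{3w}{2})$ reproduces exactly the same quadratic, which is a routine but error-prone expansion I would carry out carefully. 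The aggregate identities then follow by summation: using $\sum \varrho_i u_i = \mathrm{E}^{\mathrm{bau}} = v\varrho$ and $\sum \varrho_i = \varrho$, the linear corrections collapse to $\tfrac{3}{2}A^2/\varrho$, yielding $\mathcal{W}^{\mathrm{mar}}_{\mathrm{C}} = \mathcal{W}^{\mathrm{tax}}_{\mathrm{C}} - \tfrac{3}{2}\mathcal{W}^{\mathrm{mar}}_{\mathrm{F}} + \tfrac{3}{2\varrho}A^2$, while $\mathcal{W}^{\mathrm{mar}}_{\mathrm{R}} = \bm{S} A$ minus $\mathcal{W}^{\mathrm{tax}}_{\mathrm{R}} = \tau A$ gives $-A^2/\varrho$ directly from $\bm{S} - \tau = -A/\varrho$.
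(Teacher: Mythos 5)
Your proposal is correct and follows essentially the same route as the paper: calibrate $\tau=(\mathrm{E}^{\mathrm{bau}}-A)/\varrho$, observe that the spot-market equilibrium price coincides with $\tau$, factor the wealth gap as $(\tau-\bm{P_i})\bigl[\mathrm{E}^{\mathrm{bau}}_i-\tfrac{\varrho_i}{2}(\tau+\bm{P_i})\bigr]$ (the paper writes this same quantity as $(\tau-\bm{P_i})(\mathrm{E}^{\mathrm{tax}}_i+\mathrm{E}^{\mathrm{mar}}_i)/2$), exploit that $\bm{\delta_i}$ and $\bm{P_i}-\bm{S}$ are both proportional to $\mathrm{E}^{\mathrm{bau}}_i/\varrho_i-\bm{S}$ to get $\mathcal{W}^{\mathrm{mar}}_{\mathrm{F},i}=(\mathrm{E}^{\mathrm{mar}}_i)^2/\varrho_i$, and aggregate. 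The only presentational difference is that the paper derives the result as a special case of the hybrid Theorem~\ref{thm:comparison_hybrid} and phrases the algebra in terms of $\mathrm{E}^{\mathrm{tax}}_i$ and $\mathrm{E}^{\mathrm{mar}}_i$ rather than your shorthands $u_i,v,w$; the expansions match.
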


This follows as special cases of Theorem~\ref{thm:comparison_hybrid}.
In view of these results:
\begin{itemize}[fullwidth]
	\item In a perfect competition case where all the companies  are active on the auction market, there is no difference between the tax and the market scheme.
	\item The regulatory authority collects  more wealth under the tax scheme than under market schemes.
	\item In the purely intermediated market scheme, the aggregate wealth of the companies is lower than under the tax scheme, namely
	      \begin{equation*}
		      \mathcal{W}^{\mathrm{mar}}_{\mathrm{C}}
		      = \mathcal{W}^{\mathrm{tax}}_{\mathrm{C}} + \tfrac{3}{2}\underbrace{\left(\tfrac{A^2}{\varrho} - \sum \tfrac{(\mathrm{E}^{\mathrm{mar}}_i)^2}{\varrho_i}\right)}_{\leq 0}
		      \leq \mathcal{W}^{\mathrm{tax}}_{\mathrm{C}},
	      \end{equation*}
	      where the inequality  follows from Jensen's inequality.
	      The corresponding inequality, however, does not necessarily hold at the level of each  individual company.
	      That is, the wealth of a given company under the market scheme may be either higher or lower than under the tax scheme, depending on its emission reduction factor.
	      More precisely, company $i$ benefits from the market scheme relatively to the tax scheme if its carbon emission, adjusted by its technological and production factor $\varrho_i$, exceed one fourth of the corresponding economy-wide average, i.e.
	      \begin{equation*}
		      \frac{\mathrm{E}^{\mathrm{mar}}_i}{\varrho_i}  >  \frac{A}{4 \varrho}.
	      \end{equation*}
	      See Section~\ref{sec:numeric} for numerical illustrations.
\end{itemize}

The following corollary distills the main implications of Theorem~\ref{thm:comparison} in terms of aggregate economic output (GDP). 
We define the GDP of the real economy as the total wealth, measured as the sum of the wealth of companies, financial intermediaries, and the regulator: 
\begin{equation*}
	\mathrm{GDP}^{\cdot} = \mathcal{W}^{\cdot}_{\mathrm{C}} + \mathcal{W}^{\cdot}_{\mathrm{F}} + \mathcal{W}^{\cdot}_{\mathrm{R}},
\end{equation*}
where $\mathcal{W}^{\bm{\mathrm{\cdot}}}_{\mathrm{C}}$,  $\mathcal{W}_{\mathrm{F}}^{\bm{\mathrm{\cdot}}}$ and $\mathcal{W}^{\bm{\mathrm{\cdot}}}_{\mathrm{R}}$  denote the aggregate wealth of the companies, of the regulatory authority, and of the financial intermediaries under the market ($\cdot =\mathrm{mar}$) or tax scheme ($\cdot=\mathrm{tax}$).

We obtain as a special case of Corollary~\ref{cor:jensen_hybrid}:
\begin{corollary}\label{cor:jensen}
	Under the purely intermediated market scheme, the GDP of the real economy satisfies
	\begin{align*}
		\mathrm{GDP}^{\mathrm{mar}}
		 & = \mathrm{GDP}^{\mathrm{tax}} + \tfrac{1}{2}\underbrace{\left(  \tfrac{A^2}{\varrho}-\sum \tfrac{(\mathrm{E}^{\mathrm{mar}}_i)^2}{\varrho_i} \right)}_{\leq 0}.
	\end{align*}
	Moreover, the combined wealth of the companies and of the regulator is lower under the market scheme than under the tax scheme, i.e.
	\begin{align*}
			 	& \mathcal{W}^{\mathrm{mar}}_{\mathrm{C}} + \mathcal{W}^{\mathrm{mar}}_{\mathrm{R}}
			      =  \mathcal{W}^{\mathrm{tax}}_{\mathrm{C}} + \mathcal{W}^{\mathrm{tax}}_{\mathrm{R}}+  \underbrace{ \tfrac{3}{2} \left(  \tfrac{A^2}{\varrho}-\sum \tfrac{(\mathrm{E}^{\mathrm{mar}}_i)^2}{\varrho_i} \right) - \frac{A^2}{\varrho} }_{<0}.
	\end{align*}
\end{corollary}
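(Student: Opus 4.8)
The plan is to treat the corollary as an essentially algebraic consequence of the three closed-form wealth identities supplied by Theorem~\ref{thm:comparison}, with a single genuine inequality entering only at the end. Under the purely intermediated scheme those identities read $\mathcal{W}^{\mathrm{mar}}_{\mathrm{F}} = \sum (\mathrm{E}^{\mathrm{mar}}_i)^2/\varrho_i$, $\mathcal{W}^{\mathrm{mar}}_{\mathrm{C}} = \mathcal{W}^{\mathrm{tax}}_{\mathrm{C}} - \tfrac32\mathcal{W}^{\mathrm{mar}}_{\mathrm{F}} + \tfrac{3}{2\varrho}A^2$ and $\mathcal{W}^{\mathrm{mar}}_{\mathrm{R}} = \mathcal{W}^{\mathrm{tax}}_{\mathrm{R}} - A^2/\varrho$. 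Since no financial intermediary operates under the tax scheme, I would first record that $\mathcal{W}^{\mathrm{tax}}_{\mathrm{F}} = 0$, so that $\mathrm{GDP}^{\mathrm{tax}} = \mathcal{W}^{\mathrm{tax}}_{\mathrm{C}} + \mathcal{W}^{\mathrm{tax}}_{\mathrm{R}}$.

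For the GDP identity I would substitute the three formulas into $\mathrm{GDP}^{\mathrm{mar}} = \mathcal{W}^{\mathrm{mar}}_{\mathrm{C}} + \mathcal{W}^{\mathrm{mar}}_{\mathrm{F}} + \mathcal{W}^{\mathrm{mar}}_{\mathrm{R}}$ and collect terms. The intermediary wealth enters $\mathcal{W}^{\mathrm{mar}}_{\mathrm{C}}$ with coefficient $-\tfrac32$ and appears on its own with coefficient $+1$, giving net coefficient $-\tfrac12$ on $\mathcal{W}^{\mathrm{mar}}_{\mathrm{F}}$; similarly the contributions $+\tfrac{3}{2\varrho}A^2$ and $-A^2/\varrho$ combine to $+\tfrac{1}{2\varrho}A^2$. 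This yields $\mathrm{GDP}^{\mathrm{mar}} = \mathrm{GDP}^{\mathrm{tax}} + \tfrac12(A^2/\varrho - \sum (\mathrm{E}^{\mathrm{mar}}_i)^2/\varrho_i)$. For the companies-plus-regulator identity I would simply add $\mathcal{W}^{\mathrm{mar}}_{\mathrm{C}}$ and $\mathcal{W}^{\mathrm{mar}}_{\mathrm{R}}$, rewrite $-\tfrac32\mathcal{W}^{\mathrm{mar}}_{\mathrm{F}} + \tfrac{3}{2\varrho}A^2$ as $\tfrac32(A^2/\varrho - \sum (\mathrm{E}^{\mathrm{mar}}_i)^2/\varrho_i)$, and keep the residual $-A^2/\varrho$ separate, matching the stated expression.

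The only non-mechanical step is fixing the sign of the bracketed quantity, and here I would invoke the clearing identity $A = \sum \bm{\delta_i} = \sum \mathrm{E}^{\mathrm{mar}}_i$ together with $\varrho = \sum \varrho_i$. Cauchy--Schwarz in the form $(\sum \mathrm{E}^{\mathrm{mar}}_i)^2 = (\sum (\mathrm{E}^{\mathrm{mar}}_i/\sqrt{\varrho_i})\sqrt{\varrho_i})^2 \le (\sum (\mathrm{E}^{\mathrm{mar}}_i)^2/\varrho_i)(\sum \varrho_i)$ gives, after dividing by $\varrho$, exactly $A^2/\varrho \le \sum (\mathrm{E}^{\mathrm{mar}}_i)^2/\varrho_i$; equivalently this is Jensen's inequality for $x \mapsto x^2$ under the weights $\varrho_i/\varrho$ evaluated at $x_i = \mathrm{E}^{\mathrm{mar}}_i/\varrho_i$. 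This proves the bracket is $\le 0$, hence the first claim. For the second claim the correction is $\tfrac32(\text{bracket}) - A^2/\varrho$: the first summand is $\le 0$ and the second is strictly negative because $A,\varrho > 0$, so the total is $< 0$.

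I do not anticipate any real obstacle, since the argument is substitution plus one textbook inequality. The only points demanding care are the bookkeeping of the coefficients on $\mathcal{W}^{\mathrm{mar}}_{\mathrm{F}}$ and $A^2/\varrho$ when assembling the GDP sum, and noting that the \emph{strict} sign in the second claim comes from the standalone $-A^2/\varrho$ term rather than from the (possibly vanishing) Jensen gap.
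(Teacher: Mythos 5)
Your proposal is correct and follows essentially the same route as the paper: the corollary is obtained by direct substitution of the three wealth identities from Theorem~\ref{thm:comparison} (the paper derives it as the special case of Corollary~\ref{cor:jensen_hybrid} where all companies are intermediated), and the sign of the bracket is exactly the Jensen/Cauchy--Schwarz inequality applied with weights $\varrho_i/\varrho$ and the clearing identity $A=\sum \mathrm{E}^{\mathrm{mar}}_i$. Your observation that the strict inequality in the second claim comes from the standalone $-A^2/\varrho$ term, not from the possibly vanishing Jensen gap, is a point the paper leaves implicit.
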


According to the above corollary, the aggregate wealth of intermediaries, $\mathcal{W}^{\mathrm{mar}}_\mathrm{F}$, represents a diversion of value away from companies and the regulator.

\begin{example}[Identical Companies]\label{eg:symetric}
	If  all companies are identical, i.e. all the company-specific parameters share the same values,  then under a purely intermediated market scheme,  Theorem~\ref{thm:comparison} implies
	\begin{equation*}
		\bm{S}       =  \tau - \frac{A}{N \varrho_1 },
		\quad
		\bm{P_i}(\bm{S})  = \tau = \frac{\mathrm{E}^{\mathrm{bau}}}{N \varrho_1} - \frac{A}{N \varrho_1}, \quad \text{with} \quad \varrho_i =\varrho_1, 
	\end{equation*}
	and
	\begin{align*}
		& \mathcal{W}^{\mathrm{mar}}_{\mathrm{C}, i}  =  \mathcal{W}^{\mathrm{tax}}_{\mathrm{C}, i} = \mathcal{W}^{\mathrm{tax}}_{\mathrm{C}, 1}, \quad 
		\qquad  
		\mathcal{W}^{\mathrm{mar}}_{\mathrm{F}, i}    = \displaystyle \frac{A }{N^2 \varrho_1 },
		\\
		& \mathcal{W}^{\mathrm{mar}}_\mathrm{F}  = N  \mathcal{W}^{\mathrm{mar}}_{\mathrm{F}, 1}, 
		\qquad 
		\mathcal{W}^{\mathrm{mar}}_\mathrm{R}  = \mathcal{W}^{\mathrm{tax}}_\mathrm{R}  -  \mathcal{W}^{\mathrm{mar}}_\mathrm{F},
	\end{align*}
	with resulting wealth of the companies and of the regulator such that
	\begin{equation*}
		\mathcal{W}^{\mathrm{mar}}
		  = \mathcal{W}^{\mathrm{tax}} - \mathcal{W}^{\mathrm{mar}}_\mathrm{F} < \mathcal{W}^{\mathrm{tax}}.
	\end{equation*}
	In this symmetric setup, each company faces the same carbon price $\bm{P_i} = \tau$, making the economic outcomes under tax and market scheme identical from the companies' perspective.
	However, under the purely intermediated market scheme, the financial intermediaries capture a portion of the wealth that would accrue entirely to the regulator under the tax scheme.
\end{example}

\section{Numerical Case Study and Discussion}\label{sec:numeric}
Since the tax and spot market schemes yield identical outcomes, we use a simple numerical case study to illustrate how the purely intermediated market scheme diverges from the tax scheme in terms of both wealth and emissions.
Note that the wealth of the green technology provider associated with company $i$ corresponds  to $ c_i\left(\bm{ a^{\mathrm{tax}} }_i, \bm{ q^{\mathrm{tax}} }_i \right)$ under the tax scheme and $c_i\left(\bm{ a^{\mathrm{mar} } }_i, \bm{ q^{\mathrm{mar}} }_i \right)$ under the market scheme. 
	In order to compare the level of investment in the green transition under each policy scheme, we include the wealth of tech providers
     $\sum c_i   \left(\bm{ a^{\mathrm{tax}} }_i, \bm{ q^{\mathrm{tax}} }_i \right)$ vs.\ $\sum c_i \left(\bm{ a^{\mathrm{mar} } }_i, q^{\mathrm{mar}}_i \right)$
    in our numerical analysis.
\begin{figure}[htp!]
	\centering
	\includegraphics[width=\textwidth]{./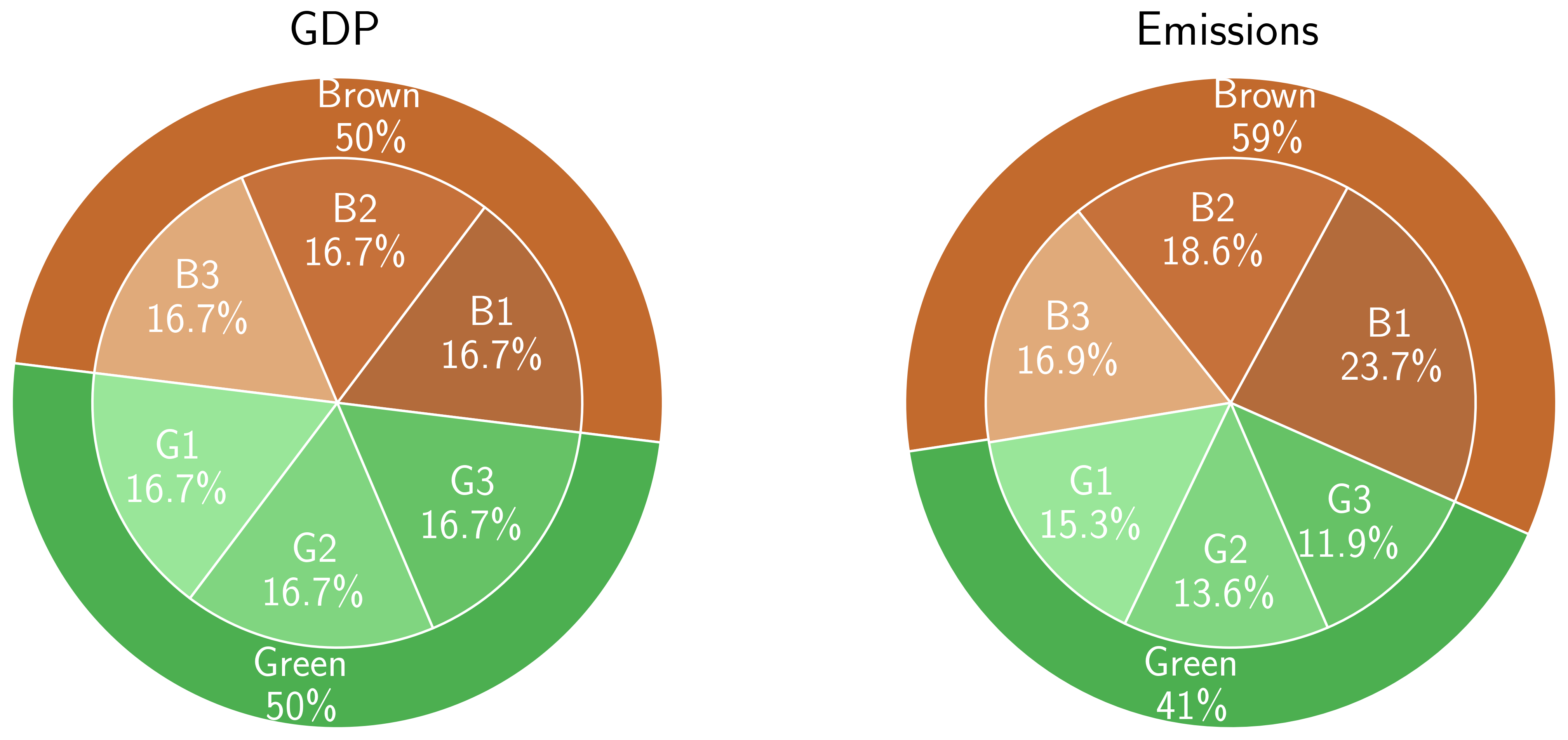}
	\caption{Business-as-usual GDP shares $\left(\mathcal{W}^{\mathrm{bau}}_{\mathrm{C}, i}/ \mathcal{W}^{\mathrm{bau}}_{\mathrm{C}}\right)\!100$ and emission shares $\left(\mathrm{E}^{\mathrm{bau}}_{i}/ \mathrm{E}^{\mathrm{bau}}\right)\!100$ for each company $i$.}
	\label{fig:pre_carbon_plot}
\end{figure}

Throughout the analysis, we consider $6$ groups of companies, each contributing equally to the GDP in the  business-as-usual  case.
However, these groups differ along two key dimensions:
\begin{itemize}
	\item \textbf{Carbon emission intensity}:
	      Companies are categorized according to their carbon intensity from the most carbon intensive (\emph{brown}, categories B1 to B3) to the least intensive (\emph{green}, categories G1 to G3).
	      Formally, the carbon intensity of company $i$ is defined as $\kappa_i = \mathrm{E}^{\mathrm{bau}}_i / \mathcal{W}^{\mathrm{bau}}_{\mathrm{C}, i}$ representing the amount of carbon per monetary unit of GDP contribution.
	\item \textbf{Green costs}:
	      The technological costs of reducing emissions,  $\gamma_i$, are assumed to be disconnected from  companies' carbon intensities.
	      That is, a carbon intensive (brown) company may face relatively low abatement costs, while a less carbon intensive (green) company might incur high marginal costs to further reduce an already minimal carbon footprint.
	      To reflect this heterogeneity, we assign relatively high abatement cost parameters $\gamma_i$ to companies in categories B2 and G2.
\end{itemize}

The overall setting is illustrated in Figure~\ref{fig:pre_carbon_plot}, while the exact parameters of the model are displayed in Table~\ref{tab:parameters} as to align broadly to the data from the EU Emissions Trading System (ETS) viewer\footnote{
	See \url{https://www.eea.europa.eu/en.}
}.
Since our model is static and simplified, we consider a targeted emission reductions of $40\%$, coinciding roughly with the amount of certificates issued by the EU recently, as reported in \citet{EEX2025}.\footnote{Since the number of certificates $A$ is fixed and coincides with the overall emissions in the economy after abatement, by Proposition~\ref{prop:optimalquant}, we can derive the tax rate $\tau$ as a function of $A$.}
\begin{table}[htp!]
    \centering
    \caption{Parameterization of model inputs.
        The linear coefficients $\pi^0_i$ of each firm's profit function are expressed in euros.
        The technology cost factors $\gamma_i$ are expressed in €$/\text{ton}^2$, while the quadratic coefficients $\pi^1_i$ are given in €$/\text{ton}^2$; both quantities are reported in units of $10^{-6}$.}
    \label{tab:parameters}
    \begin{tabular}{ @{} l rrr @{} }
      \toprule
      & \multicolumn{3}{c}{Parameters} \\
      \cmidrule(lr){2-4}
      \multicolumn{1}{c}{Group} 
      & \multicolumn{1}{c}{$\pi^0_i$}
      & \multicolumn{1}{c}{$\pi^1_i$}
      & \multicolumn{1}{c}{$\gamma_i$} \\
      \midrule
    \quad \quad B1 & 475.65 & 1.13 & 10 \\
    \quad \quad B2 & 605.38 & 1.83 & 30 \\
    \quad \quad B3 & 665.91 & 2.22 & 5 \\
    \quad \quad G1 & 739.90 & 2.74 & 10 \\
    \quad \quad G2 & 832.39 & 3.46 & 30 \\
    \quad \quad G3 & 951.31 & 4.52 & 5 \\
      \bottomrule
    \end{tabular}
\end{table}

\subsection{Wealth Impact}
Table~\ref{tab:value} and Figure~\ref{fig:gdp_tax_vs_fi} respectively compare the GDP contributions (each economic agent's wealth) of all actors and their relative shares in the business-as-usual case and under  the tax and purely intermediated market schemes.
We can draw the following conclusions:
\begin{itemize}
	\item \textbf{Tax scheme}: The aggregate GDP contribution of all actors, including those supplying green technologies, is approximately 22\% lower than in the business-as-usual case, decreasing from €600 billion to €467 billion (see Table~\ref{tab:value}).
		      As illustrated in Figure~\ref{fig:gdp_tax_vs_fi}, while the regulator recaptures a significant share of the GDP under the tax scheme, green companies become the main GDP contributors relative to brown companies.
	\item \textbf{Tax vs.\ market scheme}: As shown in Table~\ref{tab:value}, a purely intermediated market scheme reduces the overall GDP contribution with respect to the business-as-usual case by $23\%$, decreasing from €600 billion to €462 billion.
	      However, the GDP shares among the actors shift drastically.
	      Financial intermediaries collect, mainly at the expense of the regulator, a large share of the GDP.
	      Figure~\ref{fig:gdp_tax_vs_fi} suggests that the relative GDP share of brown companies remains roughly the same in the tax and market scheme.
	      Green companies are relatively better off compared to brown companies under both schemes, see Table~\ref{tab:value}.
	      As visible in the table, regardless of their carbon intensity, most of the brown companies are more profitable under the market scheme,  while green ones are more  profitable under the tax scheme.
\end{itemize}
\begin{figure}[htp!]
	\centering
	\includegraphics[width=1.0\textwidth]{./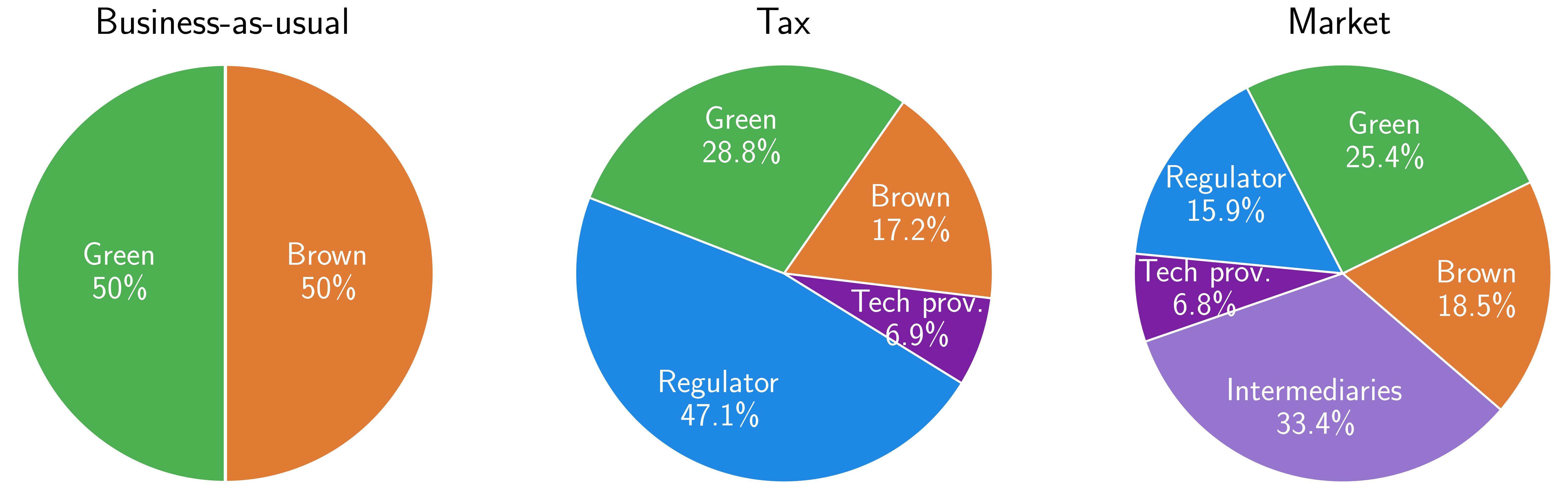}
	\caption{The GDP shares of the different actors in the business-as-usual case and under the tax and purely intermediated market schemes.}
	\label{fig:gdp_tax_vs_fi}
\end{figure}
\begin{table}[htp!]
    \centering
    \begin{threeparttable}
        \caption{GDP and emission contributions of  each actor in the business-as-usual case and under  the tax and market schemes.}
        \label{tab:value}
        \begin{tabular}{@{}lrrrcrrr@{}}
            \toprule
            & \multicolumn{3}{c}{ \makecell{ GDP contributions\\ (billion  €) } } & & \multicolumn{3}{c}{ \makecell{Emission contributions \\ (million tons)} } \\
            \cmidrule{2-4} \cmidrule{6-8}
            Group & $\bm{\mathrm{bau}}$ & $\bm{\mathrm{tax}}$ & $\bm{\mathrm{mar}}$ & & $\bm{\mathrm{bau}}$ & $\bm{\mathrm{tax}}$ & $\bm{\mathrm{mar}}$ \\
            \midrule
            Brown & 300 & 80 & 86 & & 1,051 & 363 & 411 \\
            \quad B1 & 100 & 17 & 23 & & 420 & 115 & 159 \\
            \quad B2 & 100 & 25 & 22 & & 330 & 150 & 135 \\
            \quad B3 & 100 & 38 & 41 & & 300 & 98 & 116 \\
            Green & 300 & 134 & 117 & & 721 & 345 & 298 \\
            \quad G1 & 100 & 38 & 35 & & 270 & 126 & 111 \\
            \quad G2 & 100 & 41 & 27 & & 240 & 140 & 103 \\
            \quad G3 & 100 & 55 & 56 & & 210 & 79 & 83 \\
            Regulator & --- & 220 & 73 & & --- & --- & --- \\
            Tech providers & --- & 32 & 31 & & --- & --- & --- \\
            Financial intermediaries & --- & --- & 154 & & --- & ---& --- \\
            \midrule
            \textbf{Total} & 600 & 467 & 462 & & 1,772 & 709 & 709 \\
            \bottomrule
            \end{tabular}
        \end{threeparttable}
    \end{table}

\subsection{Emission Contribution Impact}
Table~\ref{tab:value} and Figure~\ref{fig:gdp_tax_vs_fi} also compare, respectively, the emission contributions of compliant companies and their relative shares in the business-as-usual case and under the tax and purely intermediated market schemes.
Under our comparison framework, tax and market schemes achieve the same total emissions.
We can draw the following conclusions:
\begin{itemize}
	\item \textbf{Tax scheme}: As shown in Table~\ref{tab:value} and Figure~\ref{fig:emission_tax_vs_fi}, under the tax scheme, brown companies reduce their carbon emissions by a greater amount than green companies, with emissions declining from 1,051 to 363 million tons for brown companies and from 721 to 345 million tons for green companies.
	Brown companies with lower emission reduction  costs (i.e. B1 and B3) do reduce their emissions more than the others, see Table~\ref{tab:value}.
	\item \textbf{Tax vs.\ market scheme}: Under the purely intermediated market scheme, the emission shares of each category (Green vs.\ Brown)  are similar as in the  business-as-usual  case, see  Figure~\ref{fig:emission_tax_vs_fi}.
	      The table and figure also show that the emission values (or shares) of brown companies are substantially higher than those of green companies.
	      In other words, under the purely intermediated market scheme, brown companies are less incentivized than green companies to reduce their emissions; by contrast, under the tax scheme, brown companies emerge as the primary drivers of emission reduction.
\end{itemize}
\begin{figure}[htp!]
	\centering
	\includegraphics[width=1.0\textwidth]{./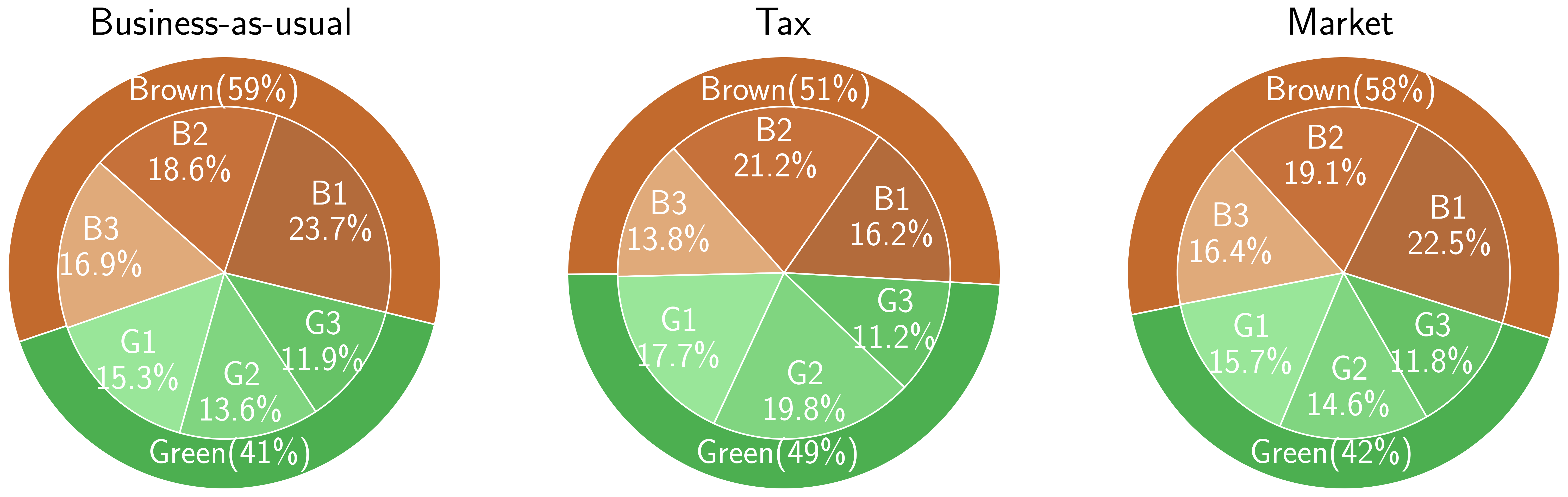}
	\caption{The emission shares of companies in the business-as-usual case and under the tax and  purely intermediated market schemes.}
	\label{fig:emission_tax_vs_fi}
\end{figure}

\section{Carbon pricing under emissions' uncertainty}\label{sec:princing_random}

In this section, we assume that emissions during the production period are stochastic rather than deterministic.
This is notably the case regarding energy providers, for which operational considerations and resource availability are highly contingent on weather conditions.
This  uncertainty  influences the demand $\delta$ of the companies for certificates, and therefore the resulting equilibrium.

To model this uncertainty, we assume that instead of a fixed unit of carbon per unit of production, the reference level is now a positive random variable $X$  with a continuous distribution and expectation $\mathbb{E}[X]=1$.
The company's wealth function will accordingly be modified into
\begin{align}
	\bm{\mathrm{tax}}: &  &  & \pi(q) - \mathbb{E} \left[ \frac{\gamma}{2} \left( (1 - e^{-a})  q X  \right)^2 \right] - \tau \mathbb{E}\left[ e^{-a}  q  X \right]
	\notag                                                                                                                    \\
	                   &  &  & =\pi(q) - \underbrace{\frac{\gamma \sigma^2}{2} q^2 (1 - e^{-a})^2}_{=c(q,a)} - \tau e^{-a}  q  \label{eq:comp_wealth_taxR}                                                  \\
	\bm{\mathrm{mar}}: &  &  & \pi(q) - \mathbb{E} \left[ \frac{\gamma}{2} \left( (1 - e^{-a}) q X  \right)^2 \right] - \delta P- \lambda\, \mathbb{E} \left[ \left(e^{-a}  q  X - \delta \right)^+ \right]
	\notag                                                                                                                                                             \\
	                   &  &  & =  \pi(q) -c(q, a) - \delta P - \lambda\, \mathbb{E} \left[ \left(e^{-a}  q  X - \delta \right)^+ \right],\notag                        
\end{align}
where $\sigma^2 =\mathbb{E}[X^2] = 1 + \mathrm{Var}(X)$.
From this, it turns out that the cost $c(q,a)$ is similar as in the deterministic case, but scaled by a factor $\sigma^2$ related to the variance of $X$.
Hence, the tax objective function of the companies does not change much.
As for the market case, the penalization term similar to a call option price, will shift the optimal values of the resulting optimization problem by factors related to the expected shortfall and value-at-risk of $X$.
Recall that the value-at-risk and expected shortfall of $X$ at confidence level $0<\epsilon \leq 1$ satisfy\footnote{For a more detailed discussion of the properties of these risk measures and the connections between them, we refer the reader to  Example 2.3 of \citet{bental2007}.
}
\begin{equation}
	\begin{split}
		\mathbb{V}{\rm a}\mathbb{R}_{\epsilon}(X) & := \inf\{ y \in \mathbb{R} \colon \mathbb{P}[X \leq y] \geq 1- \epsilon \},
		\\
		\mathbb{ES}_{\epsilon}(X)  & := \frac{1}{\epsilon}\int_0^\epsilon \mathbb{V}{\rm a}\mathbb{R}_u(X) du                                                                                  \\
		                         & = \inf_{m \in \mathbb{R} } \left\{ m + \tfrac{1}{\epsilon}\mathbb{E}\left[ (X - m)^+ \right] \right\}                                     \\
		                         & = \mathbb{V}{\rm a}\mathbb{R}_{\epsilon}(X) + \tfrac{1}{\epsilon} \mathbb{E}\left[ \left(X - \mathbb{V}{\rm a}\mathbb{R}_{\epsilon}(X)\right)^+ \right].\label{eq:oce_conn}
	\end{split}
\end{equation}

\begin{remark}
	Under the market scheme, it is natural to present the following proposition in terms of expected shortfall and value-at-risk of $X$.
	In particular, the optimization problem with respect to the certificates demand $\delta$ can be reformulated as the value of a call option, i.e. minimizing over $\delta$
	\begin{equation*}
		\delta P + \lambda \mathbb{E} \left[ \left(e^{-a}  q  X - \delta \right)^+ \right] = \delta P + \lambda e^{-a} q\ \mathbb{E}\left[ \left(X - \frac{\delta}{e^{-a} q} \right)^+ \right]  = \delta P + \lambda e^{-a} q\ \mathbb{C},
	\end{equation*}
	where $\mathbb{C} = \mathbb{E}\left[ \left(X - \frac{\delta}{e^{-a} q} \right)^+ \right] $ is the standardized price of a call option on $X$ and strike price $\delta / (e^{-a} q)$.
	Hence, first order conditions are related to the ``delta'' of this call option.
\end{remark}
The following proposition mirrors the results of Proposition~\ref{prop:optimalquant} in the case of random carbon emissions where we adjust $\varrho$ by the variance factor
\begin{equation*}
	\hat{\varrho} \colon  = \frac{1}{\pi^1} + \frac{1}{\gamma \sigma^2} < \varrho.
\end{equation*}

\begin{proposition}\label{prop:optimalquant_random}
	Let $0 < P < \lambda$ be given, representing the unit certificate price faced by the company.
	Under the tax scheme $(\mathrm{tax})$ and the market schemes $(\mathrm{mar})$, it holds
	\begin{itemize}[fullwidth]
		\item $\bm{\mathrm{tax}}$: the optimal production $\bm{ q^{\mathrm{tax}} }$ and  emissions reduction rate $\bm{ a^{\mathrm{tax}} }$ are given by
		      \begin{equation*}
			      \bm{ q^{\mathrm{tax}} }     = \bm{ q^{\mathrm{bau}}  }- \tau\frac{1}{\pi^1},
			      \qquad
			      e^{-\bm{ a^{\mathrm{tax}} }}  = \displaystyle \frac{\mathrm{E}^{\mathrm{bau}} - \tau \hat{\varrho}}{\bm{ q^{\mathrm{tax}} }},
		      \end{equation*}
		      with resulting (expected)  carbon emissions  $\mathrm{E}^{\mathrm{tax}}$ and wealth $\mathcal{W}^{\mathrm{tax}}_{\mathrm{C}}$
		      \begin{equation*}
			      \mathrm{E}^{\mathrm{tax}}       = \mathrm{E}^{\mathrm{bau}} - \tau \hat{\varrho},
			      \qquad
			      \mathcal{W}^{\mathrm{tax}}_{\mathrm{C}}  = \mathcal{W}^{\mathrm{bau}}_{\mathrm{C}} - \tau \left(\mathrm{E}^{\mathrm{bau}} - \frac{\tau}{2} \hat{\varrho}\right).
		      \end{equation*}

		\item $\bm{\mathrm{mar}}$: the optimal production $\bm{ q^{\mathrm{mar}} }$, emissions reduction rate $\bm{ a^{\mathrm{mar} } }$, and certificate demand $\delta$ are given by
		      \begin{align*}
			      \bm{ q^{\mathrm{mar}} }       & = \displaystyle \bm{ q^{\mathrm{bau}}  }- P \, \mathbb{ES}_{P/\lambda}(X) \frac{1 }{\pi^1},
			      \\
			      e^{-\bm{ a^{\mathrm{mar} } }} & = \displaystyle \frac{ \mathrm{E}^{\mathrm{bau}} - P\, \mathbb{ES}_{P/\lambda}(X) \hat{\varrho}}{ \bm{ q^{\mathrm{mar}} }},             \\
			      \bm{\delta}                      & =  \displaystyle \mathbb{V}{\rm a}\mathbb{R}_{P/\lambda}(X) \left[ \mathrm{E}^{\mathrm{bau}} -  P\, \mathbb{ES}_{P/\lambda}(X) \hat{\varrho} \right],
		      \end{align*}
		      with resulting (expected) carbon emissions $\mathrm{E}^{\mathrm{mar} }$ and wealth $\mathcal{W}^{\mathrm{mar}}_{\mathrm{C}}$
		      \begin{align*}
			      \mathrm{E}^{\mathrm{mar} }      & = \mathrm{E}^{\mathrm{bau}} - P\, \mathbb{ES}_{P/\lambda}(X) \hat{\varrho},\\
			      \mathcal{W}^{\mathrm{mar}}_{\mathrm{C}} & = \mathcal{W}^{\mathrm{bau}}_{\mathrm{C}} - P\,  \mathbb{ES}_{P/\lambda}(X) \left(\mathrm{E}^{\mathrm{bau}} - \frac{P\, \mathbb{ES}_{P/\lambda}(X)}{2} \hat{\varrho} \right).
		      \end{align*}
	\end{itemize}
\end{proposition}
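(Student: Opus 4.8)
The plan is to reduce both schemes to the already-proved deterministic Proposition~\ref{prop:optimalquant} by means of two substitutions: the variance scaling $\gamma\to\gamma\sigma^2$ in the cost, and (for the market case) an \emph{effective tax rate} obtained after optimizing out the certificate demand. For the tax scheme I would first note that the stochastic objective~\eqref{eq:comp_wealth_taxR} is structurally identical to the deterministic tax wealth~\eqref{eq:tax_wealth} once $\gamma$ is replaced by $\gamma\sigma^2$, since $\mathbb{E}[((1-e^{-a})qX)^2]=\sigma^2 q^2(1-e^{-a})^2$ and $\mathbb{E}[e^{-a}qX]=e^{-a}q$ because $\mathbb{E}[X]=1$. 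Under this replacement the emission factor $\varrho=1/\pi^1+1/\gamma$ becomes $\hat\varrho=1/\pi^1+1/(\gamma\sigma^2)$, and every tax-scheme formula for $\bm{q^{\mathrm{tax}}}$, $\bm{a^{\mathrm{tax}}}$, $\mathrm{E}^{\mathrm{tax}}$ and $\mathcal{W}^{\mathrm{tax}}_{\mathrm{C}}$ transcribes verbatim from Proposition~\ref{prop:optimalquant} with $\varrho\to\hat\varrho$, the expected emission being $e^{-\bm{a^{\mathrm{tax}}}}\bm{q^{\mathrm{tax}}}$ again by $\mathbb{E}[X]=1$.

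For the market scheme the decisive step is to minimize over the certificate demand $\delta$ first, at fixed production $q$ and abatement $a$. Writing $b:=e^{-a}q$ for the scaled emission, the penalty-related cost $\delta P+\lambda\,\mathbb{E}[(bX-\delta)^+]$ is convex in $\delta$; differentiating and using that $X$ has a continuous distribution gives $\tfrac{d}{d\delta}\mathbb{E}[(bX-\delta)^+]=-\mathbb{P}[X>\delta/b]$, whence the first-order condition reads $P=\lambda\,\mathbb{P}[X>\delta/b]$, i.e.\ $\mathbb{P}[X\leq\delta/b]=1-P/\lambda$. By the definition of value-at-risk this pins the minimizer to $\bm\delta=b\,\mathbb{V}{\rm a}\mathbb{R}_{P/\lambda}(X)$. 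Substituting back and invoking the connection~\eqref{eq:oce_conn} at level $\epsilon=P/\lambda$, the crucial identity $\lambda\,\mathbb{E}[(X-\mathbb{V}{\rm a}\mathbb{R}_{P/\lambda}(X))^+]=P\,(\mathbb{ES}_{P/\lambda}(X)-\mathbb{V}{\rm a}\mathbb{R}_{P/\lambda}(X))$ combines with the $\delta P$ term so that the minimized penalty cost collapses to $b\,P\,\mathbb{ES}_{P/\lambda}(X)=e^{-a}q\,P\,\mathbb{ES}_{P/\lambda}(X)$.

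With this reduction the market wealth becomes $\pi(q)-c(q,a)-e^{-a}q\,P\,\mathbb{ES}_{P/\lambda}(X)$, which is exactly the stochastic tax objective~\eqref{eq:comp_wealth_taxR} with the effective tax rate $\tau$ replaced by $P\,\mathbb{ES}_{P/\lambda}(X)$. Applying the tax-scheme result just established (with this effective rate and with $\varrho\to\hat\varrho$) then delivers the stated formulas for $\bm{q^{\mathrm{mar}}}$, $\bm{a^{\mathrm{mar}}}$, $\mathrm{E}^{\mathrm{mar}}$ and $\mathcal{W}^{\mathrm{mar}}_{\mathrm{C}}$, while the demand formula follows by inserting the optimal $\bm{a^{\mathrm{mar}}}$, $\bm{q^{\mathrm{mar}}}$ into $\bm\delta=e^{-\bm{a^{\mathrm{mar}}}}\bm{q^{\mathrm{mar}}}\,\mathbb{V}{\rm a}\mathbb{R}_{P/\lambda}(X)$.

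The main obstacle is the $\delta$-minimization and its clean algebraic collapse. One must justify differentiating under the expectation and the formula $-\mathbb{P}[X>\delta/b]$ using the continuity of the law of $X$, identify the critical point as the global minimizer via convexity, and then verify that~\eqref{eq:oce_conn} produces precisely the cancellation down to $b\,P\,\mathbb{ES}_{P/\lambda}(X)$. Once this is settled the emergence of $\mathbb{V}{\rm a}\mathbb{R}_{P/\lambda}(X)$ in the demand and $\mathbb{ES}_{P/\lambda}(X)$ in the effective rate is transparent, and everything else is a mechanical transcription of the deterministic proof.
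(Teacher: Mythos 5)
Your proposal is correct and follows essentially the same route as the paper: reduce the tax case to the deterministic Proposition~\ref{prop:optimalquant} via $\gamma\mapsto\gamma\sigma^2$ (hence $\varrho\mapsto\hat\varrho$), then optimize out $\delta$ in the market case to obtain the effective tax rate $P\,\mathbb{ES}_{P/\lambda}(X)$ and the demand $e^{-a}q\,\mathbb{V}{\rm a}\mathbb{R}_{P/\lambda}(X)$. The only cosmetic difference is that you re-derive the minimizer and the collapse to $e^{-a}q\,P\,\mathbb{ES}_{P/\lambda}(X)$ from the first-order condition, whereas the paper gets both at once by rescaling $\delta=e^{-a}q\,m$ and invoking the variational representation of expected shortfall in~\eqref{eq:oce_conn} together with its positive homogeneity.
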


\begin{proof}
	Under the tax scheme, the company's expected wealth function~\eqref{eq:comp_wealth_taxR} is
	\begin{align*}
		f(q,a) & =  \pi^0 q - \frac{\pi^1}{2} q^2 - \frac{\gamma \sigma^2 }{2} \big((1-e^{-a})q\big)^2 - \tau e^{-a} q                                                  \\
		       & =\pi^0 q - \frac{\pi^1}{2} q^2 - \frac{\hat{\gamma} }{2} \big((1-e^{-a})q\big)^2 - \tau e^{-a} q, \quad \text{with } \hat{\gamma} : = \gamma \sigma^2.
	\end{align*}
	This is the same wealth function as in the deterministic case, except for the modified  green cost parameter $\hat{\gamma}$ and emissions reduction factor $\hat{\varrho}$.
	Accordingly, we can follow the proof of Proposition~\ref{prop:optimalquant} to establish the required result under the tax scheme.

	Under the assumption that $0<P/\lambda < 1$, for a given production $q$ and emissions reduction effort  $a$, the optimization problem for the certificate demand $\delta$ under the market scheme is given by
	\begin{align}
		 & \min_{\delta} \left\{ \delta P + \lambda \mathbb{E}\left[ \left( e^{-a}  q  X - \delta \right)^+ \right] \right\} =   P\ \min_{\delta} \left\{ \delta  + \tfrac{1}{P/\lambda} \mathbb{E}\left[ \left( e^{-a}  q  X - \delta \right)^+ \right] \right\} \notag
		\\
		 & \,  = P q e^{-a}  \min_{m} \left\{ m  + \tfrac{1}{P/\lambda} \mathbb{E}\left[ \left( X - m \right)^+ \right] \right\} \notag
		\\
		 & \,  = P q e^{-a} \left( \mathbb{V}{\rm a}\mathbb{R}_{P/\lambda}(X) + \tfrac{\lambda}{P} \mathbb{E}\left[ (X - \mathbb{V}{\rm a}\mathbb{R}_{P/\lambda}(X))^+ \right] \right)  = P q e^{-a} \, \mathbb{ES}_{P/\lambda} (X). \label{eq:ee_market_random}
	\end{align}
	This follows from the fact that expected shortfall is positive homogeneous and from the connections~\eqref{eq:oce_conn} between the value-at-risk and expected shortfall risk measures.
	Thus, the optimal certificate demand is uniquely given by $\bm{\delta}=  q e^{-a} \, \mathbb{V}{\rm a}\mathbb{R}_{P/\lambda}(X)$. 
	The uniqueness follows from the fact that the argmin $m^\ast =\mathbb{V}{\rm a}\mathbb{R}_{P/\lambda}(X)$ in the second line of the above equation is unique for a continuously distributed random variable $X$.
	By~\eqref{eq:ee_market_random}, the  company's expected wealth simplifies to
	\begin{equation}\label{eq:wealth_mar}
		\pi(q) - c(q, a) -  q e^{-a} P \, \mathbb{ES}_{P/\lambda} (X),
	\end{equation}
	which is equivalent to the wealth under a tax scheme, with tax rate set to $\tau = P \, \mathbb{ES}_{P/\lambda} (X)$.
	By analogy, we derive the expressions for the company's optimal production level $\bm{ q^{\mathrm{mar}} }$, emissions reduction rate $\bm{ a^{\mathrm{mar} } }$, carbon emissions (expected) $\mathrm{E}^{\mathrm{mar}}$, and net wealth (expected) $\mathcal{W}^{\mathrm{mar}}_{\mathrm{C}}$.
	Finally, the certificates demand simplifies to $\bm{\delta} = e^{-\bm{ a^{\mathrm{mar} } }} \bm{ q^{\mathrm{mar}} }\, \mathbb{V}{\rm a}\mathbb{R}_{P/\lambda}(X) = \mathbb{V}{\rm a}\mathbb{R}_{P/\lambda}(X)\mathrm{E}^{\mathrm{mar}}$.
\end{proof}

Proposition~\ref{prop:optimalquant_random} implies that, when emission is random, a company's demand for certificates is $\bm{\delta} =  \mathbb{V}{\rm a}\mathbb{R}_{P/\lambda}(X) \mathrm{E}^{\mathrm{mar}} $, which generally differs from its expected total emissions $\mathrm{E}^{\mathrm{mar}}$.
As a result, the company typically holds either fewer or more certificates than its average requirement, leading to an expected excess or shortfall of emissions---as opposed to the deterministic case, see Remark~\ref{rem:zero_loss}. 
When there are many companies indexed by $i$, Proposition~\ref{prop:optimalquant_random} applies to each firm, just as in the deterministic setting. 

Recall that $\bm{P_i}(S)$ denotes the effective price per certificate faced by company $i$ at a given spot price $S$.
Specifically, $\bm{P_i}(S)=S$ if the company purchases certificates directly at auction, while $\bm{P_i}(S)$ corresponds to the price that maximizes the intermediary's wealth when the company acquires certificates through an intermediary, as in the deterministic case.
If the regulator supplies $A$  certificates at auction, then under mild assumptions, there exists a unique (Radner \citep{radner1972}) equilibrium spot price $\bm{S}$ and firm-level effective price $\bm{P_i}(\bm{S})$ such that the clearing condition
$\sum \bm{\delta_i} =A$ is satisfied with $\bm{\delta_i} = \bm{\delta_i}(\bm{P_i}(S))$, see Propositions~\ref{prop:spot_random}--\ref{prop:unique_spot_random}.  

Since a company's expected emissions do not coincide with its certificate demand in the random case, a fair comparison of the tax and market schemes requires fixing a common expected emissions target $\hat{A}$, where $\hat{A}$ is a constant strictly larger than the number of auctioned certificates $A$.
The difference $\hat{A}-A$ thus represents the expected excess emissions tolerated by the regulator.
Accordingly, the emission constraint becomes
\begin{equation*}
	\mathrm{E}^{\mathrm{tax}} = \mathrm{E}^{\mathrm{mar}} =\hat{A}. 
\end{equation*}  

For the given tax rate $\tau$ and expected  emissions target $\hat{A}$, the regulator's wealth under the tax scheme is $\mathcal{W}^{\mathrm{tax}}_{\mathrm{R}} = \tau \hat{A}$.
Under the market scheme, for a given penalty rate $\lambda$,  total number of certificates  $A$, and effective certificate price $\bm{P_i} =\bm{P_i}(\bm{S})$ for each company $i$ with equilibrium spot price $\bm{S}$, the regulator's wealth is given by
\begin{align}
	\mathcal{W}^{\mathrm{mar}}_{\textrm{R}} & = \lambda \sum   \mathbb{E}\left[ \left( \bm{q^{\mathrm{mar}}_i} e^{-\bm{a^{\mathrm{mar}}_i}} X - \bm{ \delta_i} \right)^+ \right] + \bm{S}\ A  \notag
	\\
	                                     & = \sum \displaystyle \left[ \mathbb{ES}_{ \bm{P_i}/ \lambda }(X) - \mathbb{V}{\rm a}\mathbb{R}_{ \bm{P_i}/\lambda }(X) \right] \bm{P_i} \, \mathrm{E}^{\mathrm{mar}}_i + \bm{S}\  A,\label{revenue:regu:mar}
\end{align}
where the second equation follows from Eqn.~\eqref{eq:ee_market_random}.

The following theorem compares the economic outputs of the tax and market schemes in the case of random carbon emissions.
\begin{theorem}[\textbf{Tax vs.\ spot market scheme}]\label{thm:no_intermediaries_random}
	Assume that the tax rate $\tau$, penalty rate $\lambda$ and  number of certificates $A$ are  chosen such that targeted aggregate expected  emissions $\hat{A}>A$ are identical across all policy schemes.
	Then
	\begin{itemize}[fullwidth]
		\item \textbf{Tax scheme}:
		      The tax rate is given by $ \tau =  (\mathrm{E}^{\mathrm{bau}}-\hat{A})/\hat{\varrho}$, where $\hat{\varrho} = \sum \hat{\varrho}_i$,  and the expected wealth of company $i$ is
		      \begin{equation*}
			      \mathcal{W}^{\mathrm{tax}}_{\mathrm{C}, i} = \mathcal{W}^{\mathrm{bau}}_{\mathrm{C}, i} - \tau \left(\mathrm{E}^{\mathrm{bau}}_i - \frac{\tau}{2}\hat{\varrho}_i \right),
		      \end{equation*}
		      resulting into the aggregated expected  wealth  of the companies, 
			  and of the regulator given by
		      \begin{align}
			      \mathcal{W}^{\mathrm{tax}}_{\mathrm{C}} = \mathcal{W}^{\mathrm{bau}}_{\mathrm{C}} - \tau \left(\mathrm{E}^{\mathrm{bau}} - \frac{\tau}{2}\hat{ \varrho} \right), 
				  \quad 
			      \mathcal{W}^{\mathrm{tax}}_{\mathrm{R}}       = \tau \hat{A}.
		      \end{align}
		\item \textbf{Spot market scheme}:
		      Let $G$ denote the survival function of the random variable $X$. 
			  Under the assumptions of Proposition~\ref{prop:unique_spot_random}, the equilibrium spot price $\bm{S}$ is given by
		      \begin{equation*}
			      \bm{S}  = \lambda\ G\left(A/\hat{A} \right) = \dfrac{\tau}{ \mathbb{ES}_{G( A/\hat{A} )}(X) }.
		      \end{equation*}
		      The resulting expected wealth of company $i$ is  the same as under the tax scheme, i.e.
		      \begin{equation*}
			      \mathcal{W}^{\mathrm{mar}}_{\mathrm{C}, i} = \mathcal{W}^{\mathrm{tax}}_{\mathrm{C}, i}, 
		      \end{equation*}
		      with  the aggregated  wealth of the companies and wealth of the  regulator given by
		      \begin{align*}
			      \mathcal{W}^{\mathrm{mar}}_{\mathrm{C}}  = \mathcal{W}^{\mathrm{tax}}_{\mathrm{C}},
				  \quad 
			      \mathcal{W}^{\mathrm{mar}}_{\mathrm{R}}  = \mathcal{W}^{\mathrm{tax}}_{\mathrm{R}}.
		      \end{align*}
	\end{itemize}
\end{theorem}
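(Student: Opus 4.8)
The plan is to reduce both schemes to the single-firm tax problem already solved in Proposition~\ref{prop:optimalquant_random}, and then to show that, once both are calibrated to the common expected target $\hat{A}$, the spot market scheme inherits exactly the effective tax rate $\tau$. First I would dispose of the tax scheme: indexing the formulas of Proposition~\ref{prop:optimalquant_random} by $i$ and summing gives aggregate expected emissions $\mathrm{E}^{\mathrm{tax}} = \mathrm{E}^{\mathrm{bau}} - \tau\hat{\varrho}$ with $\hat{\varrho} = \sum\hat{\varrho}_i$, and imposing $\mathrm{E}^{\mathrm{tax}} = \hat{A}$ immediately yields $\tau = (\mathrm{E}^{\mathrm{bau}} - \hat{A})/\hat{\varrho}$. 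The per-company wealth, the aggregate wealth, and the regulator's wealth $\mathcal{W}^{\mathrm{tax}}_{\mathrm{R}} = \tau\hat{A}$ then follow by direct substitution.

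For the spot market scheme I would take direct participation, $\bm{P_i}(\bm{S}) = \bm{S}$, so that Proposition~\ref{prop:optimalquant_random} gives $\bm{\delta_i} = \mathbb{V}{\rm a}\mathbb{R}_{\bm{S}/\lambda}(X)\,\mathrm{E}^{\mathrm{mar}}_i$ together with $\mathrm{E}^{\mathrm{mar}}_i = \mathrm{E}^{\mathrm{bau}}_i - \bm{S}\,\mathbb{ES}_{\bm{S}/\lambda}(X)\,\hat{\varrho}_i$. Granting the existence and uniqueness of the equilibrium spot price supplied by Propositions~\ref{prop:spot_random}--\ref{prop:unique_spot_random}, the clearing condition $\sum\bm{\delta_i} = A$ reads $\mathbb{V}{\rm a}\mathbb{R}_{\bm{S}/\lambda}(X)\,\mathrm{E}^{\mathrm{mar}} = A$. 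Combining this with the calibration constraint $\mathrm{E}^{\mathrm{mar}} = \hat{A}$ produces the two scalar identities
\begin{equation*}
	\mathbb{V}{\rm a}\mathbb{R}_{\bm{S}/\lambda}(X) = \frac{A}{\hat{A}}, \qquad \bm{S}\,\mathbb{ES}_{\bm{S}/\lambda}(X) = \frac{\mathrm{E}^{\mathrm{bau}} - \hat{A}}{\hat{\varrho}} = \tau,
\end{equation*}
the second using $\mathrm{E}^{\mathrm{mar}} = \mathrm{E}^{\mathrm{bau}} - \bm{S}\,\mathbb{ES}_{\bm{S}/\lambda}(X)\,\hat{\varrho}$. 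Since $X$ is continuously distributed, inverting the first identity through the survival function $G$ gives $\bm{S}/\lambda = G(A/\hat{A})$, hence $\bm{S} = \lambda\,G(A/\hat{A})$; substituting $\mathbb{ES}_{\bm{S}/\lambda}(X) = \mathbb{ES}_{G(A/\hat{A})}(X)$ into the second yields the equivalent form $\bm{S} = \tau/\mathbb{ES}_{G(A/\hat{A})}(X)$.

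The wealth equalities then follow with essentially no further computation. Because $\bm{S}\,\mathbb{ES}_{\bm{S}/\lambda}(X) = \tau$, the reduced market wealth function~\eqref{eq:wealth_mar} of each company coincides with its tax wealth function at rate $\tau$, so $\mathcal{W}^{\mathrm{mar}}_{\mathrm{C},i} = \mathcal{W}^{\mathrm{tax}}_{\mathrm{C},i}$ and, after summation, $\mathcal{W}^{\mathrm{mar}}_{\mathrm{C}} = \mathcal{W}^{\mathrm{tax}}_{\mathrm{C}}$. For the regulator I would specialize~\eqref{revenue:regu:mar} to $\bm{P_i} = \bm{S}$, pull the common risk-measure factors out of the sum, and apply the two scalar identities: the expected-shortfall term contributes $\bm{S}\,\mathbb{ES}_{\bm{S}/\lambda}(X)\,\mathrm{E}^{\mathrm{mar}} = \tau\hat{A}$, while the value-at-risk term contributes $-\bm{S}\,\mathbb{V}{\rm a}\mathbb{R}_{\bm{S}/\lambda}(X)\,\mathrm{E}^{\mathrm{mar}} = -\bm{S}A$ by the clearing condition, which exactly cancels the auction revenue $\bm{S}A$, leaving $\mathcal{W}^{\mathrm{mar}}_{\mathrm{R}} = \tau\hat{A} = \mathcal{W}^{\mathrm{tax}}_{\mathrm{R}}$.

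The main obstacle I anticipate is not the algebra but the equilibrium step underlying it: one must guarantee that a spot price $\bm{S}$ solving the clearing condition exists, is unique, and lies in the feasibility range $0 < \bm{S} < \lambda$, so that $\bm{S}/\lambda$ is an admissible confidence level and the inversion $\bm{S}/\lambda = G(A/\hat{A})$ is legitimate. This is precisely what Propositions~\ref{prop:spot_random}--\ref{prop:unique_spot_random} are designed to deliver; granting them, it is the continuity of the distribution of $X$ that turns the survival-function inversion into an exact identity rather than a one-sided bound, and everything else is bookkeeping.
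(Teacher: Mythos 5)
Your proposal is correct and follows essentially the same route as the paper's proof: calibrate $\tau$ from $\mathrm{E}^{\mathrm{tax}}=\hat{A}$, derive the two scalar identities $\mathbb{V}{\rm a}\mathbb{R}_{\bm{S}/\lambda}(X)=A/\hat{A}$ and $\bm{S}\,\mathbb{ES}_{\bm{S}/\lambda}(X)=\tau$ from clearing plus the emissions constraint, identify the reduced market objective with the tax objective at rate $\tau$, and cancel the value-at-risk term against the auction revenue in~\eqref{revenue:regu:mar}. The only difference is presentational; the paper performs the same cancellation in a single displayed chain rather than term by term.
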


\proof See Subsection~\ref{app:no_intermediaries_random}.

The above result shows that  the tax and spot market schemes remain equivalent even under emission uncertainty.

We now turn to compare the tax and purely intermediated market schemes.
Due to the significant computational challenges in  deriving equilibrium prices analytically under the purely intermediated market scheme, we focus on the symmetric setup of Example~\ref{eg:symetric}, but with random emissions here.
For the general case, see the numerical results in Section~\ref{sec:numeric_random}.

\begin{proposition}[\textbf{Tax vs.\ purely intermediated market scheme}]\label{prop:with_intermediaries_random}
	Assume that the tax rate $\tau$, penalty rate $\lambda$ and  number of certificates $A$ are  chosen such that the expected targeted aggregate emissions $\hat{A}$ are identical across all policy schemes.
	If all companies are identical, i.e. all the company-specific parameters share the same values, then:
	\begin{itemize}[fullwidth]
		\item  The equilibrium spot price $\bm{S}$ and the intermediated price $\bm{P_i}(\bm{S})$ are given by
		      \begin{align*}
			       & \bm{S}  =   \lambda\ G\left( A / \hat{A}  \right)-  \tfrac{\lambda A }{\lambda N\hat{\varrho}_1 \left( \tfrac{A}{\hat{A} } \right)^2  -  \tfrac{\hat{A}}{G'\left(A/ \hat{A}\right)}}, \quad \text{with} \quad \hat{\varrho}_i = \hat{\varrho}_1,
			      \\[1.5ex]
			       & \bm{P_i}(\bm{S})    = \tfrac{\tau}{ \mathbb{ES}_{G( A/\hat{A})}(X) }= \lambda\ G\left(A/ \hat{A}\right).
		      \end{align*}

		\item The expected wealth of the financial intermediary serving company $i$ and  of the company $i$  are  given by 
		      \begin{equation*}
			     \mathcal{W}^{\mathrm{mar}}_{\mathrm{F}, i}    =   \tfrac{\lambda \ A^2}{\lambda \hat{\varrho_1} \left( \tfrac{N\ A}{\hat{A}} \right)^2  - \tfrac{N\ \hat{A}}{G'\left(A/\hat{A} \right)}},
				 \quad 
				 	 		\mathcal{W}^{\mathrm{mar}}_{\mathrm{C}, i}  = \mathcal{W}^{\mathrm{tax}}_{\mathrm{C}, i} = \mathcal{W}^{\mathrm{tax}}_{\mathrm{C}, 1}.
		      \end{equation*}
		      This leads to  a total wealth for the financial intermediaries, the companies  and the regulator  given by
		      \begin{equation*}      
			      \mathcal{W}^{\mathrm{mar}}_{\mathrm{C}}  = \mathcal{W}^{\mathrm{tax}}_{\mathrm{C}},
					\quad 
						\mathcal{W}^{\mathrm{mar}}_{\mathrm{F}}  = \mathcal{W}^{\mathrm{mar}}_{\mathrm{F}, 1} ,
			        		 \quad  
			     	 \mathcal{W}^{\mathrm{mar}}_{\mathrm{R}}       =  \mathcal{W}^{\mathrm{tax}}_\mathrm{R}   - \mathcal{W}^{\mathrm{mar}}_{\mathrm{F}}.
		      \end{equation*}
	\end{itemize}
\end{proposition}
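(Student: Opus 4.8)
The plan is to lean on the market/tax equivalence already isolated in Proposition~\ref{prop:optimalquant_random}: at an effective certificate price $P$, a company's market problem coincides with a tax problem whose ``effective tax'' is $T(P):=P\,\mathbb{ES}_{P/\lambda}(X)$. Writing $v(P):=\mathbb{V}{\rm a}\mathbb{R}_{P/\lambda}(X)$ and $m_i(P):=\mathrm{E}^{\mathrm{bau}}_i-T(P)\hat{\varrho}_i$ for the expected emissions of company $i$, Proposition~\ref{prop:optimalquant_random} supplies the demand curve $\bm{\delta_i}(P)=v(P)\,m_i(P)$, on which the whole argument rests.

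First I would establish two differentiation identities. From the representation $\mathbb{ES}_\epsilon(X)=\tfrac1\epsilon\int_0^\epsilon\mathbb{V}{\rm a}\mathbb{R}_u(X)\,du$ in \eqref{eq:oce_conn} one obtains $T'(P)=v(P)$, and since $X$ is continuously distributed, differentiating the quantile relation $G(v(P))=P/\lambda$ (with $G$ the survival function) yields $v'(P)=1/(\lambda G'(v(P)))$. Hence $\bm{\delta_i}'(P)=v'(P)m_i(P)-v(P)^2\hat{\varrho}_i$. The intermediary serving company $i$ maximizes $\bm{\delta_i}(P)(P-S)$, so its optimal quote $\bm{P_i}(S)$ is characterized by the stationarity condition $\bm{\delta_i}'(\bm{P_i})(\bm{P_i}-S)+\bm{\delta_i}(\bm{P_i})=0$, equivalently $\bm{P_i}-S=-\bm{\delta_i}(\bm{P_i})/\bm{\delta_i}'(\bm{P_i})$; under the feasibility hypotheses inherited from Proposition~\ref{prop:unique_spot_random} I would confirm, via the sign of the second derivative, that this stationary point is the unique interior maximizer.

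Specializing to identical companies, all quotes coincide at a common $\bm{P}$, and the symmetric equilibrium is pinned down by two scalar conditions: market clearing $N\bm{\delta_1}(\bm{P})=A$ and the calibration $N m_1(\bm{P})=\hat{A}$. Dividing the first by the second gives $v(\bm{P})=A/\hat{A}$, i.e. $\bm{P}=\lambda G(A/\hat{A})$; while $m_1(\bm{P})=\hat{A}/N$ is equivalent to $T(\bm{P})=(\mathrm{E}^{\mathrm{bau}}-\hat{A})/\hat{\varrho}=\tau$, yielding the alternative form $\bm{P}=\tau/\mathbb{ES}_{G(A/\hat{A})}(X)$. Substituting $v(\bm{P})=A/\hat{A}$, $m_1(\bm{P})=\hat{A}/N$, $v'(\bm{P})=1/(\lambda G'(A/\hat{A}))$ and $\bm{\delta_1}(\bm{P})=A/N$ into the stationarity relation and simplifying then produces the stated spot price $\bm{S}$. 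Note that the intermediary's first-order condition is never solved as a transcendental equation in $P$: the calibration and clearing conditions fix $\bm{P}$ outright, so the stationarity relation only serves to back out $\bm{S}$ explicitly.

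The wealth identities follow quickly. The intermediary's wealth is $\mathcal{W}^{\mathrm{mar}}_{\mathrm{F},1}=\bm{\delta_1}(\bm{P})(\bm{P}-\bm{S})=-\bm{\delta_1}(\bm{P})^2/\bm{\delta_1}'(\bm{P})$, which simplifies to the stated expression, and summing over the $N$ identical intermediaries gives $\mathcal{W}^{\mathrm{mar}}_{\mathrm{F}}$. Since $T(\bm{P})=\tau$, the company-level equivalence of Proposition~\ref{prop:optimalquant_random} gives $\mathcal{W}^{\mathrm{mar}}_{\mathrm{C},i}=\mathcal{W}^{\mathrm{tax}}_{\mathrm{C},i}$ and hence $\mathcal{W}^{\mathrm{mar}}_{\mathrm{C}}=\mathcal{W}^{\mathrm{tax}}_{\mathrm{C}}$. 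For the regulator I would substitute the symmetric equilibrium into \eqref{revenue:regu:mar}, use $\bm{P}\,\mathbb{ES}_{\bm{P}/\lambda}(X)=\tau$ together with $v(\bm{P})\bm{P}\hat{A}=\bm{P}A$ to collapse it to $\mathcal{W}^{\mathrm{mar}}_{\mathrm{R}}=\tau\hat{A}-A(\bm{P}-\bm{S})$, and then recognize $A(\bm{P}-\bm{S})=\mathcal{W}^{\mathrm{mar}}_{\mathrm{F}}$ and $\tau\hat{A}=\mathcal{W}^{\mathrm{tax}}_{\mathrm{R}}$. The main obstacle is the analytic groundwork of the second step---rigorously differentiating $\mathbb{V}{\rm a}\mathbb{R}$ and $\mathbb{ES}$ with respect to $P$ and justifying invertibility of $G$---together with verifying that the intermediary's stationary point is a genuine maximizer; the remaining algebra is routine.
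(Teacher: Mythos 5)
Your proposal is correct and follows essentially the same route as the paper's proof in Subsection~\ref{app:with_intermediaries_random}: pin down $\bm{P}=\lambda G(A/\hat A)$ from the clearing and calibration conditions, back out $\bm{S}$ from the intermediary's first-order condition using the demand derivative~\eqref{eq:derivative}, and obtain the wealth identities via the effective-tax equivalence $T(\bm{P})=\tau$ and the regulator formula~\eqref{revenue:regu:mar}. The only cosmetic difference is that you obtain $\mathbb{V}{\rm a}\mathbb{R}_{\bm{P}/\lambda}(X)=A/\hat A$ by dividing the two scalar conditions rather than via the paper's intermediate relation $\hat A = A/\mathbb{V}{\rm a}\mathbb{R}_{\bm{P}/\lambda}(X)$, which is the same computation.
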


\proof See Subsection~\ref{app:with_intermediaries_random}.

Proposition~\ref{prop:with_intermediaries_random} shows that, in the symmetric case, both schemes generate the same wealth for companies, whereas the tax scheme yields a greater wealth for the regulator. Consequently, the GDP of the economy under the market scheme is strictly lower than under the tax scheme.

\subsection{Numerical Illustration}\label{sec:numeric_random}

For the numerical computations, we assume that the emissions $X \sim \mathrm{LogNormal}(-\tfrac{1}{2}, 1)$.
The corresponding survival function $G$ and its derivative are
\begin{equation*}
	G(t)  = \Phi\left( -\ln(t) - \tfrac{1}{2} \right),
	\qquad
	G'(t) = -\frac{\Phi'\left( \ln t + \tfrac{1}{2} \right)}{t}, \quad  t>0
\end{equation*}
where $\Phi$ is the cumulative distribution function of a standard normal distribution.
In this case, the second moment of $X$ is $\sigma^2 = \mathbb{E}[X^2] = e$.
Moreover, for $0 < \epsilon \leq 1$, the value-at-risk and expected shortfall are given by
\begin{equation*}
	\mathbb{V}{\rm a}\mathbb{R}_\epsilon(X)  = \exp\left( -\Phi^{-1}(\epsilon) - \tfrac{1}{2} \right),
	\qquad
	\mathbb{ES}_\epsilon(X)   = \frac{\Phi\left( \Phi^{-1}(\epsilon) + 1 \right)}{\epsilon}.
\end{equation*}

For our computations, we set the expected emissions target to $\hat{A} = 1.05 A$, while all other parameter values are as specified in Table~\ref{tab:parameters}.

As shown in Table~\ref{tab:value_rand}, and Figures~\ref{fig:gdp_tax_vs_fi_rand} and \ref{fig:emission_tax_vs_fi_rand},  with random emissions, outcomes remain broadly comparable to those observed in the deterministic case.
Under the market scheme, companies primarily achieve emission reductions through production cuts, as evidenced by their lower investment in green technologies compared with the tax scheme (€3 billion under the market scheme versus €14 billion under the tax scheme),  see Table~\ref{tab:value_rand}.
This indicates that the tax scheme provides stronger incentives for companies to reduce emissions through investment in green technologies under conditions of uncertain emissions.

\begin{table}[htp!]
    \centering
    \begin{threeparttable}
        \caption{GDP and emission contributions of each actor in the business-as-usual case and under  the tax and market schemes with random emissions.}
        \label{tab:value_rand}
        \begin{tabular}{@{}lrrrcrrr@{}}
            \toprule
            & \multicolumn{3}{c}{ \makecell{ GDP contributions\\ (billion  €) } } & & \multicolumn{3}{c}{ \makecell{Emission contributions \\ (million tons)} } \\
            \cmidrule{2-4} \cmidrule{6-8}
            Group & $\mathrm{bau}$ & $\mathrm{tax}$ & $\mathrm{mar}$ & & $\mathrm{bau}$ & $\mathrm{tax}$ & $\mathrm{mar}$ \\
            \midrule
            Brown & 300 & 57 & 61 & & 1,051 & 365 & 392 \\
            \quad B1 & 100 & 10 & 14 & & 420 & 105 & 135 \\
            \quad B2 & 100 & 20 & 19 & & 330 & 139 & 136 \\
            \quad B3 & 100 & 28 & 28 & & 300 & 121 & 120 \\
            Green & 300 & 112 & 101 & & 721 & 379 & 352 \\
            \quad G1 & 100 & 31 & 28 & & 270 & 133 & 124 \\
            \quad G2 & 100 & 35 & 31 & & 240 & 137 & 126 \\
            \quad G3 & 100 & 45 & 42 & & 210 & 109 & 102 \\
            Regulator & --- & 255 & 186 & & --- & --- & --- \\
            Tech providers & --- & 14 & 3 & & --- & --- & --- \\
            Financial intermediaries & --- & --- & 75 & & --- & ---& --- \\
            \midrule
            \textbf{Total} & 600 & 438 & 426 & & 1,772 & 744 & 744 \\
            \bottomrule
            \end{tabular}
        \end{threeparttable}
    \end{table}
\begin{figure}[htp!]
	\centering
	\includegraphics[width=1.0\textwidth]{./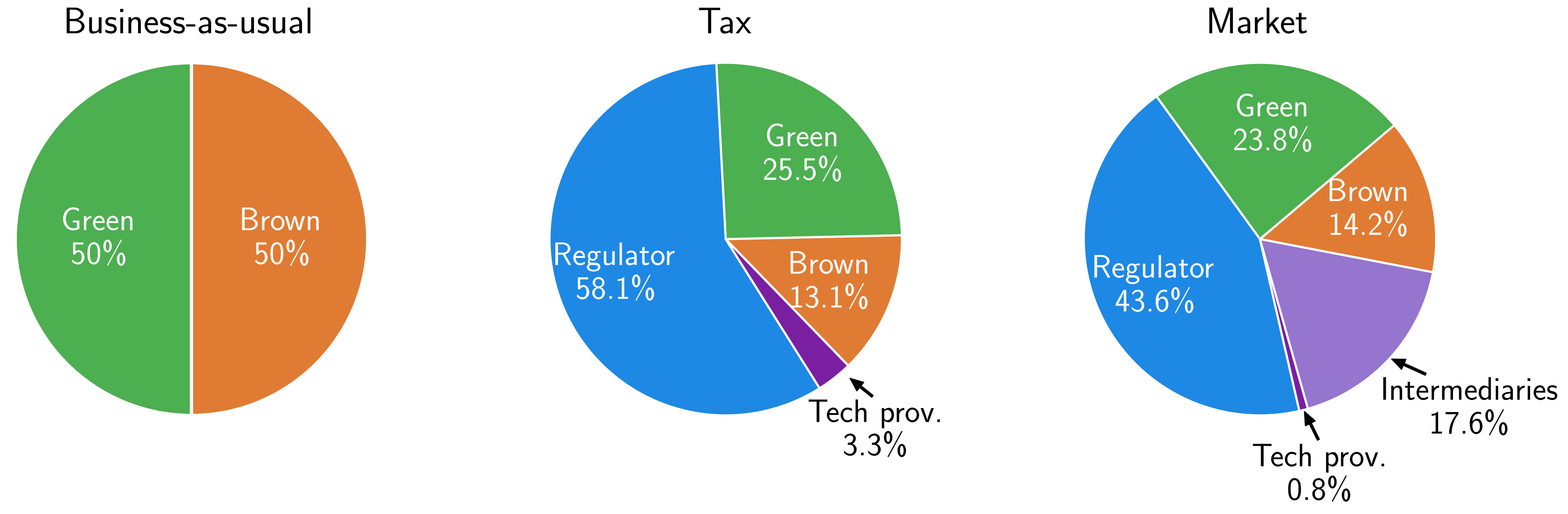}
	\caption{The GDP shares of the different  actors in the business-as-usual case and under the  tax and purely intermediated market schemes with random emissions.}
	\label{fig:gdp_tax_vs_fi_rand}
\end{figure}

\begin{figure}[htp!]
	\centering
	\includegraphics[width=1.0\textwidth]{./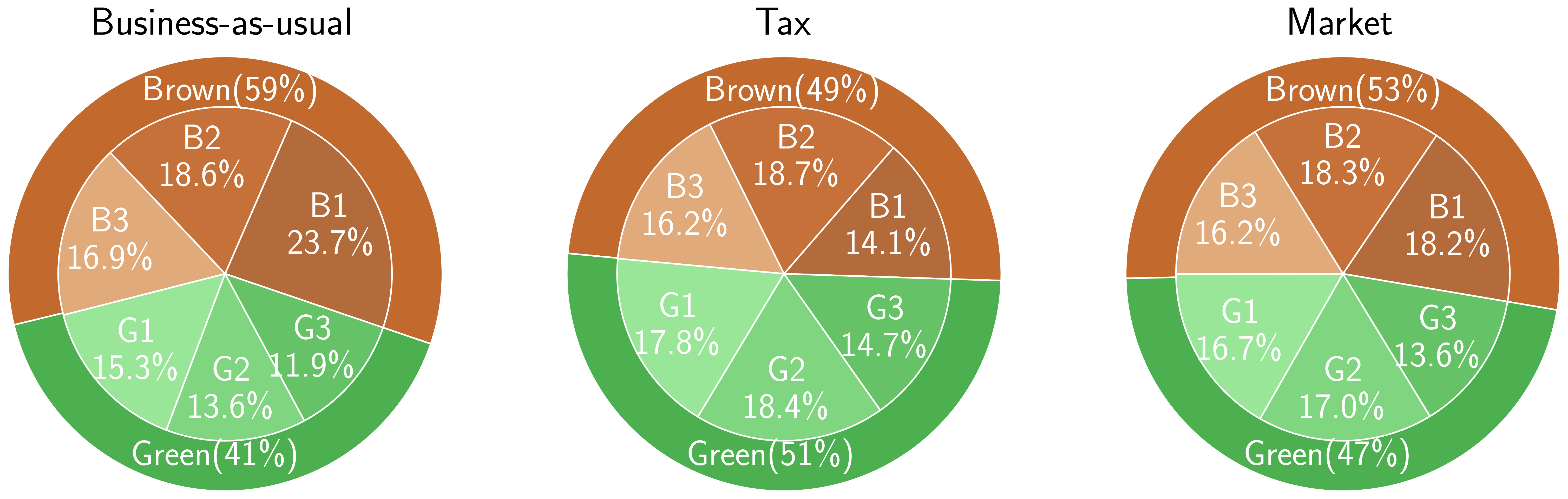}
	\caption{The emission shares of companies in the business-as-usual case and under the tax and  purely intermediated market schemes with random emissions.}
	\label{fig:emission_tax_vs_fi_rand}
\end{figure}

\section{Conclusion}\label{sec:conclusion}

This paper provides an analytical framework for comparing carbon taxation (tax scheme) with emissions trading systems (market scheme), with a specific focus on the role of financial intermediaries within the latter. By developing a one-period equilibrium model, we assess the economic and environmental performance of these schemes under the constraint that they achieve identical aggregate emissions reductions.

Our findings confirm that under perfect competition, where companies have direct and frictionless access to the carbon certificates' central auction, the two schemes are equivalent.
This equivalence holds true regardless of whether emissions are deterministic or stochastic.

However, this theoretical parity is broken by the introduction of financial intermediaries.
When intermediaries are the primary channel for companies to acquire emission certificates, they influence price formation and capture a portion of the revenue stream.
This leads to key divergences from the tax scheme:
\begin{itemize}
	\item \textbf{Regulatory Wealth}: The presence of intermediaries strictly reduces the wealth collected by the regulatory authority compared to an equivalent tax scheme, as they appropriate a portion of the value that would otherwise accrue to the regulator through auction proceeds.
	      This aligns with recent empirical observations of diverging wealth trends between ETS and carbon taxes.
	\item \textbf{Aggregate Economic Output}: The combined wealth of companies, financial intermediaries, and the regulator are lower under the intermediated market scheme than under the tax scheme, indicating a net loss of economic value in the real economy due to market friction.
	\item \textbf{Distributional Effects}: While aggregate company wealth are strictly lower under the market scheme, individual company profitability relative to the tax scheme depends on specific cost and production factors.
\end{itemize}

Our numerical analysis also suggests that the two schemes incentivize abatement differently.
Under the intermediated market scheme, high-intensity companies maintain relatively high emissions, whereas the tax scheme more effectively incentivizes these companies to drive the majority of emissions reductions.
Additionally, under emission uncertainty, the tax scheme provides stronger incentives for green technology investment, while the market scheme primarily encourages production cuts.

These results highlight that the choice of carbon pricing instrument is not solely a matter of economic efficiency but also of institutional design and its resulting distributional consequences.
The study underscores the importance of accounting for intermediary behavior in the design of carbon markets to inform policy on regulation and the distribution of economic burdens and benefits.
Future research could extend this framework to multi-period models to capture temporal dynamics, including the allocation of certificates and the role of derivative markets.

\appendix

\section{Some Supplementary results on Section~\ref{sec:model}} \label{app:model}

Throughout this section, we consider a \emph{hybrid market scheme}, in which some companies acquire certificates through financial intermediaries, while others purchase them directly via market auction.
To distinguish between these two groups, we use the indices $b$ and $c$ to denote, respectively, companies that engage in bilateral trades with intermediaries and those that participate directly in the market.

\begin{proposition}\label{prop:given_S}
	Let $S \in (0, \lambda]$ be the given spot price of a unit certificate on the market.
	It holds that
	\begin{itemize}
		\item \textbf{Direct participation to the auction}:
		      When a company $c$ purchases certificates directly at a given spot price $S$, its optimal demand $\bm{\delta_c}$ satisfies
		      \begin{equation*}
			     \bm{\delta_c}(\bm{P_c}(S)) =  \bm{\delta_c}(S) = \mathrm{E}^{\mathrm{bau}}_c - S \varrho_c.
		      \end{equation*}
		\item \textbf{Purchase through a financial intermediary}: When a company $b$ purchases certificates through a financial intermediary, the intermediated price $\bm{P_b}(S)$ for a given spot price $S$ is
		      \begin{equation*}
			      \bm{P_b}(S)   = \frac{1}{2} \left(S + \frac{\mathrm{E}^{\mathrm{bau}}_b}{\varrho_b} \right)\wedge \lambda
		      \end{equation*}
		      with  resulting  certificates demand $\bm{\delta_b}$ such that	      
              \begin{equation*}
			       \bm{\delta_b}(\bm{P_b}(S)) = \tfrac{\left( \mathrm{E}^{\mathrm{bau}}_b- S \varrho_b \right)}{2} \vee \left( \mathrm{E}^{\mathrm{bau}}_b- \lambda \varrho_b \right) .
		      \end{equation*}
	\end{itemize}
\end{proposition}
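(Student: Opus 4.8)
The plan is to reduce both assertions to the linear demand curve $\bm{\delta}(P) = \mathrm{E}^{\mathrm{bau}} - P\varrho$ already obtained in Proposition~\ref{prop:optimalquant}, applied company by company. For a company $c$ that bids directly at the auction, the effective price it faces is the spot price itself, so $\bm{P_c}(S) = S$; substituting $P = S$ into the demand curve immediately gives $\bm{\delta_c}(S) = \mathrm{E}^{\mathrm{bau}}_c - S\varrho_c$, with no further work required.

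For a company $b$ purchasing through an intermediary, I would start from the intermediary's program $\max_P \bm{\delta_b}(P)(P - S)$ and insert the demand curve $\bm{\delta_b}(P) = \mathrm{E}^{\mathrm{bau}}_b - P\varrho_b$. This turns the objective into the quadratic $g(P) = (\mathrm{E}^{\mathrm{bau}}_b - P\varrho_b)(P - S)$, which is strictly concave in $P$ since its leading coefficient $-\varrho_b$ is negative. Setting $g'(P) = 0$ then produces the unique unconstrained maximizer $P^\ast = \tfrac{1}{2}\bigl(S + \mathrm{E}^{\mathrm{bau}}_b/\varrho_b\bigr)$, which by concavity is the global optimum on $\mathbb{R}$.

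The main obstacle is that the demand relation from Proposition~\ref{prop:optimalquant} is only valid on $(0,\lambda)$, so the intermediary's price must respect the feasibility cap $P \leq \lambda$. Here I would invoke concavity of $g$ once more: the constrained maximizer equals $P^\ast$ when $P^\ast \leq \lambda$, and is otherwise pushed to the boundary, giving $\bm{P_b}(S) = P^\ast \wedge \lambda$. The delicate point is the boundary value $P = \lambda$, where the company's wealth in~\eqref{eq:market_wealth} becomes independent of $\delta$ for any holding below its emission level, so the company is indifferent; I would adopt the convention that it then buys exactly enough to cover its emissions, which yields $\bm{\delta_b} = \mathrm{E}^{\mathrm{bau}}_b - \lambda\varrho_b$ on that branch. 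Finally, substituting the two regimes of $\bm{P_b}(S)$ back into the demand curve and simplifying gives $\bm{\delta_b}(\bm{P_b}(S)) = \tfrac{1}{2}\bigl(\mathrm{E}^{\mathrm{bau}}_b - S\varrho_b\bigr) \vee \bigl(\mathrm{E}^{\mathrm{bau}}_b - \lambda\varrho_b\bigr)$; a brief sign check confirms that the maximum selects the interior expression precisely when $P^\ast \leq \lambda$ and the boundary expression otherwise.
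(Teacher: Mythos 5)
Your proposal is correct and follows essentially the same route as the paper: reduce to the linear demand curve $\bm{\delta_b}(P)=\mathrm{E}^{\mathrm{bau}}_b-P\varrho_b$ from Proposition~\ref{prop:optimalquant}, maximize the resulting strictly concave quadratic $(\mathrm{E}^{\mathrm{bau}}_b-P\varrho_b)(P-S)$ to get the interior candidate $\tfrac{1}{2}(S+\mathrm{E}^{\mathrm{bau}}_b/\varrho_b)$, cap it at $\lambda$, and substitute back. Your extra remark on the company's indifference at $P=\lambda$ is a reasonable refinement the paper glosses over; the only point the paper makes that you omit is that Assumption~\ref{ass:standing_ass} ($\lambda\varrho_b<\mathrm{E}^{\mathrm{bau}}_b$) guarantees the candidate price lies above $S$, which is a one-line check.
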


\begin{proof}
	Let $S \in (0, \lambda]$ be a given spot price.
	For companies that participate directly in the certificate auction (indexed by $c$), their optimal demand follows immediately from Proposition~\ref{prop:optimalquant}.

	For companies purchasing certificates through financial intermediaries (indexed by $b$), substituting the optimal certificate demand $\bm{\delta_b}(P)$ from Proposition~\ref{prop:optimalquant} into the wealth function of the intermediary serving company $b$ yields
	\begin{equation*}
		\bm{\delta_b}(P) (P -S) = \left( \mathrm{E}^{\mathrm{bau}}_b + \varrho_b S \right) P  - \varrho_b  P^2  - S\ \mathrm{E}^{\mathrm{bau}}_b,
	\end{equation*}
	which is strictly concave  in $P$.
	The unique maximizer of the map $P \mapsto \bm{\delta_b}(P)(P - S)$ is characterized by the first-order condition and is given by
	\begin{equation*}
		P^\dagger =  \frac{1}{2} \left(S + \frac{\mathrm{E}^{\mathrm{bau}}_b}{\varrho_b} \right).
	\end{equation*}
	The assumption $\lambda \varrho_b < \mathrm{E}^{\mathrm{bau}}_b$ ensures that $P^\dagger \geq S$.
	Since $P^\dagger$ is the unique critical point (and global maximizer) of the strictly concave function $P \mapsto \bm{\delta_b}(P)(P - S)$, the optimal intermediated price $\bm{P_b}(S)$---maximizing the intermediary's wealth over the admissible interval $[S, \lambda]$---is given by
	\begin{equation*}
		\bm{P_b}(S) =   P^\dagger \wedge \lambda.
	\end{equation*}
	This yields the required characterization of the intermediated price.
	By Proposition~\ref{prop:optimalquant}, the corresponding optimal certificate demand is
	\begin{align*}
		\bm{\delta_b}(\bm{P_b}(S))& = \mathrm{E}^{\mathrm{bau}}_b - \bm{P_b}(S)\varrho_b  =  \tfrac{\left(\mathrm{E}^{\mathrm{bau}}_b -S\varrho_b \right)}{2}\vee \left(\mathrm{E}^{\mathrm{bau}}_b -\lambda \varrho_b\right),
	\end{align*}
	which concludes the proof of the proposition.
\end{proof}

In view of Proposition~\ref{prop:given_S}, for a given spot price $S$, the clearing condition is
\begin{equation}\label{eq:mar_cc}
	A = \text{\emph{clear}} (S) \colon = \underbrace{\sum \left( \mathrm{E}^{\mathrm{bau}}_c - S \varrho_c \right)}_{\parbox{3cm}{\footnotesize \centering demand from direct market participants} } + \; \underbrace{\sum \tfrac{\left( \mathrm{E}^{\mathrm{bau}} _b- S \varrho_b \right)}{2} \vee \left( \mathrm{E}^{\mathrm{bau}} _b- \lambda \varrho_b \right)}_{\parbox{4cm}{\footnotesize \centering  demand from intermediated purchases}}.
\end{equation}
Hence, the equilibrium spot price $\bm{S}$ lies in the interval $(0, \lambda]$ if and only if the certificate supply $A$ satisfies
\begin{equation}\label{eq:A}
	\underbrace{\mathrm{E}^{\mathrm{bau}} - \lambda \varrho}_{=\text{\emph{clear}} (\lambda)} \; \leq  A
	< \; \underbrace{\sum \mathrm{E}^{\mathrm{bau}}_c
	+ \sum \left[ \tfrac{ \mathrm{E}^{\mathrm{bau}}_b }{2} \vee \left( \mathrm{E}^{\mathrm{bau}}_b - \lambda \varrho_b \right) \right]}_{= \displaystyle\lim_{S\searrow 0} \text{\emph{clear}} (S)},
\end{equation}
where $\mathrm{E}^{\mathrm{bau}} = \sum \mathrm{E}^{\mathrm{bau}}_b + \sum \mathrm{E}^{\mathrm{bau}}_c $ and $\varrho = \sum \varrho_b + \sum \varrho_c $.

\begin{remark}
	Under the spot market scheme, the feasibility condition~\eqref{eq:A} simplifies to
	\begin{equation*}
		\mathrm{E}^{\mathrm{bau}} - \lambda \varrho \leq A  < \mathrm{E}^{\mathrm{bau}}.
	\end{equation*}
	Under the purely intermediated market scheme, \eqref{eq:A} reduces to
	\begin{equation*}
		\mathrm{E}^{\mathrm{bau}} - \lambda \varrho \leq A  <  \sum \left[ \tfrac{ \mathrm{E}^{\mathrm{bau}}_b }{2} \vee \left( \mathrm{E}^{\mathrm{bau}}_b - \lambda \varrho_b \right) \right].
	\end{equation*}
\end{remark}

We make the following observations regarding the bounds  ~\eqref{eq:A} on $A$:
\begin{itemize}[fullwidth]
	\item The inequality $\mathrm{E}^{\mathrm{bau}}_b - \lambda \varrho_b \geq \mathrm{E}^{\mathrm{bau}}_b/2$ (whenever satisfied) means that, at the maximum spot price $S = \lambda$, the certificate demand of company $b$ remains at least half of its baseline emissions.

	\item
	      Under a purely intermediated market scheme, if $\mathrm{E}^{\mathrm{bau}}_b - \lambda \varrho_b \geq \mathrm{E}^{\mathrm{bau}}_b/2$ holds for all companies $b$, then the admissible set for the number of certificates $A$ becomes empty.
	      Consequently, no market-clearing price $S \in (0, \lambda]$ exists in such configuration.

	\item For a given spot price $S \in (0, \lambda]$:
	      \begin{equation*}
		      \bm{P_b}(S)   = \frac{1}{2} \left(S + \frac{\mathrm{E}^{\mathrm{bau}}_b}{\varrho_b} \right)\wedge \lambda = \frac{1}{2} \left(S + \frac{\mathrm{E}^{\mathrm{bau}}_b}{\varrho_b} \right) \iff 0 < S \leq 2\lambda - \frac{\mathrm{E}^{\mathrm{bau}}_b}{\varrho_b}.
	      \end{equation*}
	      Hence, this inequality holds for all $b$, if and only if
	      \begin{equation*}
		      0 < S \leq 2\lambda - \max_b \left( \frac{\mathrm{E}^{\mathrm{bau}}_b}{\varrho_b} \right).
	      \end{equation*}
	      Using the market-clearing condition~\eqref{eq:mar_cc}, this requirement translates into the following condition on $A$:
	      \begin{equation*}
		      \text{\emph{clear}} (2\lambda - \max_b (\mathrm{E}^{\mathrm{bau}}_b / \varrho_b)) \leq A < \lim_{S \searrow 0} \text{\emph{clear}} (S),
	      \end{equation*}
	      i.e.
	      \begin{multline}\label{eq:adequate}
		      \left[ \max_b\left(\frac{\mathrm{E}^{\mathrm{bau}}_b}{\varrho_b}\right) - 2\lambda \right] \left( \sum \varrho_c  + \frac{1}{2}\sum \varrho_b \right) + \sum \mathrm{E}^{\mathrm{bau}}_c  + \frac{1}{2} \sum \mathrm{E}^{\mathrm{bau}}_b
		       \\
		      \leq A < \sum \mathrm{E}^{\mathrm{bau}}_c + \sum \left[ \tfrac{\mathrm{E}^{\mathrm{bau}}_b}{2} \vee \left( \mathrm{E}^{\mathrm{bau}}_b - \lambda \varrho_b \right) \right].
	      \end{multline}

	\item Under the condition~\eqref{eq:adequate}, by Proposition~\ref{prop:given_S}, the intermediated price and the resulting certificate demand of each company $b$ are reduced to
	      \begin{align}\label{eq:simplified}
		      \bm{P_b}(S)           & = \frac{1}{2} \left(S + \frac{\mathrm{E}^{\mathrm{bau}}_b}{\varrho_b} \right),
		                       &
		      \bm{\delta_b}(\bm{P_b}(S)) & = \frac{1}{2} \left( \mathrm{E}^{\mathrm{bau}}_b - S \varrho_b \right),
	      \end{align}
	      ensuring an interior (non-corner) solution for both the intermediated price and certificates demand of company $b$.
\end{itemize}

\begin{remark}
	Under the purely intermediated market scheme, this feasibility condition~\eqref{eq:adequate}  reduces to
	\begin{equation*}
		\frac{1}{2}\left[\mathrm{E}^{\mathrm{bau}} + \varrho\left( \max_b\left(\frac{\mathrm{E}^{\mathrm{bau}}_b}{\varrho_b} \right) -2\lambda \right)  \right] \leq A  <  \sum \left[ \tfrac{\mathrm{E}^{\mathrm{bau}}_b }{2} \vee \left( \mathrm{E}^{\mathrm{bau}}_b- \lambda \varrho_b \right) \right],
	\end{equation*}
	where the lower bound involves  the most emission-intensive company through the term $\max_b(\mathrm{E}^{\mathrm{bau}}_b/\varrho_b)$. This captures how intermediated trading shapes the admissible interval for market clearing.
\end{remark}

\begin{proposition}[\textbf{Hybrid market scheme}]\label{prop:equ_spot}
	Under the feasibility condition~\eqref{eq:adequate} on the number of certificates $A$, there exists a unique clearing spot price $\bm{S} \in (0, \lambda]$ given by
	\begin{align*}
		\bm{S} & =  \dfrac{ 2 (\mathrm{E}^{\mathrm{bau}}  - A) - \sum\mathrm{E}^{\mathrm{bau}}_b }{ 2\varrho- \sum \varrho_b },
	\end{align*}
	with the resulting carbon price and certificates demand:
	\begin{align*}
		& \bm{P_c}(\bm{S})  = \bm{S}, 
		\qquad 
			\bm{\delta_c}(\bm{P_c}(\bm{S}))  = \bm{\delta}_c(\bm{S})=    \mathrm{E}^{\mathrm{bau}}_c - \bm{S} \varrho_c, \; \forall c, \\
		& \bm{P_b}(\bm{S}) = \dfrac{1}{2} \left(\bm{S} + \dfrac{\mathrm{E}^{\mathrm{bau}}_b}{\varrho_b} \right),  
		\qquad
		\bm{\delta_b}(\bm{P_b}(\bm{S}))  = \frac{1}{2}  \left( \mathrm{E}^{\mathrm{bau}}_b -  \bm{S}\varrho_b \right), \; \forall b .
	\end{align*}
\end{proposition}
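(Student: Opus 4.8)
The plan is to collapse the market--clearing condition into a single affine equation in the spot price $S$ and solve it, after first using the feasibility hypothesis to confirm that every intermediated demand sits in its interior (non-corner) regime. First I would invoke condition~\eqref{eq:adequate}: by construction it forces the sought equilibrium price into the range of $S$ on which the observation~\eqref{eq:simplified} applies, so that for each intermediated company $b$ the price and demand take the interior form $\bm{P_b}(S) = \tfrac{1}{2}\bigl(S + \mathrm{E}^{\mathrm{bau}}_b/\varrho_b\bigr)$ and $\bm{\delta_b}(\bm{P_b}(S)) = \tfrac{1}{2}\bigl(\mathrm{E}^{\mathrm{bau}}_b - S\varrho_b\bigr)$, rather than the capped branch $\mathrm{E}^{\mathrm{bau}}_b - \lambda\varrho_b$. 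The direct-participation demands $\bm{\delta_c}(S) = \mathrm{E}^{\mathrm{bau}}_c - S\varrho_c$ come directly from Proposition~\ref{prop:given_S}. On this range the clearing map~\eqref{eq:mar_cc} is therefore globally affine in $S$.

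Next I would write $A = \text{\emph{clear}}(S)$ explicitly as
\[
	A = \sum \bigl(\mathrm{E}^{\mathrm{bau}}_c - S\varrho_c\bigr) + \sum \tfrac{1}{2}\bigl(\mathrm{E}^{\mathrm{bau}}_b - S\varrho_b\bigr),
\]
and observe that the right-hand side is continuous and strictly decreasing in $S$, with slope $-\bigl(\sum\varrho_c + \tfrac{1}{2}\sum\varrho_b\bigr) < 0$. Strict monotonicity yields uniqueness at once, while the two endpoint evaluations of $\text{\emph{clear}}$ appearing in~\eqref{eq:adequate} locate the unique root inside $(0,\lambda]$. Collecting the $S$-terms and eliminating the $c$-sums via $\mathrm{E}^{\mathrm{bau}} = \sum\mathrm{E}^{\mathrm{bau}}_b + \sum\mathrm{E}^{\mathrm{bau}}_c$ and $\varrho = \sum\varrho_b + \sum\varrho_c$ then gives, after multiplying through by $2$,
\[
	\bm{S} = \frac{2\bigl(\mathrm{E}^{\mathrm{bau}} - A\bigr) - \sum\mathrm{E}^{\mathrm{bau}}_b}{2\varrho - \sum\varrho_b}.
\]

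Finally I would substitute this $\bm{S}$ back into the interior formulas of Proposition~\ref{prop:given_S} to read off $\bm{P_c}(\bm{S}) = \bm{S}$, $\bm{\delta_c}(\bm{S}) = \mathrm{E}^{\mathrm{bau}}_c - \bm{S}\varrho_c$, $\bm{P_b}(\bm{S}) = \tfrac{1}{2}\bigl(\bm{S} + \mathrm{E}^{\mathrm{bau}}_b/\varrho_b\bigr)$ and $\bm{\delta_b}(\bm{P_b}(\bm{S})) = \tfrac{1}{2}\bigl(\mathrm{E}^{\mathrm{bau}}_b - \bm{S}\varrho_b\bigr)$, completing the proof. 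The only genuine subtlety, as opposed to the routine algebra, is this first reduction step: ensuring that~\eqref{eq:adequate} simultaneously forces the interior branch for every $b$, so that the clearing map is affine on the whole relevant interval and the single linear solve is legitimate. The boundary check $\bm{S} \le \lambda$ is then immediate, since the standing assumption $\lambda\varrho_b < \mathrm{E}^{\mathrm{bau}}_b$ makes $2\lambda - \max_b\bigl(\mathrm{E}^{\mathrm{bau}}_b/\varrho_b\bigr) < \lambda$, placing the root in $(0,\lambda)$.
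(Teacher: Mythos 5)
Your proposal is correct and follows essentially the same route as the paper's proof: use the feasibility condition~\eqref{eq:adequate} to place every intermediated company on the interior branch~\eqref{eq:simplified}, note that the clearing map is then affine and strictly decreasing (hence bijective) on the relevant interval, and solve the resulting linear equation for $\bm{S}$. Your added remarks on the slope of the clearing map and the endpoint check $2\lambda - \max_b\bigl(\mathrm{E}^{\mathrm{bau}}_b/\varrho_b\bigr) < \lambda$ simply make explicit what the paper states as bijectivity.
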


\begin{proof}
	If the number of certificates $A$ satisfies the feasibility condition~\eqref{eq:adequate}, then the clearing function $\text{\emph{clear}} (S)$ is bijective on the interval $(0, \lambda]$.
	As a result, there exists a unique equilibrium spot price $\bm{S}$ that clears the market.
	Under this condition, and using~\eqref{eq:simplified}, we obtain the noted expressions for the intermediated price $\bm{P_b}(\bm{S})$ and the corresponding certificate demand $\bm{\delta_b}(\bm{P_i}(\bm{S}))$ for companies purchasing certificates through intermediaries.
	The result for companies purchasing certificates directly on the market follows immediately from Proposition~\ref{prop:given_S}.
	Under this setting, the market clearing condition becomes
	\begin{equation*}
		A = \sum \left( \mathrm{E}^{\mathrm{bau}}_c - S \varrho_c \right) + \frac{1}{2} \sum \left( \mathrm{E}^{\mathrm{bau}}_b - S\varrho_b \right).
	\end{equation*}
	Solving this equation for the spot price $S$ yields the desired result.
\end{proof}

\section{Some Supplementary results on Section~\ref{sec:compare}}\label{app:hybridcase}

\begin{theorem}[\textbf{Hybrid market scheme}]\label{thm:comparison_hybrid} Assume that the tax rate $\tau$, penalty rate $\lambda$, and number of certificates $A$ are chosen such that  aggregate emissions are identical across all policy schemes.
	Then
	\begin{itemize}[fullwidth]
		\item \textbf{Tax scheme}:
		      The tax rate is given by $ \tau =  (\mathrm{E}^{\mathrm{bau}}-A)/\varrho$ and the  wealth of company $i$ is
		      \begin{equation*}
			      \mathcal{W}^{\mathrm{tax}}_{\mathrm{C}, i} = \mathcal{W}^{\mathrm{bau}}_{\mathrm{C}, i} - \tau \left(\mathrm{E}^{\mathrm{bau}}_i - \frac{\tau}{2}\varrho_i \right),
		      \end{equation*}
		      where the index $i$ is arbitrary ($i=b, c$).
		      This leads to the aggregate wealth of the companies and  a wealth for the regulator given by
		      \begin{equation*}
			      \mathcal{W}^{\mathrm{tax}}_{\mathrm{C}}  = \mathcal{W}^{\mathrm{bau}}_{\mathrm{C}} - \tau \left(\mathrm{E}^{\mathrm{bau}} - \frac{\tau}{2}\varrho \right),
			      \qquad
			      \mathcal{W}^{\mathrm{tax}}_{\mathrm{R}}       = \tau A.
		      \end{equation*}

		\item \textbf{Market scheme}: Under the feasibility condition~\eqref{eq:adequate} on  $A$:
		      \begin{itemize}[label=$\blacktriangleright$]
			      \item For each company $c$  purchasing certificates directly on the market, the certificate price it faces is $\bm{P_c} (\bm{S}) = \bm{S}= \frac{2\varrho \tau- \sum \mathrm{E}^{\mathrm{bau}}_b  }{2 \varrho - \sum \varrho_b}$,
			            with resulting  wealth of
			            \begin{equation*}
				            \mathcal{W}^{\mathrm{mar}}_{\mathrm{C}, c}
				            = \mathcal{W}^{\mathrm{tax}}_{\mathrm{C}, c} +   \frac{ \left(\mathrm{E}_c^{\mathrm{tax}} + \mathrm{E}_c^{\mathrm{mar}} \right)}{2 \varrho} \sum\mathrm{E}_b^{\mathrm{mar}}.
			            \end{equation*}

			      \item For each company $b$  purchasing certificates through intermediary, the intermediated price it faces is $\bm{P_b}(\bm{S}) =  \frac{\tau}{2}+ \frac{1}{2} \left( \frac{\mathrm{E}^{\mathrm{bau}}_b }{ \varrho_b}  - \frac{1}{ \varrho} \sum \mathrm{E}_\beta^{\mathrm{mar}} \right) $,   with resulting wealth of the intermediary serving company $b$ and  wealth of  company $b$ given  by
			            \begin{align*}
				            \mathcal{W}^{\mathrm{mar}}_{\mathrm{F}, b} & = \frac{(\mathrm{E}_b^{\mathrm{mar}})^2}{\varrho_b},                                                         \\
				            \mathcal{W}^{\mathrm{mar}}_{\mathrm{C}, b} & =  \mathcal{W}^{\mathrm{tax}}_{\mathrm{C}, b} +   \frac{ \left(\mathrm{E}_b^{\mathrm{tax}} + \mathrm{E}_b^{\mathrm{mar}} \right)}{2\varrho} \sum \mathrm{E}_\beta^{\mathrm{mar}}  -\frac{3}{2} \mathcal{W}^{\mathrm{mar}}_{{\mathrm{F}}, b} + \frac{\mathrm{E}_b^{\mathrm{mar}} }{2 \varrho}  \sum \mathrm{E}_\beta^{\mathrm{mar}},
			            \end{align*}
			            where $\beta$ in the sum runs over all the companies that purchase certificates through financial intermediaries.
			      \item \textbf{Aggregated wealth}:
			            The aggregated wealth of the financial intermediaries, of the companies,  and the wealth of the  regulator, are given by
			            \begin{align*}
				            \mathcal{W}^{\mathrm{mar}}_{\mathrm{F}} & = \sum \frac{(\mathrm{E}_b^{\mathrm{mar}})^2}{\varrho_b},                                   \\
				            \mathcal{W}^{\mathrm{mar}}_{\mathrm{C}} & = \mathcal{W}^{\mathrm{tax}}_{\mathrm{C}} - \frac{3}{2}  \mathcal{W}^{\mathrm{mar}}_{\mathrm{F}} +   \frac{A}{\varrho}  \sum \mathrm{E}_b^{\mathrm{mar}}  + \frac{\left(\sum \mathrm{E}_b^{\mathrm{mar}} \right)^2}{2 \varrho}, \\
				            \mathcal{W}^{\mathrm{mar}}_{\mathrm{R}}      & = \mathcal{W}^{\mathrm{tax}}_{\mathrm{R}} - \frac{A}{\varrho} \sum  \mathrm{E}_b^{\mathrm{mar}}.
			            \end{align*}
		      \end{itemize}
	\end{itemize}
\end{theorem}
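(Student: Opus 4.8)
The plan is to reduce every assertion to the single-company formulas of Proposition~\ref{prop:optimalquant} combined with the equilibrium spot price of Proposition~\ref{prop:equ_spot}, and then to aggregate using the calibration constraint $\mathrm{E}^{\mathrm{tax}} = \mathrm{E}^{\mathrm{mar}} = A$. For the tax scheme I would simply invoke Proposition~\ref{prop:optimalquant} with the calibrated rate $\tau = (\mathrm{E}^{\mathrm{bau}} - A)/\varrho$, which is forced by $\mathrm{E}^{\mathrm{tax}} = \mathrm{E}^{\mathrm{bau}} - \tau\varrho = A$; summing the per-company expressions over $i$ yields the stated aggregate company wealth, while $\mathcal{W}^{\mathrm{tax}}_{\mathrm{R}} = \tau A$ is immediate since the regulator collects $\tau$ per emitted unit.

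For the market scheme I would start from Proposition~\ref{prop:equ_spot}. The decisive preliminary identity is the rewriting $\bm{S} = \tau - \tfrac{1}{\varrho}\sum_\beta \mathrm{E}^{\mathrm{mar}}_\beta$, where the sum runs over intermediated companies: this I would obtain by inserting $\mathrm{E}^{\mathrm{mar}}_b = \tfrac{1}{2}(\mathrm{E}^{\mathrm{bau}}_b - \bm{S}\varrho_b)$ into the closed form of $\bm{S}$ and using $\mathrm{E}^{\mathrm{bau}} - A = \tau\varrho$. This identity directly produces both the displayed form of $\bm{P_c}(\bm{S}) = \bm{S}$ and of $\bm{P_b}(\bm{S}) = \tfrac{\tau}{2} + \tfrac{1}{2}\bigl(\mathrm{E}^{\mathrm{bau}}_b/\varrho_b - \tfrac{1}{\varrho}\sum_\beta \mathrm{E}^{\mathrm{mar}}_\beta\bigr)$. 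Since Proposition~\ref{prop:optimalquant} gives $\mathrm{E}^{\mathrm{mar}}_i = \bm{\delta_i}$ for every company, the clearing condition $\sum \bm{\delta_i} = A$ forces $\mathrm{E}^{\mathrm{mar}} = A$ and makes the penalty term in the regulator's wealth vanish, so $\mathcal{W}^{\mathrm{mar}}_{\mathrm{R}} = \bm{S}A$.

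The central lever is a master identity for the per-company wealth gap. Evaluating the wealth of Proposition~\ref{prop:optimalquant} at $P_i = \bm{P_i}(\bm{S})$ and at $\tau$, using $\mathrm{E}^{\mathrm{tax}}_i = \mathrm{E}^{\mathrm{bau}}_i - \tau\varrho_i$ and $\mathrm{E}^{\mathrm{mar}}_i = \mathrm{E}^{\mathrm{bau}}_i - P_i\varrho_i$, and factoring the resulting difference of quadratics, I would establish
\begin{equation*}
	\mathcal{W}^{\mathrm{mar}}_{\mathrm{C}, i} - \mathcal{W}^{\mathrm{tax}}_{\mathrm{C}, i} = \frac{(\mathrm{E}^{\mathrm{mar}}_i)^2 - (\mathrm{E}^{\mathrm{tax}}_i)^2}{2\varrho_i}.
\end{equation*}
For a direct participant $c$ this gives the claimed formula once I substitute $\tau - \bm{S} = \tfrac{1}{\varrho}\sum_\beta \mathrm{E}^{\mathrm{mar}}_\beta$. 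For an intermediated company $b$ I would first compute the intermediary's wealth $\mathcal{W}^{\mathrm{mar}}_{\mathrm{F}, b} = \bm{\delta_b}(\bm{P_b} - \bm{S}) = (\mathrm{E}^{\mathrm{mar}}_b)^2/\varrho_b$ from the halved demand $\bm{\delta_b} = \tfrac{1}{2}(\mathrm{E}^{\mathrm{bau}}_b - \bm{S}\varrho_b)$ and the spread $\bm{P_b} - \bm{S} = \tfrac{1}{2\varrho_b}(\mathrm{E}^{\mathrm{bau}}_b - \bm{S}\varrho_b)$, and then recast the master identity into the claimed three-term shape by adding and subtracting $\tfrac{3}{2}\mathcal{W}^{\mathrm{mar}}_{\mathrm{F}, b}$. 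Setting $u = \mathrm{E}^{\mathrm{tax}}_b/\varrho_b$ and $v = \tfrac{1}{\varrho}\sum_\beta \mathrm{E}^{\mathrm{mar}}_\beta$, so that $\mathrm{E}^{\mathrm{mar}}_b = \tfrac{\varrho_b}{2}(u+v)$, the required equality reduces to the elementary identity $(u+v)^2 - u^2 = v(2u+v)$.

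Finally I would aggregate. Summing the direct and intermediated wealth gaps, the coefficient multiplying $\tfrac{1}{2\varrho}\sum_\beta \mathrm{E}^{\mathrm{mar}}_\beta$ collapses via $\mathrm{E}^{\mathrm{tax}} + \mathrm{E}^{\mathrm{mar}} = 2A$ into the term $\tfrac{A}{\varrho}\sum_b \mathrm{E}^{\mathrm{mar}}_b$, the residual $\mathrm{E}^{\mathrm{mar}}_b$-contributions combine into $(\sum_b \mathrm{E}^{\mathrm{mar}}_b)^2/(2\varrho)$, and the $-\tfrac{3}{2}\mathcal{W}^{\mathrm{mar}}_{\mathrm{F}, b}$ pieces sum to $-\tfrac{3}{2}\mathcal{W}^{\mathrm{mar}}_{\mathrm{F}}$, giving the stated $\mathcal{W}^{\mathrm{mar}}_{\mathrm{C}}$; the regulator's aggregate follows from $\mathcal{W}^{\mathrm{mar}}_{\mathrm{R}} = \bm{S}A$ and $\bm{S} = \tau - \tfrac{1}{\varrho}\sum_b \mathrm{E}^{\mathrm{mar}}_b$. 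I expect the main obstacle to be the bookkeeping in the intermediated case: correctly propagating the factor $\tfrac{1}{2}$ from the demand into $\mathrm{E}^{\mathrm{mar}}_b$ and $\mathcal{W}^{\mathrm{mar}}_{\mathrm{F}, b}$ so that the decomposition into the direct-type term, the rent term $-\tfrac{3}{2}\mathcal{W}^{\mathrm{mar}}_{\mathrm{F}, b}$, and the residual $\tfrac{\mathrm{E}^{\mathrm{mar}}_b}{2\varrho}\sum_\beta \mathrm{E}^{\mathrm{mar}}_\beta$ comes out exactly, and in applying the calibration identities $\mathrm{E}^{\mathrm{tax}} = \mathrm{E}^{\mathrm{mar}} = A$ consistently across the two groups during aggregation.
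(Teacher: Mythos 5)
Your proposal is correct and follows essentially the same route as the paper's proof: it rests on Proposition~\ref{prop:optimalquant} and Proposition~\ref{prop:equ_spot}, the key relation $\tau = \bm{S} + \tfrac{1}{\varrho}\sum \mathrm{E}^{\mathrm{mar}}_b$, and a per-company wealth-gap identity that is the paper's $(\tau - P_i)\tfrac{\mathrm{E}^{\mathrm{tax}}_i + \mathrm{E}^{\mathrm{mar}}_i}{2}$ written in the factored form $\tfrac{(\mathrm{E}^{\mathrm{mar}}_i)^2 - (\mathrm{E}^{\mathrm{tax}}_i)^2}{2\varrho_i}$, since $\mathrm{E}^{\mathrm{mar}}_i - \mathrm{E}^{\mathrm{tax}}_i = (\tau - P_i)\varrho_i$. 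The only cosmetic divergence is that you organize the intermediated-company bookkeeping via the substitution $u = \mathrm{E}^{\mathrm{tax}}_b/\varrho_b$, $v = \tfrac{1}{\varrho}\sum_\beta \mathrm{E}^{\mathrm{mar}}_\beta$ and the identity $(u+v)^2 - u^2 = v(2u+v)$, where the paper instead substitutes $\mathrm{E}^{\mathrm{tax}}_b = 2\mathrm{E}^{\mathrm{mar}}_b - \tfrac{\varrho_b}{\varrho}\sum_\beta \mathrm{E}^{\mathrm{mar}}_\beta$; both reductions check out and yield the same three-term decomposition.
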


\begin{proof}
	Under the tax scheme, by Proposition~\ref{prop:optimalquant} and the assumption that aggregate emissions equal the certificate cap, i.e.  $\mathrm{E}^{\mathrm{tax}} = A$ (with $i = b, c$), we obtain
	\begin{equation*}
		\mathrm{E}^{\mathrm{tax}} = \sum \left( \mathrm{E}^{\mathrm{bau}}_i - \tau \varrho_i \right) = \mathrm{E}^{\mathrm{bau}} - \tau \varrho = A,
	\end{equation*}
	which implies that $\tau =(\mathrm{E}^{\mathrm{bau}}-A )/\varrho$.
	For given tax rate $\tau$ and certificates price $P_i$, by Proposition~\ref{prop:optimalquant}, the wealth of the generic company $i$ (with $i=b,c$) under the tax and any market schemes satisfy
	\begin{align*}
		\mathcal{W}^{\mathrm{tax}}_{\mathrm{C}, i} & = \mathcal{W}^{\mathrm{bau}}_{ {\mathrm{C}}, i}- \tau \left(\mathrm{E}^{\mathrm{bau}}_i - \frac{\tau}{2} \varrho_i \right),
		                                        &
		\mathcal{W}^{\mathrm{mar}}_{\mathrm{C}, i} & = \mathcal{W}^{\mathrm{bau}}_{ {\mathrm{C}}, i}- P_i \left(\mathrm{E}^{\mathrm{bau}}_i - \frac{P_i}{2} \varrho_i \right).
	\end{align*}
	It follows that
	\begin{align*}
		\mathcal{W}^{\mathrm{mar}}_{\mathrm{C}, i} & =  \mathcal{W}^{\mathrm{tax}}_{\mathrm{C}, i}  +  \tau \left(\mathrm{E}^{\mathrm{bau}}_i- \frac{\tau}{2} \varrho_i \right)- P_i \left(\mathrm{E}^{\mathrm{bau}}_i - \frac{P_i}{2} \varrho_i \right) \\
		                                        & = \mathcal{W}^{\mathrm{tax}}_{\mathrm{C}, i}  + ( \tau -P_i)\mathrm{E}^{\mathrm{bau}}_i - \frac{1}{2}( \tau^2 -P^2_i)\varrho_i                                                                           \\
		                                        & = \mathcal{W}^{\mathrm{tax}}_{\mathrm{C}, i}  + ( \tau -P_i)\left( \mathrm{E}^{\mathrm{bau}}_i - \frac{1}{2}(\tau + P_i)\varrho_i \right).
	\end{align*}
	This, together with the relations $\mathrm{E}^{\mathrm{tax}}_i = \mathrm{E}^{\mathrm{bau}}_i - \tau \varrho_i$ and $\mathrm{E}^{\mathrm{mar}}_i = \mathrm{E}^{\mathrm{bau}}_i - P_i \varrho_i$ from Proposition~\ref{prop:optimalquant}, yields
	\begin{equation}\label{eq:taxmar}
		\mathcal{W}^{\mathrm{mar}}_{\mathrm{C}, i}  =  \mathcal{W}^{\mathrm{tax}}_{\mathrm{C}, i}  + ( \tau -P_i)  \frac{ \left(\mathrm{E}^{\mathrm{tax}}_i + \mathrm{E}^{\mathrm{mar}}_i\right)}{2}.
	\end{equation}

	We now turn to the use of the specific indices $b$ and $c$ to distinguish companies under the market scheme.
	Substituting $\mathrm{E}^{\mathrm{bau}} - A = \tau \varrho$ into the expression of the spot price $\bm{S}$ given in Proposition~\ref{prop:equ_spot} yield
	\begin{align}
		\bm{P_c}  & = \bm{S} = \frac{2\varrho \tau- \sum \mathrm{E}^{\mathrm{bau}}_b  }{2 \varrho - \sum \varrho_b}, \notag                                                                  \\
		\tau & = \bm{S} + \frac{1}{\varrho}\sum \frac{1}{2}(\mathrm{E}^{\mathrm{bau}}_b - \bm{S} \varrho_b)  = \bm{S} + \frac{1}{\varrho}\sum \mathrm{E}^{\mathrm{mar}}_b,  \label{eq:S_tau}
	\end{align}
	where the second identity for $\tau$ follows from Proposition~\ref{prop:equ_spot}, using the identity  $\bm{\delta_b} = \mathrm{E}^{\mathrm{mar}}_b$ from Proposition~\ref{prop:optimalquant}.
	It implies that $\tau - \bm{P_c} = (1/\varrho)\sum \mathrm{E}^{\mathrm{mar}}_b$.
	This together with~\eqref{eq:taxmar} with $i=c$ yields
	\begin{equation*}
		\mathcal{W}^{\mathrm{mar}}_{\mathrm{C}, c} =  \mathcal{W}^{\mathrm{tax}}_{\mathrm{C}, c} +   \dfrac{ \left(\mathrm{E}_c^{\mathrm{tax}} + \mathrm{E}_c^{\mathrm{mar}} \right)}{2 \varrho} \sum  \mathrm{E}_b^{\mathrm{mar}}  .
	\end{equation*}
	From~\eqref{eq:S_tau}, we obtain $\bm{S} = \tau-  (1/\varrho)\sum \mathrm{E}^{\mathrm{mar}}_b$.
	Substituting this value for $\bm{S}$ into the expression of $\bm{P_b}$ from Proposition~\ref{prop:equ_spot} yields
	\begin{equation*}
		\bm{P_b} = \frac{\tau}{2}+ \frac{1}{2} \left( \frac{\mathrm{E}^{\mathrm{bau}}_b }{ \varrho_b}  - \sum \frac{\mathrm{E}_\beta^{\mathrm{mar}} }{ \varrho} \right),
	\end{equation*}
	where $\beta$ in the sum runs over all the companies that purchase certificates through financial intermediaries.
	Hence,
	\begin{equation*}
		\tau  = 2 \bm{P_b}-\frac{\mathrm{E}^{\mathrm{bau}}_b }{ \varrho_b}  + \frac{1}{ \varrho}\sum \mathrm{E}_\beta^{\mathrm{mar}},
		\qquad
		\bm{S}    = 2\bm{P_b} - \frac{\mathrm{E}^{\mathrm{bau}}_b }{ \varrho_b}.
	\end{equation*}
	This,  via $\mathrm{E}^{\mathrm{mar}}_b = \mathrm{E}^{\mathrm{bau}}_b -\bm{P_b} \varrho_b$ from Proposition~\ref{prop:optimalquant}  implies that
	\begin{align}\label{eq:diff}
		\bm{P_b} - \bm{S}   =\frac{\mathrm{E}^{\mathrm{bau}}_b }{ \varrho_b} -\bm{P_b} =  \frac{\mathrm{E}_b^{\mathrm{mar}} }{\varrho_b},
		\quad
		\tau - \bm{P_b}  = - \frac{\mathrm{E}_b^{\mathrm{mar}} }{\varrho_b} + \frac{1}{ \varrho}\sum \mathrm{E}_\beta^{\mathrm{mar}} .
	\end{align}
	Using the identity $\bm{\delta_b} = \mathrm{E}_b^{\mathrm{mar}}$ and substituting the expression for $\bm{P_b} - \bm{S}$ of~\eqref{eq:diff} into the intermediary's wealth formula $\bm{\delta_b} \times (\bm{P_b} - \bm{S})$ yields
	\begin{equation}\label{eq:intermed_prof}
		\mathcal{W}^{\mathrm{mar}}_{\mathrm{F}, b}  = \frac{(\mathrm{E}_b^{\mathrm{mar}})^2}{\varrho_b}.
	\end{equation}
	Substituting the value of $\tau -\bm{P_b}$ of~\eqref{eq:diff} into the wealth function ~\eqref{eq:taxmar} with $i=b$ yields
	\begin{align*}
		\mathcal{W}^{\mathrm{mar}}_{\mathrm{C}, b} & =  \displaystyle \mathcal{W}^{\mathrm{tax}}_{\mathrm{C}, b} +   \frac{ \left(\mathrm{E}_b^{\mathrm{tax}} + \mathrm{E}_b^{\mathrm{mar}} \right)}{2\varrho} \sum \mathrm{E}_\beta^{\mathrm{mar}}  -\frac{1}{2} \mathcal{W}^{\mathrm{mar}}_{{\mathrm{F}}, b} - \frac{1}{2 \varrho_b } \mathrm{E}_b^{\mathrm{tax}} \mathrm{E}_b^{\mathrm{mar}}            \\
		                                        & = \displaystyle \mathcal{W}^{\mathrm{tax}}_{\mathrm{C}, b} +   \frac{ \left(\mathrm{E}_b^{\mathrm{tax}} + \mathrm{E}_b^{\mathrm{mar}} \right)}{2 \varrho} \sum \mathrm{E}_\beta^{\mathrm{mar}}  -\frac{3}{2} \mathcal{W}^{\mathrm{mar}}_{{\mathrm{F}}, b} + \frac{\mathrm{E}_b^{\mathrm{mar}} }{2\varrho}  \sum \mathrm{E}_\beta^{\mathrm{mar}},
	\end{align*}
	where the second equality follows from substituting the identity (derive the expression of $\tau$ using~\eqref{eq:diff}):
	\begin{align*}
		\mathrm{E}_b^{\mathrm{tax}} & = \mathrm{E}_b^{\mathrm{bau}} - \tau \varrho_b = 2 \mathrm{E}_b^{\mathrm{mar}}  - \frac{\varrho_b}{\varrho }\sum \mathrm{E}_\beta^{\mathrm{mar}} .
	\end{align*}

	Aggregating~\eqref{eq:intermed_prof} over all intermediaries as $\mathcal{W}^{\mathrm{mar}}_{\mathrm{F}}   = \sum \mathcal{W}^{\mathrm{mar}}_{\mathrm{F}, b}$ yields the required expression for the aggregate wealth earned by financial intermediaries.
	As for the aggregate wealth of the companies, using the condition $\mathrm{E}^{\mathrm{tax}} = \sum\mathrm{E}_c^{\mathrm{tax}} + \sum \mathrm{E}_b^{\mathrm{tax}} =  \sum\mathrm{E}_c^{\mathrm{mar}} + \sum \mathrm{E}_b^{\mathrm{mar}} = \mathrm{E}^{\mathrm{mar}} =A$, and the aggregation
	\begin{align*}
		\mathcal{W}^{\mathrm{mar}}_\mathrm{C} & = \sum \mathcal{W}^{\mathrm{mar}}_{\mathrm{C}, c} + \sum \mathcal{W}^{\mathrm{mar}}_{\mathrm{C}, b}
	\end{align*}
	yields the required result for $\mathcal{W}^{\mathrm{mar}}_\mathrm{C}$.
	As for the wealth of the regulator under the market scheme, using the value of $\bm{S}$ in~\eqref{eq:S_tau} yields
	\begin{align*}
		\mathcal{W}^{\mathrm{mar}}_{\mathrm{R}} & = \bm{S} A=  \left( \tau -\frac{1}{\varrho}\sum \mathrm{E}^{\mathrm{mar}}_b \right) A
		= \mathcal{W}^{\mathrm{tax}}_{\mathrm{R}} - \frac{A}{\varrho} \sum  \mathrm{E}_b^{\mathrm{mar}},
	\end{align*}
	which ends the proof of the theorem.
\end{proof}

\begin{corollary}\label{cor:jensen_hybrid}
	Under the market scheme, the GDP of the real economy satisfies
	\begin{align*}
		\mathrm{GDP}^{\mathrm{mar}}
		 & = \mathrm{GDP}^{\mathrm{tax}} + \tfrac{1}{2}\underbrace{\left(  \tfrac{ \left(\sum \mathrm{E}^{\mathrm{mar}}_b\right)^2}{\varrho}-\sum \tfrac{(\mathrm{E}^{\mathrm{mar}}_b)^2}{\varrho_b} \right)}_{\leq 0}.
	\end{align*}
	Moreover, the combined GDP contributions of companies and the regulator are lower under the market scheme than under the tax scheme, i.e.
	\begin{align*}
			 	& \mathcal{W}^{\mathrm{mar}}_{\mathrm{C}} + \mathcal{W}^{\mathrm{mar}}_{\mathrm{R}}
			        		=  \mathcal{W}^{\mathrm{tax}}_{\mathrm{C}} + \mathcal{W}^{\mathrm{tax}}_{\mathrm{R}}+   \tfrac{3}{2} \left(  \tfrac{ \left(\sum \mathrm{E}^{\mathrm{mar}}_b\right)^2}{\varrho}-\sum \tfrac{(\mathrm{E}^{\mathrm{mar}}_b)^2}{\varrho_b} \right) - \tfrac{ \left(\sum \mathrm{E}^{\mathrm{mar}}_b\right)^2}{\varrho}.
	\end{align*}
\end{corollary}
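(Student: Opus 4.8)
The plan is to read off the three aggregate wealth expressions supplied by Theorem~\ref{thm:comparison_hybrid}, namely $\mathcal{W}^{\mathrm{mar}}_{\mathrm{F}} = \sum (\mathrm{E}^{\mathrm{mar}}_b)^2/\varrho_b$, the expression for $\mathcal{W}^{\mathrm{mar}}_{\mathrm{C}}$, and $\mathcal{W}^{\mathrm{mar}}_{\mathrm{R}} = \mathcal{W}^{\mathrm{tax}}_{\mathrm{R}} - (A/\varrho)\sum \mathrm{E}^{\mathrm{mar}}_b$, and to substitute them into the definition $\mathrm{GDP}^{\cdot} = \mathcal{W}^{\cdot}_{\mathrm{C}} + \mathcal{W}^{\cdot}_{\mathrm{F}} + \mathcal{W}^{\cdot}_{\mathrm{R}}$. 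Since there are no intermediaries under the tax scheme, I would use $\mathcal{W}^{\mathrm{tax}}_{\mathrm{F}} = 0$, so that $\mathrm{GDP}^{\mathrm{tax}} = \mathcal{W}^{\mathrm{tax}}_{\mathrm{C}} + \mathcal{W}^{\mathrm{tax}}_{\mathrm{R}}$.

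Summing the three market-scheme quantities, the two occurrences of $(A/\varrho)\sum \mathrm{E}^{\mathrm{mar}}_b$ (one in $\mathcal{W}^{\mathrm{mar}}_{\mathrm{C}}$, one with opposite sign in $\mathcal{W}^{\mathrm{mar}}_{\mathrm{R}}$) cancel, while the intermediary contribution combines as $-\tfrac{3}{2}\mathcal{W}^{\mathrm{mar}}_{\mathrm{F}} + \mathcal{W}^{\mathrm{mar}}_{\mathrm{F}} = -\tfrac{1}{2}\mathcal{W}^{\mathrm{mar}}_{\mathrm{F}}$. This leaves $\mathrm{GDP}^{\mathrm{mar}} = \mathrm{GDP}^{\mathrm{tax}} - \tfrac{1}{2}\mathcal{W}^{\mathrm{mar}}_{\mathrm{F}} + (\sum \mathrm{E}^{\mathrm{mar}}_b)^2/(2\varrho)$, and inserting the explicit value of $\mathcal{W}^{\mathrm{mar}}_{\mathrm{F}}$ yields exactly the first displayed identity.

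The one genuine step is the sign of the bracketed term, i.e.\ that $(\sum_b \mathrm{E}^{\mathrm{mar}}_b)^2/\varrho \leq \sum_b (\mathrm{E}^{\mathrm{mar}}_b)^2/\varrho_b$. I would obtain this from the Cauchy--Schwarz (equivalently Jensen) inequality $(\sum_b \mathrm{E}^{\mathrm{mar}}_b)^2 \leq (\sum_b \varrho_b)\big(\sum_b (\mathrm{E}^{\mathrm{mar}}_b)^2/\varrho_b\big)$, combined with the elementary bound $\varrho = \sum_b \varrho_b + \sum_c \varrho_c \geq \sum_b \varrho_b$: dividing the Cauchy--Schwarz inequality by $\sum_b \varrho_b$ and weakening the denominator to $\varrho$ gives the claim. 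This is the only place where more than bookkeeping is needed, and even here the obstacle is mild.

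For the second identity I would repeat the same algebra but add only $\mathcal{W}^{\mathrm{mar}}_{\mathrm{C}}$ and $\mathcal{W}^{\mathrm{mar}}_{\mathrm{R}}$ (omitting $\mathcal{W}^{\mathrm{mar}}_{\mathrm{F}}$). The $A$-terms again cancel, leaving $\mathcal{W}^{\mathrm{tax}}_{\mathrm{C}} + \mathcal{W}^{\mathrm{tax}}_{\mathrm{R}} - \tfrac{3}{2}\mathcal{W}^{\mathrm{mar}}_{\mathrm{F}} + (\sum \mathrm{E}^{\mathrm{mar}}_b)^2/(2\varrho)$, which I would regroup as the stated $\tfrac{3}{2}\big((\sum \mathrm{E}^{\mathrm{mar}}_b)^2/\varrho - \mathcal{W}^{\mathrm{mar}}_{\mathrm{F}}\big) - (\sum \mathrm{E}^{\mathrm{mar}}_b)^2/\varrho$ added to $\mathcal{W}^{\mathrm{tax}}_{\mathrm{C}} + \mathcal{W}^{\mathrm{tax}}_{\mathrm{R}}$. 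Strictness then follows because the first bracket is $\leq 0$ by the step above and the remaining term $-(\sum \mathrm{E}^{\mathrm{mar}}_b)^2/\varrho$ is strictly negative whenever at least one intermediated company has positive emissions.
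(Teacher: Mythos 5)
Your proposal is correct and follows essentially the same route as the paper: substitute the aggregate expressions from Theorem~\ref{thm:comparison_hybrid} into the GDP definition, observe that the $\frac{A}{\varrho}\sum\mathrm{E}^{\mathrm{mar}}_b$ terms cancel and the intermediary wealth enters with net coefficient $-\tfrac{1}{2}$ (resp.\ $-\tfrac{3}{2}$), and control the sign of the bracket by Jensen/Cauchy--Schwarz. If anything you are slightly more careful than the paper's one-line proof, since in the hybrid setting the weights in the Cauchy--Schwarz bound sum to $\sum_b\varrho_b$ rather than to $\varrho$, and your extra weakening step $\varrho\geq\sum_b\varrho_b$ is exactly what is needed to close that gap.
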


\begin{proof}
	The inequality ($\leq 0$) holds by the application of Jensen-inequality.
\end{proof}

\section{Some Supplementary Results and Proofs on Section~\ref{sec:princing_random}}\label{app:random_case}

When company $i$ purchases certificates through an intermediary, the intermediated certificate price $\bm{P_i} = \bm{P_i}(S)$ given a spot price $S$, is characterized as follows:

\begin{proposition}\label{prop:spot_random}
	Let $S \in (0, \lambda]$ be a given spot price.
	If the maps
	\begin{equation*}
		[S, \lambda] \ni P \mapsto \mathbb{V}{\rm a}\mathbb{R}_{\frac{P}{\lambda}}(X)
		\quad \text{and} \quad
		P \mapsto P\, \mathbb{ES}_{\frac{P}{\lambda}}(X)
	\end{equation*}
	are continuously differentiable, then a unique optimal intermediated  price  $S \leq \bm{P_i}(S) \leq  \lambda$ exists and satisfies  the first-order condition
	\begin{equation*}
		\bm{\delta}'_i(\bm{P_i}) (\bm{P_i} - S) + \bm{ \delta_i}(\bm{P_i}) = 0,
	\end{equation*}
	for each company $i$, provided that $S<\lambda$.
\end{proposition}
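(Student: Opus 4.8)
The plan is to read the intermediary's problem as a one‑dimensional monopoly‑pricing problem, namely to maximise
\[
g(P) := \bm{\delta_i}(P)\,(P-S)
\]
over the admissible interval $P\in[S,\lambda]$, where $\bm{\delta_i}(P)$ is the company's equilibrium certificate demand from Proposition~\ref{prop:optimalquant_random} and $P-S$ is the intermediary's unit margin. First I would record the regularity the hypotheses buy us: writing $v(P):=\mathbb{V}{\rm a}\mathbb{R}_{P/\lambda}(X)$ and $w(P):=P\,\mathbb{ES}_{P/\lambda}(X)$, Proposition~\ref{prop:optimalquant_random} gives $\bm{\delta_i}(P)=v(P)\big(\mathrm{E}^{\mathrm{bau}}_i-\hat{\varrho}_i\,w(P)\big)$, so the assumed continuous differentiability of $P\mapsto v(P)$ and $P\mapsto w(P)$ makes $\bm{\delta_i}$, and hence $g$, of class $C^1$ on $[S,\lambda]$. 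I would then combine a compactness argument for existence with a single‑crossing argument for the characterisation and the uniqueness.

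A key preliminary I would establish is the monotonicity of the demand. From the representation $\mathbb{ES}_\epsilon(X)=\tfrac1\epsilon\int_0^\epsilon \mathbb{V}{\rm a}\mathbb{R}_u(X)\,du$ in \eqref{eq:oce_conn} one gets $w(P)=\lambda\int_0^{P/\lambda}\mathbb{V}{\rm a}\mathbb{R}_u(X)\,du$, whence the clean identity $w'(P)=v(P)$. Since $u\mapsto\mathbb{V}{\rm a}\mathbb{R}_u(X)$ is positive and nonincreasing, one has $v>0$ and $v'<0$. Differentiating the demand then yields $\bm{\delta}'_i(P)=v'(P)\big(\mathrm{E}^{\mathrm{bau}}_i-\hat{\varrho}_i w(P)\big)-\hat{\varrho}_i v(P)^2$; on the feasible range the bracket equals $\mathrm{E}^{\mathrm{mar}}_i>0$, so both terms are negative and $\bm{\delta}'_i(P)<0$. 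Thus $\bm{\delta_i}$ is strictly positive and strictly decreasing, which is exactly the structure one expects of a well‑posed demand curve.

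Existence then follows from the Weierstrass theorem, since $g$ is continuous on the compact interval $[S,\lambda]$. To locate the maximiser I would rule out the left endpoint: $g(S)=0$ while $g>0$ on $(S,\lambda]$ (positive demand and positive margin), and $g'(S)=\bm{\delta_i}(S)>0$, so $g$ is strictly increasing at $S$ and the optimum cannot occur there. Provided $S<\lambda$, and under the feasibility regime that keeps the unconstrained optimum strictly below the cap $\lambda$ (the analogue of the condition $P^\dagger\le\lambda$ in the deterministic Proposition~\ref{prop:given_S}), the maximiser is interior, and stationarity delivers precisely the stated first‑order condition $\bm{\delta}'_i(\bm{P_i})(\bm{P_i}-S)+\bm{\delta_i}(\bm{P_i})=0$.

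The hard part will be uniqueness. Because $-\bm{\delta}'_i>0$, the sign of $g'(P)$ coincides with the sign of $R(P):=\bm{\delta_i}(P)/\big(-\bm{\delta}'_i(P)\big)-(P-S)$, and $R(S)>0$; a unique maximiser follows if $R$ is strictly decreasing, so that $R$ changes sign exactly once, from $+$ to $-$. Since $R'(P)=-2+\bm{\delta_i}(P)\,\bm{\delta}''_i(P)/\bm{\delta}'_i(P)^2$, this reduces to the inequality $\bm{\delta_i}\bm{\delta}''_i/(\bm{\delta}'_i)^2<2$, i.e.\ to a decreasing‑marginal‑revenue (log‑concavity‑type) property of the demand. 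This is the main obstacle: from $\bm{\delta}''_i(P)=v''(P)\big(\mathrm{E}^{\mathrm{bau}}_i-\hat{\varrho}_i w(P)\big)-3\hat{\varrho}_i v(P)v'(P)$ one sees that the indefinite‑sign term $v''$ prevents concavity of $\bm{\delta_i}$ from being automatic, so the single‑crossing must be extracted from the regularity of the quantile and expected‑shortfall maps — which is what the continuous‑differentiability hypotheses are really there to provide. I would close by checking the sanity case $X\equiv1$, for which $v\equiv1$, $w(P)=P$, the demand is affine, $g$ is a strictly concave parabola and uniqueness is immediate; this recovers the deterministic Proposition~\ref{prop:given_S} and is the template the general argument mimics.
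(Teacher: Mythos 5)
Your overall strategy is the same as the paper's: regard $h(P)=\bm{\delta_i}(P)(P-S)$ as a $C^1$ function on the compact interval $[S,\lambda]$, obtain existence from the Weierstrass theorem, rule out the endpoints, and read off the first-order condition; your computation of $\bm{\delta}'_i$ via the identity $w'(P)=v(P)$ agrees with the paper's Eqn.~\eqref{eq:derivative}. Two points diverge. First, interiority: you exclude $P=\lambda$ by importing an extra feasibility condition (``the analogue of $P^\dagger\le\lambda$ in the deterministic case''), which is not among the hypotheses of the statement. The paper gets interiority for free: $\bm{\delta_i}(\lambda)$ vanishes because $\mathbb{V}{\rm a}\mathbb{R}_{1}(X)$ equals the left endpoint of the support of the positive, continuously distributed $X$, namely $0$; hence $h(S)=h(\lambda)=0$ while $h>0$ on $(S,\lambda)$, and the maximizer is interior whenever $S<\lambda$, with no additional assumption. (Your claim that $g>0$ on all of $(S,\lambda]$ is accordingly wrong at the right endpoint, though this is harmless once corrected.)

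Second, and more substantively, you do not close the uniqueness step: you correctly reduce it to the single crossing of $R(P)=\bm{\delta_i}(P)/(-\bm{\delta}'_i(P))-(P-S)$, i.e.\ to the marginal-revenue inequality $\bm{\delta_i}\bm{\delta}''_i/(\bm{\delta}'_i)^2<2$, and then concede that this cannot be extracted from the stated $C^1$ hypotheses alone. That is a genuine gap in your write-up as it stands. For what it is worth, the paper's own proof disposes of uniqueness in one line --- ``by monotonicity of $\bm{\delta}$ on $(S,\lambda)$, this equation must have a unique solution'' --- which is precisely the inference you rightly identify as insufficient: a strictly positive, strictly decreasing demand does not by itself make $P\mapsto\bm{\delta_i}(P)(P-S)$ unimodal, so the first-order condition could in principle have several solutions. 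Your diagnosis of where the difficulty lies is therefore sharper than the paper's treatment, but neither argument, as written, actually establishes uniqueness from the stated hypotheses; to complete your proof you would need either to verify the marginal-revenue inequality under an explicit additional assumption on the law of $X$ (it does hold, as you note, in the degenerate case $X\equiv 1$ and can be checked for the log-normal specification used in Section~\ref{sec:numeric_random}) or to state it as a standing regularity condition.
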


\begin{proof}
	For readability, we omit the index $i$ in this proof and, for example, write $\bm{P}(S)$ instead of $\bm{P_i}(S)$.
	Since $\hat\varrho < \varrho$, from Assumption~\ref{ass:standing_ass}, it follows that
	\begin{equation}\label{eq:likeass}
		\lambda \hat{\varrho} < \mathrm{E}^{\mathrm{bau}}.
	\end{equation}
	The value-at-risk $\mathbb{V}{\rm a}\mathbb{R}_{\epsilon}(X)$  of a positive random variable $X$ is always positive for $0<\epsilon<1$.
	Hence, for $0<P<\lambda$, it holds
	\begin{equation*}
		0 < P\, \mathbb{ES}_{P/\lambda}(X) = \lambda \int_0^{P/\lambda} \mathbb{V}{\rm a}\mathbb{R}_{u}(X)du \leq \lambda \int_0^{1} \mathbb{V}{\rm a}\mathbb{R}_{u}(X)du =\lambda \mathbb{E}[X] = \lambda.
	\end{equation*}
	The identity $\int_0^1 \mathbb{V}{\rm a}\mathbb{R}_{u}(X)du = \mathbb{E}[X]$ follows from the fact that  $\mathbb{V}{\rm a}\mathbb{R}_{U}(X)$ and $X$ are equal in distribution for a standard uniform random variable $U$, \citep[see][pp. 222]{neil2015}. 
	By the inequality~\eqref{eq:likeass} and Proposition~\ref{prop:optimalquant_random}, this yields
	\begin{equation*}
		0< \mathbb{V}{\rm a}\mathbb{R}_{P/\lambda}(X) \left[ \mathrm{E}^{\mathrm{bau}} -  P\, \mathbb{ES}_{P/\lambda}(X) \hat{\varrho} \right] = \bm{\delta}(P),
	\end{equation*}
	which implies each company's certificates demand function $\bm{\delta}(P)$ is always positive.
	Hence, for each intermediary, the wealth function $h(P) = \bm{\delta}(P)(P - S)$ can be optimized over $P \in [S,\lambda]$, since $h(P) \leq 0$ for $P < S$ and  $S \leq \lambda$ by assumption.
	If $S = \lambda$, then the interval $[S,\lambda]$ reduces to the singleton $\{\lambda\}$.
	Consequently, the optimal solution is given by $P = \lambda$, which is unique.
	Computations show that $\bm{\delta}(\lambda)$ corresponds to the left endpoint of the support of the distribution of $X$, which is equal to $0$.
	It follows that $h(S) = h(\lambda) = 0$, $P \mapsto h(P)$ is positive on $(S, \lambda)$ and continuously differentiable on $[S, \lambda]$.
	Hence, by the extreme value theorem, the function $h$ attains its maximum at some point in $\bm{P}\in (S, \lambda)$ satisfying the first order condition
	\begin{equation*}
		\bm{\delta}'(\bm{P}) (\bm{P} - S) + \bm{\delta}(\bm{P}) = 0.
	\end{equation*}
	By monotonicity of $\bm{\delta}$ on $(S, \lambda)$, this equation must have a unique solution and hence the intermediated price is unique.
\end{proof}

\begin{proposition}\label{prop:unique_spot_random}
	Under the assumptions of Proposition~\ref{prop:spot_random} and with feasibility condition\footnote{
		That is, $\sum \bm{\delta_i}(\lambda) \leq  A < \sum \bm{\delta_i}(0+)$ .
	} on the number of certificates  $A$, under the market scheme, there exists a unique equilibrium spot price $\bm{S} \in (0, \lambda]$, and the corresponding effective\footnote{
		The effective carbon price faced by company $i$ is $\bm{P_i}(\bm{S}) = \bm{S}$ if it participates directly in the regulator's auction; otherwise, it is the intermediated price set by the financial intermediary.
	}
	carbon price $\bm{P_i}(\bm{S})$ faced by company $i$ is also uniquely determined.
\end{proposition}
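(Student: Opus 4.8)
The plan is to recast the equilibrium condition as a fixed point of a scalar aggregate-demand map on $(0,\lambda]$ and to prove existence and uniqueness by showing this map is continuous and strictly monotone, so that the intermediate value theorem applies. Concretely, I would define the aggregate demand
\[
	D(S) := \sum \bm{\delta_i}(\bm{P_i}(S)),
\]
where, by Proposition~\ref{prop:spot_random}, $\bm{P_i}(S) = S$ for each company $i$ participating directly in the auction, while for each intermediated company $\bm{P_i}(S)$ is the unique maximizer of the strictly concave intermediary wealth $P \mapsto \bm{\delta_i}(P)(P-S)$ on $[S,\lambda]$, characterized for $S<\lambda$ by the first-order condition $\bm{\delta}'_i(\bm{P_i})(\bm{P_i}-S) + \bm{\delta_i}(\bm{P_i}) = 0$. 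The clearing condition $\sum \bm{\delta_i} = A$ is then exactly $D(\bm{S}) = A$, so it suffices to show that $D$ is a continuous, strictly decreasing bijection from $(0,\lambda]$ onto the interval prescribed by the feasibility condition $\sum \bm{\delta_i}(\lambda) \leq A < \sum \bm{\delta_i}(0+)$.

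The central step is to establish continuity and monotonicity of the intermediated price map $S \mapsto \bm{P_i}(S)$. Continuity would follow from the implicit function theorem applied to the first-order condition, using the continuous-differentiability hypotheses of Proposition~\ref{prop:spot_random} together with the strict concavity of the objective (so the second-order term at the optimum is nonzero). For monotonicity, differentiating the first-order condition implicitly—or invoking a monotone comparative statics argument on the objective, which is supermodular in $(P,S)$—gives $\bm{P_i}'(S) \in (0,1)$, so that $\bm{P_i}(\cdot)$ is strictly increasing. Combined with the fact that $\bm{\delta_i}$ is strictly decreasing in its argument (recall from Proposition~\ref{prop:optimalquant_random} that $\bm{\delta_i}(P) = \mathbb{V}{\rm a}\mathbb{R}_{P/\lambda}(X)[\mathrm{E}^{\mathrm{bau}}_i - P\,\mathbb{ES}_{P/\lambda}(X)\hat{\varrho}_i]$, which decreases as $P \nearrow \lambda$ since $\mathbb{V}{\rm a}\mathbb{R}_{P/\lambda}(X)\to 0$), this shows each effective demand $S \mapsto \bm{\delta_i}(\bm{P_i}(S))$ is continuous and strictly decreasing; for the direct participants the same is immediate since $\bm{P_i}(S)=S$. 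Summing over $i$ yields that $D$ is continuous and strictly decreasing on $(0,\lambda]$.

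Finally I would evaluate the endpoints, $D(\lambda) = \sum \bm{\delta_i}(\lambda)$ and $\lim_{S\searrow 0} D(S) = \sum \bm{\delta_i}(0+)$, so that the feasibility condition places $A$ in the half-open range $[\,D(\lambda),\, \lim_{S\searrow 0}D(S)\,)$; by strict monotonicity and continuity there is then a unique $\bm{S}\in(0,\lambda]$ with $D(\bm{S})=A$, and the corresponding $\bm{P_i}(\bm{S})$ is uniquely determined by Proposition~\ref{prop:spot_random}. The main obstacle is precisely the regularity of the implicitly defined intermediated price: proving that $\bm{P_i}(S)$ depends continuously and monotonically on $S$ when it is only pinned down through a first-order condition rather than in closed form. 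This is where the continuous-differentiability assumptions of Proposition~\ref{prop:spot_random} and the strict concavity of the intermediary's objective are essential, and additional care is needed at the corner $S=\lambda$, where the admissible interval degenerates to a point and the demand vanishes.
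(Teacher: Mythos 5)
Your proposal is correct and follows essentially the same route as the paper: both reduce the clearing condition to monotonicity of the aggregate demand $S \mapsto \sum \bm{\delta_i}(\bm{P_i}(S))$, obtained by combining monotonicity of the intermediated price in $S$ (the paper via nesting of the feasible intervals, you via the equivalent supermodularity/Topkis argument) with the sign of $\bm{\delta}'_i$, and then invoke the feasibility condition on $A$. If anything, your version is slightly more careful than the paper's, since you make explicit the continuity (implicit function theorem) and strictness of the monotonicity that uniqueness actually requires.
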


\begin{proof}
	Let $S_1, S_2 \in (0, \lambda]$ with $S_1 \leq S_2$. Then, the relations $[S_1, \lambda] \supseteq [S_2, \lambda]$ and
	\begin{equation*}
		\bm{P_i}(S_1) = \argmax_{P \in [S_1, \lambda]} \bm{ \delta_i}( P) (P -S_1) \quad \text{and} \quad  \bm{P_i}(S_2) = \argmax_{P \in [S_2, \lambda]} \bm{ \delta_i}( P) (P -S_2)
	\end{equation*}
	implies that $\bm{P_i}(S_1) \leq \bm{P_i}(S_2)$, and therefore the intermediated price $\bm{P_i}(S)$ is non-decreasing in $S$ for the companies going through intermediary.
	Computation of derivatives of value-at-risk $\mathbb{V}{\rm a}\mathbb{R}_{\epsilon}(X)$ and expected shortfall $\mathbb{ES}_{\epsilon}(X)$  risk measures with respect to $\epsilon$ yield 
	\begin{equation*}
		\partial_{\epsilon} \mathbb{V}{\rm a}\mathbb{R}_{\epsilon}(X) = \frac{1}{G'\left( \mathbb{V}{\rm a}\mathbb{R}_{\epsilon}(X) \right)},  \qquad \partial_{\epsilon} \mathbb{ES}_{\epsilon}(X) = \frac{1}{\epsilon}\left[ \mathbb{V}{\rm a}\mathbb{R}_{\epsilon}(X) - \mathbb{ES}_{\epsilon}(X)\right], 
	\end{equation*}
	where $G$ is the survival function of the random variable $X$.  
	This implies that the derivatives of $\bm{ \delta_i}$ with respect to $P_i$ is given by 
	\begin{equation}\label{eq:derivative}
		\bm{\delta}'_i(P_i) = \frac{ \mathrm{E}^{\mathrm{mar}}_i}{\lambda G' \left(\mathbb{V}{\rm a}\mathbb{R}_{P_i/\lambda}(X) \right)}  - \hat{\varrho} \left( \mathbb{V}{\rm a}\mathbb{R}_{P_i/\lambda}(X)\right)^2  \leq 0.
	\end{equation}
	It follows that $\bm{ \delta_i}(P_i)$ is non-increasing in $P_i$.
	Consequently,  $\sum \bm{ \bm{\delta_i}}(\bm{P_i}(S))$ is non-increasing in $S$.
	Since $S \in (0, \lambda]$ and $A$ is chosen in the feasible demand interval, the aggregated demand function $\sum \bm{ \bm{\delta_i}}(\bm{P_i}(S))$ equates $A$ at a unique point $\bm{S} \in (0, \lambda]$, and hence $\bm{S}$ is a unique equilibrium spot price.
	By Proposition~\ref{prop:spot_random}, the corresponding effective carbon price $\bm{P_i}(\bm{S})$ for each company $i$ is  also uniquely determined.
\end{proof}

\subsection{Proof of Theorem~\ref{thm:no_intermediaries_random}}\label{app:no_intermediaries_random}
	Under the tax scheme, by Proposition~\ref{prop:optimalquant_random}, the expected aggregate emission is
	\begin{equation*}
		\mathrm{E}^{\mathrm{tax}} = \mathrm{E}^{\mathrm{bau}} - \hat{ \varrho} \tau.
	\end{equation*}
	Setting $\mathrm{E}^{\mathrm{tax}}= \hat{A}$ and solving for $\tau$ yields the desired expression of $\tau$.
	As for the companies' wealth, it is given by Proposition~\ref{prop:optimalquant_random}.
	The regulator wealth  also follows from its definition.
 
	Under the spot market scheme, all companies face a uniform price $\bm{P_i} =\bm{S}$.
	By Proposition~\ref{prop:optimalquant_random} we have
	\begin{equation*}
		\mathrm{E}^{\mathrm{mar}}_i  =\dfrac{\bm{ \delta_i}(\bm{S})}{\mathbb{V}{\rm a}\mathbb{R}_{ \bm{S}/\lambda }(X)},  \quad \text{and hence}  \quad
		\mathrm{E}^{\mathrm{mar}}    = \dfrac{\sum \bm{ \delta_i}(\bm{S})}{\mathbb{V}{\rm a}\mathbb{R}_{ \bm{S}/\lambda }(X)}.
	\end{equation*}
	The emission constraint $\mathrm{E}^{\mathrm{mar}} = \hat{A}$ and the market clearing condition imply
	\begin{equation}\label{eq:lam_wR}
		\hat{A} = A/ \mathbb{V}{\rm a}\mathbb{R}_{ \bm{S}/\lambda }(X) \quad \text{and hence} \quad \bm{S}/\lambda = G\left( A/\hat{A}  \right).
	\end{equation}
	Proposition~\ref{prop:optimalquant_random} for $\bm{P_i} =\bm{S}$  implies
	\begin{equation}\label{eq:emis_rand}
		\mathrm{E}^{\mathrm{mar}}_i     = \mathrm{E}^{\mathrm{bau}}_i - \bm{S}\, \mathbb{ES}_{ \bm{S}/\lambda}(X)\hat{\varrho}_i
		\quad
		\text{and} 
		\quad 
		\mathrm{E}^{\mathrm{mar}}  = \mathrm{E}^{\mathrm{bau}} - \bm{S}\, \mathbb{ES}_{ \bm{S}/\lambda}(X)\hat{ \varrho}.
	\end{equation}
	Since  $\tau = (\mathrm{E}^{\mathrm{bau}}-\hat{A})/\hat{\varrho} $ under the tax scheme, the second identity in Eqn.~\eqref{eq:emis_rand}  and  the emission constraint $\mathrm{E}^{\mathrm{mar}} = \hat{A} $ yields $S\, \mathbb{ES}_{ \bm{S}/\lambda}(X) = \tau$, which with~\eqref{eq:lam_wR} implies
	\begin{equation*}
		\bm{S}= \displaystyle \frac{ \tau}{
		\mathbb{ES}_{ G(A /\hat{A})}(X)}= \lambda\ G\left(A/\hat{A} \right).
	\end{equation*}
	Proposition~\ref{prop:optimalquant_random} shows that a tax rate $\tau = \bm{S}\, \mathbb{E}_{\bm{S}/\lambda}(X)$ produces the same outcomes as the market scheme with certificate prices $\bm{P_i}(\bm{S}) = \bm{S}$.
	Hence, the two schemes are equivalent for the companies. 
	As for the regulator's wealth under the spot market scheme, by~\eqref{revenue:regu:mar}, we obtain
	\begin{align*}
		\mathcal{W}^{\mathrm{mar}}_{\mathrm{R}}
		 & = \displaystyle \left[ \mathbb{ES}_{ \bm{S}/\lambda}(X) -\mathbb{V}{\rm a}\mathbb{R}_{ \bm{S}/\lambda }(X) \right] \bm{S} \sum  \mathrm{E}^{\mathrm{mar}}_i  + \bm{S} A \\
		 & =\displaystyle \left[ \mathbb{ES}_{ \bm{S}/\lambda}(X) -\mathbb{V}{\rm a}\mathbb{R}_{ \bm{S}/\lambda }(X) \right] \bm{S}\ \hat{A}  + \bm{S} A                                \\
		 & = \left(\frac{\tau}{\bm{S}} - \dfrac{A}{\hat{A}}\right) \bm{S} \hat{A} + \bm{S} A=  \mathcal{W}^{\mathrm{tax}}_{\mathrm{R}},
	\end{align*}
	where the last equality holds due to the identity $\tau = \bm{S}\, \mathbb{ES}_{ \bm{S}/\lambda}(X)$ and \eqref{eq:lam_wR}.

\subsection{Proof of Proposition~\ref{prop:with_intermediaries_random}}\label{app:with_intermediaries_random}
	When companies are identical, the tax $\tau$ given by Theorem~\ref{thm:no_intermediaries_random} simplifies to
	\begin{equation*}
		\tau = \displaystyle \frac{\mathrm{E}^{\mathrm{bau}} -\hat{A}}{ N\hat{\varrho}_1} \quad \text{with} \quad \hat{\varrho}_1 = \frac{1}{\pi^1_1} + \frac{1}{\gamma_1\sigma^2}.
	\end{equation*}
	Under the purely intermediated  market scheme, by symmetry, the companies  face a uniform intermediated price $\bm{P_i} =\bm{P}$.
	By Proposition~\ref{prop:optimalquant_random} we have
	\begin{equation*}
		\mathrm{E}^{\mathrm{mar}}_i  =\dfrac{\bm{ \delta_i}(\bm{P})}{\mathbb{V}{\rm a}\mathbb{R}_{ \bm{P}/\lambda }(X)},  \quad \text{hence}  \quad
		\mathrm{E}^{\mathrm{mar}}    = \dfrac{\sum \bm{ \delta_i}(\bm{P})}{\mathbb{V}{\rm a}\mathbb{R}_{\bm{P}/\lambda }(X)}.
	\end{equation*}
	The emission constraint $\mathrm{E}^{\mathrm{mar}} = \hat{A}$ and the market clearing condition imply
	\begin{equation}\label{eq:lam_wR2}
		\hat{A} = A/ \mathbb{V}{\rm a}\mathbb{R}_{\bm{P}/\lambda }(X) \quad \text{and hence} \quad \bm{P}/\lambda = G\left( A/\hat{A}  \right).
	\end{equation}
	Proposition~\ref{prop:optimalquant_random} for $\bm{P_i} =\bm{P}$  implies
	\begin{equation}\label{eq:emis_rand_sec}
		\mathrm{E}^{\mathrm{mar}}_i     = \mathrm{E}^{\mathrm{bau}}_i - \bm{P}\, \mathbb{ES}_{ \bm{P}/\lambda}(X)\hat{\varrho}_i
		\quad \text{and} \quad
		\mathrm{E}^{\mathrm{mar}}  = \mathrm{E}^{\mathrm{bau}} - \bm{P}\, \mathbb{ES}_{ \bm{P}/\lambda}(X)\hat{ \varrho}.
	\end{equation}
	Since  $\tau = (\mathrm{E}^{\mathrm{bau}}-\hat{A})/\hat{\varrho}$ under the tax scheme, the second identity in ~\eqref{eq:emis_rand_sec}  together with the emission constraint $\mathrm{E}^{\mathrm{mar}} = \hat{A} $ yields $\bm{P}\, \mathbb{ES}_{ \bm{P}/\lambda}(X) = \tau$, which with~\eqref{eq:lam_wR2} implies
	\begin{equation*}
		\bm{P_i} = \bm{P}= \displaystyle \frac{ \tau}{
		\mathbb{ES}_{ G(A /\hat{A})}(X)}= \lambda\ G\left(A/\hat{A} \right).
	\end{equation*}
	By Proposition~\ref{prop:optimalquant_random}, under the market scheme, we have
	\begin{equation*}
		\bm{ \delta_i}(\bm{P})  = \mathbb{V}{\rm a}\mathbb{R}_{ \bm{P}/\lambda}(X) \left[  \mathrm{E}^{\mathrm{bau}}_i - \bm{P}\ \mathbb{ES}_{ \bm{P}/\lambda}(X) \hat{\varrho}_1  \right].
	\end{equation*}
	The emission constraint $\mathrm{E}^{\mathrm{mar}} =\hat{A}$ and by symmetry, we have $\mathrm{E}^{\mathrm{mar}}_i =\hat{A}/N$.
	Therefore, by ~\eqref{eq:derivative} and \eqref{eq:lam_wR2}, 
	\begin{align*}
		\bm{ \delta_i}'(\bm{P})
		 & =   \tfrac{ \mathrm{E}^{\mathrm{mar}}_i}{\lambda\ G'\left( \mathbb{V}{\rm a}\mathbb{R}_{ \bm{P}/\lambda}(X) \right) } - \left[ \mathbb{V}{\rm a}\mathbb{R}_{ \bm{P}/\lambda}(X) \right]^2 \hat{\varrho}_1     \\
		 & = \tfrac{ \hat{A}/ N}{\lambda\ G'\left( A/\hat{A}\right) } - \left( A/\hat{A} \right)^2 \hat{\varrho}_1.
	\end{align*}
	By  the market clearing condition together with  the fact that $\bm{ \delta_i}$ is identical for each company, we obtain $\bm{ \delta_i}(\bm{P})= A/N$.
	This together with the  first order condition in Proposition~\ref{prop:spot_random} yields
	\begin{equation*}
		\bm{S}  = \bm{P} + \tfrac{\bm{ \delta_i}(\bm{P})}{\bm{ \delta_i}'(\bm{P})} = \bm{P} - \tfrac{\lambda A}{\lambda N\hat{\varrho}_1 \left( A/\hat{A}\right)^2  -  \tfrac{\hat{A}}{G'\left(A/\hat{A}\right)}}.
	\end{equation*}

	Under the tax scheme, each company~$i$ faces a tax rate $\tau = \bm{P_i}\ \mathbb{ES}_{\bm{P_i}/\lambda }(X)$, where $\bm{P_i} =\bm{P}$.
	By Proposition~\ref{prop:optimalquant_random}, this implies that the wealth of the companies for each scheme is identical, i.e.  $\mathcal{W}^{\mathrm{mar}}_{\mathrm{C}, i}=\mathcal{W}^{\mathrm{tax}}_{\mathrm{C}, i} $.

	As for the wealth of intermediaries under the intermediated  market scheme, substituting the optimal values  $\bm{ \delta_i}(\bm{P}) = A/N$, $\bm{P_i} =\bm{P}$, and  $\bm{S}$ into company $i$'s intermediary wealth expression  $\bm{ \delta_i}(\bm{P})(\bm{P}-\bm{S})$ yields the desired expression for $\mathcal{W}^{\mathrm{mar}}_{\mathrm{F}, i}$.

	The expression for the wealth of the regulator $\mathcal{W}^{\mathrm{tax}}_{\mathrm{R}}$ under the tax scheme is obvious by the definition.
	Under the market scheme, substituting the optimal value $\bm{P_i} = \bm{P}$ and $\lambda$ into~\eqref{revenue:regu:mar} yields
	\begin{align*}
		\mathcal{W}^{\mathrm{mar}}_{\mathrm{R}}
		 & =  \displaystyle \left[ \mathbb{ES}_{\bm{P}/ \lambda }(X) - \mathbb{V}{\rm a}\mathbb{R}_{ \bm{P}/\lambda }(X) \right] \bm{P} \sum \mathrm{E}^{\mathrm{mar}}_i + \bm{S} A, \\
		 & =   \left[ \mathbb{ES}_{ \bm{P}/\lambda}(X) - \mathbb{V}{\rm a}\mathbb{R}_{ \bm{P}/\lambda}(X) \right] \bm{P} \, \mathrm{E}^{\mathrm{mar}} + \bm{A}                 \\
		 & = \left[ \mathbb{ES}_{G(A/\hat{A})}(X) - \dfrac{A}{\hat{A}}\right] \bm{P} \hat{A} + \bm{S} A
		= \tau \hat{A} - \underbrace{(\bm{P} - \bm{S})A}_{=N \mathcal{W}^{\mathrm{mar}}_{\mathrm{F},1}},
	\end{align*}
	where the last equality holds due to the identity $\tau = \bm{P}\, \mathbb{ES}_{ \bm{P}/\lambda}(X)$ and \eqref{eq:lam_wR2}.

\bibliographystyle{abbrvnat}
\bibliography{biblio.bib}
\end{document}